%% file: main.tex
\documentclass[a4paper,USenglish,cleveref, autoref, thm-restate]{lipics-v2021}
\usepackage{graphicx} 
\usepackage{caption} 
\usepackage{algorithm}
\usepackage{algorithmic}

\usepackage{amsmath,amsthm,amssymb}
\usepackage{mathtools}
\usepackage{dsfont}
\usepackage{xcolor}

\newtheorem{problem}{Problem}
\usepackage{thmtools}
\usepackage{thm-restate}

\usepackage{hyperref}

\usepackage{cleveref}

\usepackage{csquotes}
\MakeOuterQuote{"}

\usepackage[colorinlistoftodos,prependcaption]{todonotes}
\usepackage{xargs}                      
\usepackage[pdftex,dvipsnames]{xcolor}  
\newcommandx{\issue}[2][1=]{\todo[linecolor=red,backgroundcolor=red!25,bordercolor=red,#1]{#2}}

\nolinenumbers

\title{Maximizing Diversity in (near-)Median String Selection}

\author{Diptarka Chakraborty}{National University of Singapore}{diptarka@comp.nus.edu.sg}{}{}

\author{Rudrayan Kundu\footnote{A part of the work was done when the author was doing an internship at the National University of Singapore}}{Indian Statistical Institute Kolkata}{rajasreekundu@gmail.com}{}{}

\author{Nidhi Purohit\footnote{A major part of the work was done when the author was a research fellow at the National University of Singapore}}{Institute of Mathematical Sciences, Chennai, India}{nidhipurohit95@gmail.com}{}{}

\author{Aravinda Kanchana Ruwanpathirana\footnote{The work was done when the author was a research fellow at the National University of Singapore}}{Nanyang Technological University}{kanchana.ruwanpathirana@gmail.com}{}{}

\authorrunning{D.~Chakraborty, R.~Kundu, N.~Purohit, and A.K.~Ruwanpathirana} 

\Copyright{Diptarka Chakraborty, Rudrayan Kundu, Nidhi Purohit, and Aravinda Kanchana Ruwanpathirana} 

\ccsdesc[500]{Theory of computation~Approximation algorithms analysis} 

\keywords{Diversity maximization, Hamming median, diameter, dispersion, approximation algorithms} 

\category{} 

\relatedversion{} 

\funding{This work was supported by an MoE AcRF Tier 1 grant (T1 251RES2303).}

\begin{document}
\include{header}
\maketitle

\begin{abstract}
Given a set of strings over a specified alphabet, identifying a median or consensus string that minimizes the total distance to all input strings is a fundamental data aggregation problem. When the Hamming distance is considered as the underlying metric, this problem has extensive applications, ranging from bioinformatics to pattern recognition. However, modern applications often require the generation of multiple (near-)optimal yet diverse median strings to enhance flexibility and robustness in decision-making.

In this study, we address this need by focusing on two prominent diversity measures: \emph{sum dispersion} and \emph{min dispersion}. We first introduce an exact algorithm for the \emph{diameter} variant of the problem, which identifies pairs of near-optimal medians that are maximally diverse. Subsequently, we propose a $(1-\epsilon)$-approximation algorithm (for any $\epsilon >0$) for sum dispersion, as well as a bi-criteria approximation algorithm for the more challenging min dispersion case, allowing the generation of multiple (more than two) diverse near-optimal Hamming medians. Our approach primarily leverages structural insights into the Hamming median space and also draws on techniques from error-correcting code construction to establish these results.
\end{abstract}


\newpage
\pagenumbering{arabic}

\section{Introduction}
\input{intro}

\section{Preliminaries}\label{sec:prelim}
\input{prelim}
\section{Exact Algorithms for Diameter Maximization}\label{sec:diameter}
\input{diameter}

\section{Maximizing the Sum Dispersion}\label{sec:sum}
\input{sum-dispersion}
\section{Maximizing the Minimum Dispersion}\label{sec:min}
\input{min-dispersion}

\section{Discussion and Future Work}
This paper initiates the study of computing a diverse set of medians in the Hamming metric using two classical dispersion objectives: sum dispersion and minimum dispersion. First, we present an exact algorithm for the diameter variant, which outputs two near-medians with maximum diversity. Second, we address the task of producing multiple (near-)medians and give a PTAS for maximizing sum dispersion. Third, we develop a bi-criteria approximation algorithm for maximizing minimum dispersion.

For the minimum-dispersion objective with $k$ approximate medians, there remains a gap in the regime $\omega(1) \le D^* \le O(\log k)$, where $D^*$ denotes the optimal diameter. In this range, we do not have a polynomial-time approximation algorithm; our $1/2$-approximation runs in quasipolynomial time (more specifically, $(d/D^*)^{O(\log k)} poly(ndk)$ time) instead. Designing a polynomial-time algorithm with a comparable approximation guarantee in this regime is an immediate open problem. Another open direction is to obtain an approximation factor solely on the dispersion objective (instead of bi-criteria trade-offs). Finally, extending diverse median computation to other metric spaces -- such as Euclidean, edit, Jaccard, and Kendall–tau -- is an interesting avenue for future work.

\bibliography{ref}

\newpage

\appendix

\input{appendix}

\end{document}

%% file: header.tex
\newcommand{\freq}[2]{f_{#1}^{#2}}
\newcommand{\ind}[1]{\mathds{1}(#1)}
\newcommand{\mcc}{\textnormal{\textsc{mfc}}}
\newcommand{\mdiff}{\textnormal{\textsc{Min-Diff Partition}}}
\newcommand{\sym}{\mathcal{S}}
\newcommand{\ral}{\mathbb{R}}
\newcommand{\exv}{\mathbb{E}}
\renewcommand{\epsilon}{\varepsilon}
\newcommand{\opt}{\textnormal{\textsc{opt}}}
\newcommand{\costind}{\textnormal{\textsc{index-cost}}}
\newcommand{\sdahm}{\textnormal{\textsc{Sum-Dispersion-Approx-Median}}}
\newcommand{\sdalg}{\textnormal{\textsc{Sum-Dispersion-k-Strings}}}
\newcommand{\sdelg}{\textnormal{\textsc{Sum-Dispersion-Exact}}}
\newcommand{\mdehm}{\textnormal{\textsc{Min-Dispersion-Median}}}
\newcommand{\dmhm}{\textnormal{\textsc{Diameter-Maximizing-Median}}}
\newcommand{\cgrd}{\textnormal{\textsc{Cost-Greedy}}}
\newcommand{\ilpApp}{\textnormal{\textsc{ILP-Min-Dispersion}}}

\newcommand{\sDp}{\textnormal{\textsc{sumDp}}}
\newcommand{\mDp}{\textnormal{\textsc{minDp}}}

%% file: intro.tex
In classical optimization problems, the goal is to find an optimal or nearly optimal solution for a given input instance. However, these solutions may not align with the preferences of certain users due to subjective factors like economic considerations, political views, environmental concerns, aesthetics, and more, which the algorithm might not account for. For some users, their personal preferences may outweigh the importance of achieving the optimal solution, making certain approximate solutions more desirable. For instance, a user might favor a cost-effective near-optimal solution over an expensive optimal one due to financial limitations, or an energy-efficient near-optimal solution over an optimal option because of environmental concerns. To address this, it is beneficial to offer users a range of optimal or near-optimal solutions, allowing them to select based on their specific requirements, which the algorithm may not initially know. However, if the solutions provided are too similar, they fail to offer genuine alternatives, undermining the purpose of presenting multiple options. This necessitates the study of returning diverse solutions -- specifically, multiple solutions that are far or significantly dissimilar with respect to certain measures in the solution space.

In recent years, there has been an increasing interest in exploring various optimization problems through the perspective of generating diverse solutions~\cite{petit2015finding, vadlamudi2016combinatorial, petit2019enriching, ingmar2020modelling, baste2022diversity, gao2022obtaining,hanguir2025optimizing}. Diverse solutions are crucial in scenarios where decision-making flexibility, robustness against uncertainty, and the ability to encompass multiple viewpoints are vital. For instance, in bioinformatics, particularly in areas like gene motif identification, generating diverse solutions allows for the consideration of multiple motifs, thereby facilitating the investigation of several hypotheses. A range of problems have been studied with the aim of producing multiple diverse solutions, including satisfiability~\cite{nadel2011generating, misra2024parameterized, austrin2025algorithms}, constraint programming~\cite{hebrard2005finding}, hitting set~\cite{baste2019fpt}, longest common susequence~\cite{shida2024diverseLCS}, matching~\cite{fominpetr2024diverse, fomin2024diverse}, shortest paths~\cite{hanaka2022computing}, minimum cut~\cite{de2023finding}, feedback vertex set~\cite{baste2019fpt}, rank aggregation~\cite{arrighi2021diversity}, and spanning tree~\cite{hanaka2021finding}.

Computing a representative of a given data set is one of the most fundamental computational data summarization tasks. In a widely recognized variant of this problem, given a set $S$ of data points coming from an underlying metric space $\mathcal{X}$, the objective is to find a point (not necessarily from $S$) that minimizes the sum of distances to the points in $S$. The problem is referred to as \emph{median} (or \emph{geometric median}) problem. The complexity of the problem varies with the underlying metric space. In this paper, we consider the median problem over the well-known \emph{Hamming metric}. Hamming distance, which counts the number of coordinatewise dissimilarities between a pair of strings, is perhaps the most primitive distance measure defined over strings. In other words, Hamming distance measures the minimum number of character substitutions required to convert one string into another, which is the same as the $\ell_1$ distance over binary strings. The problem of computing median string under Hamming finds a wide range of applications, ranging from bioinformatics in applications such as gene motif classification~\cite{Kaysar20Gene, pevzner2000computational}, classification tasks in pattern recognition~\cite{Juan00Use}, and coding theory~\cite{frances1997covering}. It is folklore to compute the Hamming median exactly in linear time. In many applications of Hamming median, we often need to produce a diverse set of solutions; for instance, in selecting a small and diverse set of prototype strings/vectors, in designing diverse consensus sequences capturing different clades/subtypes (e.g., pathogen panels), in choosing query/test inputs that are both representative and diverse across binary feature regions, etc. Despite being a fundamental problem of significant importance, the Hamming median problem has not yet been examined with a focus on diversity. In this paper, we initiate the systematic study of generating diverse (approximate) medians.

Among measures used to quantify diversity, two of the most prominent ones are \emph{sum-dispersion} -- the sum of all pairwise distances, and \emph{min-dispersion} -- the minimum pairwise distance, where the notion of distance depends on the underlying metric space. In diversity maximization, the objective is to maximize either the sum or the min dispersion. In the classical dispersion problem, given a set of points in a metric space, the task is to choose a subset of a specific (input-specified) size that maximizes the sum/min dispersion. The problem is already known to be NP-hard for general metric~\cite{Ravi1994Heuristic, abbar2013diverse}. For the min dispersion under a general metric, $1/2$-approximation is known~\cite{Ravi1994Heuristic, HASSIN1997133}, which is also tight. On the other hand, for the sum dispersion, a $1/2$-approximation is known for general metric~\cite{HASSIN1997133, Birnbaum2006amproved}, and it is also known to be tight under the Exponential Time Hypothesis~\cite{gao2022obtaining}. The min dispersion problem remains NP-hard even when the underlying metric space is the Hamming metric~\cite{shida2024diverseLCS}, which is the space considered in this work. 

The problem becomes much more challenging when the candidate points are not explicitly given, as in diversity variants of many optimization problems, and becomes especially difficult when the solution space is exponential. In the Hamming median problem, the solution set is implicit; moreover, while the optimal median may be unique, the set of approximate medians can be exponential in the length of the strings, making the dispersion problem over this search space computationally more difficult. In many practical scenarios, it is sufficient to output a diverse collection of approximate solutions when the optimal solution is unique, or there are only a few distinct optimal solutions, necessitating the study of generating diverse near-optimal solutions.

\paragraph*{Our Contribution} 
In this paper, we initiate the study of finding diverse (approximate) median strings under the Hamming metric. Let $\Gamma$ be the alphabet set. Given a set of $n$ strings over alphabet $\Gamma$, each of length $d$, the goal is to compute a set of diverse strings -- measured with respect to both sum dispersion and minimum dispersion -- that are $(1+\epsilon)$-approximate medians (for $\epsilon \ge 0$) of the input dataset.

\subparagraph*{Maximizing diameter.} We start by considering the problem of finding just two (approximate) medians that are as diverse as possible. In the literature, the maximum Hamming distance between two candidate solutions is referred to as the \emph{diameter}. When focusing on exact medians (i.e., where the solutions are required to be exact medians), it is relatively straightforward to optimally solve the diameter variant. Specifically, we can efficiently find two Hamming medians that are maximally diverse (see~\autoref{thm:diverse-fin-exact} in~\autoref{app:diverse-fin-exact}) -- thanks to the special structure of the space of Hamming medians. However, this same structure often leads to the Hamming median being unique, which precludes the possibility of constructing a diverse set of exact Hamming medians, even of size two.

In many practical settings, it is sufficient to work with near-medians (i.e., $(1+\epsilon)$-approximate medians for some small $\epsilon >0$). Consequently, when the aim is to generate diverse solutions and diversity is prioritized over optimality of the underlying solutions, a natural question arises: can we find two approximate medians that are as diverse as possible? We answer this question in the affirmative by presenting an efficient algorithm that optimally solves the diameter problem for approximate Hamming medians.

\begin{restatable}{theorem}{diversefin}
\label{thm:diverse-fin}
Consider an alphabet $\Gamma$. There exists an algorithm that, given any $X \subseteq \Gamma^d$ of size $n$ and $\epsilon > 0$, outputs two $(1+\epsilon)$-approximate Hamming medians of $X$ with maximum diameter, and runs in time $O((1+\epsilon)nd + d\log d)$.
\end{restatable}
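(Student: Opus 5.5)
The cost decomposes by coordinate, so I plan to reduce the problem to choosing, at each coordinate, one of three options, and then to solve a knapsack-like selection greedily. For each coordinate $i\in[d]$ and character $a\in\Gamma$, let $\freq{i}{a}$ be the number of strings in $X$ with $a$ at position $i$. The cost of a string $y$ is $\sum_i (n-\freq{i}{y_i})$. The optimal cost is $\opt=\sum_i (n-\max_a \freq{i}{a})$. A string is a $(1+\epsilon)$-approximate median iff its total excess $\sum_i(\max_a\freq{i}{a}-\freq{i}{y_i})$ is at most a budget $B=\epsilon\cdot\opt$.

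For two strings $y,z$, the diameter counts the coordinates where $y_i\neq z_i$. Fix a coordinate $i$ and let $m_i$ be a most frequent character and $s_i$ a second most frequent one. If we want $y_i\neq z_i$, the cheapest choice is $\{y_i,z_i\}=\{m_i,s_i\}$. This creates excess $\delta_i=\freq{i}{m_i}-\freq{i}{s_i}$ in exactly one of the two strings. Any other pair of distinct characters costs at least as much in each string, so it is dominated.

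The problem is then: pick a set $T$ of coordinates that will differ, and assign each $i\in T$ to $y$ or $z$ (the one carrying $\delta_i$), so that both loads are at most $B$. The goal is to maximize $|T|$. Note that $\delta_i=0$ coordinates are free, which also covers the exact-median case. This is the main obstacle. A bin-packing-with-two-bins problem is NP-hard in general, but here the objective is cardinality, which suggests an exchange argument. Sorting the $\delta_i$ accounts for the $d\log d$ term.

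My plan is to show that an optimal $T$ can be taken to be a prefix of the sorted order: replacing an element by a smaller unused one never breaks feasibility. Then, for a prefix of length $t$, I need to decide whether it splits into two parts each of sum at most $B$. I would try to prove that a simple greedy, assigning coordinates in sorted order to the currently lighter string, or alternately placing items into $y$ until full and then into $z$, is optimal for maximizing the count. If exact partition feasibility is hard, I expect a slack argument to resolve it. When $y$ cannot take item $t$, both loads are near $B$. The bins are filled in increasing order, and any solution taking $t$ items needs total sum $\le 2B$ with each item $\ge$ the previous ones. So greedy loses at most the last item, and an exchange shows the optimum also cannot fit it.

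For the running time, computing frequencies, $m_i$, $s_i$ and $\opt$ takes $O(nd)$. This step, plus building the two strings (filling the non-$T$ coordinates with $m_i$), gives the stated $O((1+\epsilon)nd+d\log d)$ bound.

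The delicate point I expect to spend the most effort on is the correctness of the two-bin greedy under a cardinality objective. I would first attempt to prove it by induction on the number of items, using the sorted order and a swap between bins.
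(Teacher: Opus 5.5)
Your reduction is correct and matches the paper. At each coordinate where the two strings differ you may assume one string keeps $w_i$ and the other takes the second most frequent character; this is Lemma~\ref{lem:optimal-string} plus the choice of $\hat w$. The differing coordinates can be taken to be a prefix of the order sorted by $\delta_i=f_i^w-f_i^{\hat w}$. Your slack argument that filling $y$ and then $z$ loses at most one coordinate is exactly Lemma~\ref{lem:diverse-fin-diff}.

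The gap is the final step. The claim that ``an exchange shows the optimum also cannot fit'' the extra coordinate is false, and neither of your greedy rules is optimal. Take sorted weights $1,2,3,4$ with budget $B=5$ per string.
\begin{itemize}
\item Fill-then-fill puts $1,2$ into $y$ (adding $3$ would exceed $5$), then $3$ into $z$ (adding $4$ would exceed $5$). That gives three coordinates.
\item The lighter-string rule produces loads $(1,0),(1,2),(4,2)$, and then $4$ fits nowhere. Again three.
\end{itemize}
Yet $\{1,4\}$ and $\{2,3\}$ both have weight $5$, so $D^*=4$. This comes from a genuine input: ternary alphabet, $n=10$, column frequency profiles $(4,3,3),(5,3,2),(6,3,1),(6,2,2)$. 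These give $\delta=(1,2,3,4)$ and $\opt=19$; take $\epsilon=5/19$.

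This is not a matter of finding a cleverer swap. Once you know the answer is $|S|+|R|$ or $|S|+|R|+1$, deciding which means asking whether the first $|S|+|R|+1$ sorted coordinates split into two groups, each of weight at most $\epsilon\opt$. When their total is exactly $2\epsilon\opt$, this is literally PARTITION.

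The missing ingredient is an exact subset-sum computation, which is polynomial here because the $\delta_i$ are integers in $[0,n]$. The paper does precisely this. With $T$ the first $|S|+|R|+1$ sorted indices, if $\sum_{i\in T}\delta_i\le 2\epsilon\opt$ it runs the \mdiff{} dynamic program (Algorithm~\ref{alg:min-sum-diff}) to get the most balanced split of $T$. It uses that split when both halves are within budget. Lemma~\ref{lem:correct-case2} shows the most balanced split is within budget whenever any split is.

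This DP is the source of the $\epsilon$-dependent term in the running time, so your time analysis has to charge for it. Be careful here. The table has size $O(|T|\cdot\epsilon\opt)$, and the paper's $O(\epsilon nd)$ bound for it relies on $\opt\le n$, which fails in general: $\opt$ can be $\Theta(nd)$. To actually meet the stated bound, use a subset-sum routine whose cost scales with the largest weight rather than with the capacity. Pisinger's balanced algorithm, running in $O(|T|\cdot\max_i\delta_i)=O(nd)$ time, is one such routine.
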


\subparagraph*{Maximizing sum dispersion.}
Next, we turn our attention to the task of finding multiple -- potentially more than two -- (approximate) medians that maximize specific diversity measures. As mentioned earlier, one widely used diversity measure is the \emph{sum dispersion}, where the objective is to maximize the sum of pairwise Hamming distances among the selected solutions. When generating multiple Hamming (exact) medians, it is still feasible -- though somewhat more intricate than the diameter variant -- to achieve the maximum sum dispersion (see~\autoref{thm:sum-dispersion-k-exact}).

The challenge increases when the goal is to compute a diverse set of approximate medians. For approximate medians, we provide a PTAS for the sum dispersion objective.

\begin{restatable}{theorem}{sumdispersionk}
\label{thm:sum-dispersion-k}
Consider an alphabet $\Gamma$. There exists a polynomial-time algorithm that, given any $X \subseteq \Gamma^d$ of size $n$, a non-negative integer $k$ and $\epsilon,\delta > 0$, returns a set $S$ of $k$ $(1+\epsilon)$-approximate Hamming medians, such that their sum dispersion $\sDp(S) \ge (1-\delta)v^*$, where $v^*$ is the maximum sum dispersion of $k$ many $(1+\epsilon)$-approximate Hamming medians.
\end{restatable}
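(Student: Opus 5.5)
The plan is to exploit the fact that both the cost of a candidate string and the sum dispersion decompose over coordinates. For column $i$ and symbol $c\in\Gamma$, let $\freq{i}{c}$ be the number of input strings with $c$ at position $i$, and set $w_i(c)=n-\freq{i}{c}$. Then the cost of $y$ is $\sum_i w_i(y_i)$, and $y$ is a $(1+\epsilon)$-approximate median iff $\sum_i w_i(y_i)\le B:=(1+\epsilon)\opt$, where $\opt=\sum_i \min_c w_i(c)$. For $k$ strings, write $m_{i,c}$ for the number of them carrying symbol $c$ in column $i$. The sum dispersion is then $\sum_i \frac12\bigl(k^2-\sum_c m_{i,c}^2\bigr)$, and the total cost is $\sum_i\sum_c m_{i,c}w_i(c)$. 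So a solution is described by one \emph{column profile} $m_{i,\cdot}$ per column, up to the question of distributing symbols among strings.

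\textbf{Step 1 (relaxed problem).} First I would solve the aggregate problem. Choose profiles maximizing $\sum_i g(m_{i,\cdot})$, where $g(m)=\frac12(k^2-\sum_c m_c^2)$, subject to the single constraint $\sum_i\sum_c m_{i,c}w_i(c)\le kB$. This is necessary for any feasible solution, so its optimum is at least $v^*$.
\begin{itemize}
\item For a fixed column and a fixed cost $t$, the best profile is obtained by a concave (separable quadratic) optimization: use the cheapest symbols and spread the counts as evenly as the cost permits. It can be computed greedily.
\item Each column therefore has a monotone, polynomially describable tradeoff curve between cost and dispersion.
\item The relaxed problem becomes a multiple-choice knapsack over the $d$ columns. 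Rounding costs (or rounding dispersion values to multiples of $\delta v/d$ after guessing $v$ up to a factor $2$) gives the standard FPTAS-style dynamic program with value at least $(1-\delta/2)$ times the relaxed optimum.
\end{itemize}

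\textbf{Step 2 (realizing profiles as $k$ strings).} Given the profiles, I would assign symbols to the $k$ strings column by column so that every string's cost stays close to the average, which is at most $B$. The plan is an LPT-style greedy: within each column, give the most expensive symbol occurrences to the strings with currently smallest accumulated cost. A standard exchange argument bounds the maximum cost by the average plus the largest single-column cost, i.e.\ at most $B+n$. Note that the assignment within a column does not affect dispersion, since that depends only on the profile.

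\textbf{Step 3 (recovering slack).} To remove the additive overflow, I would rerun Step 1 with budget $kB-kn$. Equivalently, I would revert a few deviating columns back to the column's median symbol, which frees at least $n$ units per string. Each reverted column loses at most $k^2/2$ dispersion, so $O(k)$ columns cost $O(k^3)$. This loss is at most $\delta v^*/2$ whenever $v^*$ is large compared to $k^3/\delta$. In the complementary regime the total number of deviating positions is small, and I would try to guess the deviating columns and solve exactly, in the spirit of \autoref{thm:sum-dispersion-k-exact}.

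\textbf{Main obstacle.} I expect Step 3 to be the hardest part: converting the aggregate budget into $k$ \emph{individual} budgets without an additive loss. The delicate cases are when $\opt$ is tiny (e.g.\ $\opt=0$ forces all strings to equal the median) and when $v^*$ is small relative to $k^3/\delta$. If the greedy bound proves too lossy, the first thing I would try is a finer balancing: write the assignment as a flow/transportation LP whose vertex solutions are integral except in $O(k)$ columns, and fix those columns by guessing.
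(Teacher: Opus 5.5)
\textbf{Steps 1 and 2 are sound.} The aggregate relaxation does upper-bound $v^*$. The per-column cost/dispersion curve is safer to compute by a DP over the number of strings and the integer extra cost ($\le kn$) than greedily, and with such a DP the whole relaxation can be solved exactly in pseudo-polynomial, hence polynomial, time. Your LPT exchange argument correctly keeps every string within one column-range of the average.

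\textbf{The gap is in Step 3, which is where the whole difficulty of the theorem lies.} The two repairs you describe do not work as stated. Reverting a column frees, for a given string, only that string's own extra cost in that column (possibly $0$ or $1$), so $O(k)$ reverted columns need not free $n$ units per string. Rerunning with budget $kB-kn$ is vacuous whenever $\epsilon\opt\le n$. For example, take a binary instance where every column has a unique majority with margin one and $\epsilon\opt=n$. Then for $d\ge kn$ we have $v^*=k(k-1)n$, which is arbitrarily large compared with $k^3/\delta$, yet budget $kB-kn=k\opt$ admits only copies of $w$. Your LP fallback does work: a vertex of the per-column transportation constraints plus $k$ budget rows is fractional in at most $k$ columns, and reverting those columns can only lower each string's cost. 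But this loses an additive $\Theta(k^3)$, so it yields $(1-\delta)$ only when $v^*\gtrsim k^3/\delta$.

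\textbf{The complementary regime is not polynomial.} Since $\binom{k}{2}D^*\ge v^*\ge (k^2-1)D^*/4$, the complementary regime is essentially $D^*=O(k/\delta)$. There your plan to ``guess the deviating columns and solve exactly'' fails once $k$ is unbounded:
\begin{itemize}
\item up to $kD^*=\Theta(k^2/\delta)$ columns may deviate;
\item the candidate set can have size $d^{\Theta(k/\delta)}$;
\item even over an explicit polynomial-size candidate list, choosing $k$ strings of maximum sum dispersion is a max-sum dispersion problem (NP-hard in general), so it cannot simply be solved exactly.
\end{itemize}
The analogy with \autoref{thm:sum-dispersion-k-exact} breaks because the per-string budgets couple the columns.

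\textbf{How the paper avoids this.} Its density-ordered greedy enforces the $k$ individual budgets directly. \autoref{lem:opt-v-bound-k} then bounds the loss by $(k-1)(k+1)$, an additive $O(k^2)$ rather than $O(k^3)$. Combined with \autoref{lem:opt-l-bound-k}, this gives a $(1-4/D^*)$-approximation, so a fallback is needed only for $D^*<4/\delta$. In that case every approximate median lies within distance $D^*$ of $w$, so there are at most $(d|\Gamma|)^{4/\delta}$ candidates. The paper then gets a PTAS by noting that the Hamming metric is of negative type ($H(s,t)=\|\phi(s)-\phi(t)\|^2$ for a scaled one-hot embedding $\phi$) and applying the max-sum dispersion PTAS of \cite{cevallos2015maxsum} under a cardinality (uniform matroid) constraint.

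To rescue your route you need both ingredients. First, a rounding whose loss is only $O(k^2)$; for instance, show that each string's overflow can be removed by reverting a single one of its own entries, since each such reversion costs at most $k-1$ in dispersion. Second, a genuine PTAS, such as the negative-type one, for the constant-diameter regime.
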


\subparagraph*{Maximizing min dispersion.}
The problem becomes more challenging when it comes to \emph{min dispersion}. The goal here is to produce $k$ (approximate) Hamming median strings that maximize the minimum pairwise distance. In contrast to diameter and sum dispersion, this task is computationally more challenging even when seeking exact medians.


In this paper, we provide an efficient algorithm to generate $k$ Hamming medians while approximating the maximum min dispersion. First, we note that when $k = O(1)$, the problem can be solved optimally using standard dynamic programming in polynomial time. So from now, we assume $k = \Omega(1)$.
We show the following result, a formal statement of which appears in~\autoref{sec:min}.

\begin{theorem}[Informal Statement]
    Given a $X \subseteq \Gamma^d$, a non-negative integer $k$, and a $\delta > 0$,
    \begin{itemize}
        \item If the optimal diameter of $X$ is $D^* \ge \Omega\left( \frac{1}{\delta^2} \log k\right)$, a set of $k$ Hamming medians can be generated in polynomial time, which gives a $(1-\delta)$-approximation to the maximum min dispersion with high probability;
        \item If $D^* \le O\left( \frac{1}{\delta^2} \log k\right)$, a set of $k$ Hamming medians can be generated in polynomial time, which gives a $1/2$-approximation to the maximum min dispersion.
    \end{itemize}
\end{theorem}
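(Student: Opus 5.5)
Throughout we use the structure of the Hamming median space from \autoref{sec:prelim}. For each coordinate $i\in[d]$ let $S_i\subseteq\Gamma$ be the set of characters of maximum frequency in column $i$. The set of Hamming medians of $X$ is exactly the product $S_1\times\cdots\times S_d$. Call $i$ \emph{free} if $|S_i|\ge 2$, and let $F$ be the set of free coordinates. The optimal diameter is $D^*=|F|$, since two medians can disagree on every free coordinate and on nothing else. The problem is therefore to place $k$ codewords in the product $\prod_{i\in F}S_i$ so as to maximize their minimum pairwise distance, and I would split into the two regimes of the statement.

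\textbf{Large $D^*$: random codewords.} I would first bound the optimum $\opt$ by a Plotkin-type averaging argument. For any $k$ medians, the minimum pairwise distance is at most the average pairwise distance. That average splits over coordinates, and on coordinate $i$ the fraction of disagreeing pairs is at most $\frac{k}{k-1}\bigl(1-\frac{1}{|S_i|}\bigr)$. Writing $\mu=\sum_{i\in F}\bigl(1-\frac{1}{|S_i|}\bigr)$, this gives
\[
\opt\le \frac{k}{k-1}\,\mu .
\]
The algorithm outputs $k$ strings whose non-free coordinates are the forced majority characters and whose free coordinates are drawn independently and uniformly from $S_i$. Every output is a median. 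For a fixed pair, the distance is a sum of independent indicators with mean $\mu\ge D^*/2=\Omega(\delta^{-2}\log k)$. A multiplicative Chernoff bound then gives distance at least $(1-\delta/2)\mu$ except with probability $k^{-3}$, once the constant in the hypothesis on $D^*$ is chosen suitably. A union bound over the $\binom{k}{2}$ pairs shows that, with high probability,
\[
\min\ \ge\ (1-\delta/2)\,\mu\ \ge\ (1-\delta/2)(1-1/k)\,\opt\ \ge\ (1-\delta)\,\opt .
\]
The last step needs $1/k\le\delta/2$. Otherwise $k=O(1/\delta)$ is bounded and the exact dynamic program mentioned before the statement applies.

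\textbf{Small $D^*$: greedy $1/2$-approximation.} Here I would simulate the classical farthest-point greedy of Ravi et al.\ and Hassin et al., which is a $1/2$-approximation for min dispersion in any metric. Start from an arbitrary median. Then repeatedly add the median maximizing the minimum distance to the current set, for $k$ rounds. The standard analysis carries over unchanged, because it only uses the triangle inequality and the fact that each chosen point is a farthest point among the candidates. The remaining task is to find a farthest median from a set $T$ efficiently. Only coordinates in $F$ matter, and $|F|=D^*=O(\delta^{-2}\log k)$. A farthest median from $T$ can therefore be found by enumerating, or by a branch-and-bound dynamic program over, assignments to $F$, tracking the distance vector to the points of $T$.

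\textbf{Main obstacle.} The hard step is keeping this farthest-point query polynomial. The candidate space has size $\prod_{i\in F}|S_i|$. This is $k^{O(\delta^{-2})}$ when the alphabet is constant, which is polynomial for fixed $\delta$. For large alphabets it could be superpolynomial, matching the quasipolynomial bound mentioned in the Discussion. My first attempt would reduce each $S_i$ to two representative characters. Then I would check, via a coordinate-wise exchange argument, that the greedy's guarantee relative to the true $\opt$ survives the restriction. Failing that, I would state the polynomial bound for bounded $|\Gamma|$ and fixed $\delta$.
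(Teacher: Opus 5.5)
Your proposal is correct and runs the same algorithm as the paper. It uses an exact DP when $k=O(1/\delta)$, independent uniform sampling from the majority sets when $D^*=\Omega(\delta^{-2}\log k)$, and the Ravi et al.\ greedy over the enumerated median space $\prod_{i\in F}S_i$ when $D^*$ is small.

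Where you genuinely differ is in the upper bound on the optimum $t^*$ (what you call $\opt$; the paper reserves $\opt$ for the median cost). The paper proves a generalized Plotkin bound (\autoref{lem:plotkin-thm}): it embeds each coordinate's alphabet as a centered simplex, normalizes to unit vectors, and invokes the geometric lemma of Guruswami et al. It then splits into the cases $t^*\le\mu$ and $t^*>\mu$, and rearranges $k\le t^*/(t^*-\mu)$ into $t^*\le\frac{k}{k-1}\mu$. Your per-coordinate double counting (minimum at most average, plus $\sum_a \ell_{a,i}^2\ge k^2/|S_i|$) gives exactly the same inequality in two lines, with no case split and no geometry. The paper's route buys a standalone coding-theoretic statement, including the bound $|C|\le 2\sum_\ell|\Gamma_\ell|$ at $t=\mu$, which the theorem never uses. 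Your constants ($\delta/2$ in the Chernoff step, the threshold $k\ge 2/\delta$ for the DP) give $1-\delta$ directly, where the paper settles for $1-2\delta$; this difference is cosmetic.

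On the small-$D^*$ side, drop the two-representatives idea, because it cannot preserve the guarantee. Suppose every free coordinate has $S_i=\Gamma$ with $|\Gamma|=k\ge 3$. Then the $k$ strings that are constant on $F$ give $t^*=D^*$. By your own averaging bound, any $k$ strings using only two letters per coordinate have minimum distance at most $\frac{k}{k-1}\cdot\frac{D^*}{2}\le\frac34 t^*$. So no exchange argument can recover $t^*$, and the greedy analysis would certify at most $\frac38 t^*$.

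Your fallback is precisely what the paper does. It enumerates all at most $|\Gamma|^{D^*}$ medians, with running time $|\Gamma|^{4/\delta^2}k^{2+8\log|\Gamma|/\delta^2}$. This is polynomial only for constant $|\Gamma|$ and fixed $\delta$, which is why the paper restricts the claim to a constant-sized alphabet.
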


Next, we consider the min dispersion problem over approximate medians. In this case, we provide a bi-criteria algorithm. In particular, we present the following result, a formal statement of which appears in~\autoref{sec:min}.

\begin{theorem}[Informal Statement]
    Given a $X \subseteq \Gamma^d$, a non-negative integer $k$, and $\epsilon, \delta > 0$,
    \begin{itemize}
        \item If $D^* \le O(1)$, a set of $k$ many $(1+\epsilon)$-approximate Hamming medians can be generated in polynomial time, with min dispersion at least $t^*/2$;
        \item If $D^* \ge \Omega\left( \frac{1}{\delta^2} \log k\right)$, a set of $k$ many $(1+2\epsilon)$-approximate Hamming medians can be generated in polynomial time, with min dispersion at least $(1/2 - \delta)t^*$ with high probability,
    \end{itemize}
    where $D^*$ denotes the optimal diameter for $(1+\epsilon)$-approximate Hamming medians of $X$ and $t^*$ denotes the maximum min dispersion of a set of $k$ many $(1+\epsilon)$-approximate Hamming medians of $X$.
\end{theorem}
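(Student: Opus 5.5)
The plan rests on the coordinatewise structure of Hamming cost. For each coordinate $i\in[d]$ and symbol $a\in\Gamma$, let $\freq{i}{a}$ be the number of input strings with symbol $a$ at position $i$. The cost of a string $y$ is
\[
\sum_i (n-\freq{i}{y_i}).
\]
An exact median picks a most frequent symbol $m_i$ at every coordinate, and its cost $\opt$ is computable in $O(nd)$ time.

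\textbf{Setting.} The statement immediately above is the informal min-dispersion theorem for approximate medians, with two regimes: $D^*=O(1)$ and $D^*=\Omega(\delta^{-2}\log k)$. In each regime I will produce $k$ strings of bounded cost whose pairwise Hamming distances are large.

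For any approximate median $y$, deviating from $m$ at coordinate $i$ with symbol $a$ costs the surcharge $c_i(a)=\freq{i}{m_i}-\freq{i}{a}\ge 0$. So $y$ is a $(1+\epsilon)$-approximate median iff its total surcharge is at most the budget $B=\epsilon\,\opt$. By Theorem~\ref{thm:diverse-fin}, computed greedily by sorting surcharges, we get $D^*$ together with an explicit pair $y^{(1)},y^{(2)}$ at distance $D^*$. Any set of near-medians with min dispersion $t^*$ lives inside a Hamming region of diameter at most $D^*$, so $t^*\le D^*$.

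\textbf{Small regime, $D^*=O(1)$.} Every near-median differs from $m$ in at most $D^*$ coordinates, since it lies within distance $D^*$ of the exact median $m$, which is itself feasible. I would run the classical farthest-point greedy of Ravi et al./Hassin et al. over this implicit candidate set. Start with $m$, and repeatedly add the feasible string maximizing the minimum distance to the strings already chosen. The standard argument gives a $1/2$-approximation, i.e.\ min dispersion at least $t^*/2$, with no loss in cost.

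The real work is the farthest-point step. For a current set of size at most $k$, I must find a budget-feasible $y$ maximizing $\min_j d(y,s_j)$. Since every string involved differs from $m$ on $O(1)$ coordinates, only the $O(k)$ "touched" coordinates and $O(1)$ fresh cheapest coordinates matter. I would guess the constant-size deviation pattern of $y$ on the touched coordinates, and fill the remaining distance with the cheapest untouched deviations. This is $(kd|\Gamma|)^{O(1)}$ time.

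\textbf{Large regime, $D^*\ge C\delta^{-2}\log k$.} Take the maximum-diameter pair and let $T$ be the set of coordinates where $y^{(1)}$ and $y^{(2)}$ disagree, so $|T|=D^*$. At each $i\in T$, at least one of the two strings deviates from $m$. For each $j\le k$, build $z_j$ by independently choosing, at each $i\in T$, the symbol $y^{(1)}_i$ or $y^{(2)}_i$ with probability $1/2$, and $m_i$ elsewhere. The expected surcharge of $z_j$ is half the total surcharge of the pair, which is at most $B$.

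For any two strings $z_j,z_{j'}$, the distance is a sum of $D^*$ independent Bernoulli$(1/2)$ indicators, so its expectation is $D^*/2\ge t^*/2$. A Chernoff bound with a union bound over $\binom{k}{2}$ pairs gives min dispersion at least $(1/2-\delta)D^*\ge(1/2-\delta)t^*$ with high probability once $D^*\ge C\delta^{-2}\log k$.

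\textbf{Main obstacle: controlling the cost.} Cost is only bounded in expectation, while we need every $z_j$ to be a $(1+2\epsilon)$-approximate median. Surcharges can be heavy-tailed, since one coordinate may carry a large part of $B$, so concentration is not automatic. I would handle this as follows:
\begin{itemize}
\item Coordinates where only one of the two strings deviates are fixed to that cheaper side $m_i$. This costs nothing and loses at most a constant fraction of distance; if that fraction is too large, I balance by pairing the rest.
\item The remaining surcharge is bounded by a Markov/Hoeffding argument, with individual surcharges bounded by $B$ so that the total stays within $B+\epsilon\,\opt$.
\item Any sample violating the bound is resampled.
\end{itemize}
Resampling succeeds with constant probability per string, and this accounts for the $(1+2\epsilon)$ factor in the statement.
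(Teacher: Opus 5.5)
\textbf{The gap is in the section you call the main obstacle.} Your random mixing of the two diameter strings, and the distance analysis, are exactly the paper's argument (\autoref{lem:uniform-approx}). That analysis is correct: each pairwise distance is $\mathrm{Bin}(D^*,1/2)$, then Chernoff plus a union bound over pairs, then $t^*\le D^*$. But your cost-control section contains a step that fails. You set every coordinate where only one of $y^{(1)},y^{(2)}$ deviates to $m_i$. The pair returned by \autoref{thm:diverse-fin} has disjoint deviation sets, so \emph{every} $i\in T$ is such a coordinate. All $z_j$ then collapse to $m$ and the min dispersion is $0$. So the claim that this loses ``at most a constant fraction of distance'' is false, and nothing is left to ``pair''. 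The Hoeffding step is also unsupported as stated: a per-coordinate range of $B$ over up to $D^*$ coordinates does not give an additive deviation of at most $\epsilon\,\opt$. Resampling strings conditioned on a cost event would also need an extra argument that your Chernoff bound survives the conditioning.

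\textbf{None of this is needed, because the premise ``cost is only bounded in expectation'' is wrong.} Surcharges are nonnegative. So for every outcome of the coins, $c_i(z_{j,i})\le\max\{c_i(y^{(1)}_i),c_i(y^{(2)}_i)\}\le c_i(y^{(1)}_i)+c_i(y^{(2)}_i)$ for $i\in T$, and $c_i(z_{j,i})=0$ for $i\notin T$. Summing over $i$, the surcharge of each $z_j$ is at most the combined surcharge of the pair, i.e.\ at most $2\epsilon\,\opt$. Every sample is therefore deterministically a $(1+2\epsilon)$-approximate median. This, not resampling, is the source of the factor $1+2\epsilon$; the paper derives it from \autoref{lem:opt-offset} using the disjoint supports $T_{x^*},T_{y^*}$. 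Replace your last section with this line and the large regime is complete.

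\textbf{Your small-regime argument is correct.} An untouched deviation adds exactly $1$ to the distance to every chosen string, so guessing the pattern on touched coordinates and then filling cheapest-first does implement the farthest-point step exactly. The paper takes a simpler route in \autoref{lem:approx-small}. The same observation bounds the number of near-medians by $(d|\Gamma|)^{D^*}$, which is polynomial for $D^*=O(1)$. It enumerates them explicitly and runs the greedy of Ravi et al.\ on that list.
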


Note that only in the second point of the above result, we attain a bi-criteria approximation. We would also like to highlight that if $t^* \ge \Omega\left( \frac{1}{\delta} \sqrt{d} \log k\right)$, we can even get a better bi-criteria bound; more specifically, we output a set of $(1+\epsilon +\delta)$-approximate medians with $(1/2-\delta)$-approximation to the min dispersion objective, for any $\delta >0$.

\paragraph*{Related Works}
The max-min and max-sum dispersion problems have been studied in the classical setting, where the goal is to select a subset of points from a given finite metric space. For max-min dispersion, constant-factor approximations are known, including a $1/2$-approximation~\cite{Ravi1994Heuristic, HASSIN1997133}, which is known to be tight. For max-sum dispersion, known approximation algorithms achieve a $1/2$-approximation in general metrics~\cite{HASSIN1997133, Birnbaum2006amproved}, with the ratio improving to $0.634 - \epsilon$ in the 2D Euclidean space~\cite{Ravi1994Heuristic}. Moreover, under the Exponential Time Hypothesis, no polynomial-time algorithm can approximate $k$-sum dispersion in general metric spaces within a factor better than $1/2$~\cite{gao2022obtaining}. The problem remains NP-hard in the Hamming metric, the space we consider in this paper. Better approximation results exist in more structured domains, such as a PTAS for negative-type metrics with matroid constraints~\cite{cevallos2015maxsum} and for bounded doubling-dimension metrics~\cite{cevallos2018diversity}. In addition, frameworks have been developed to approximate diversity~\cite{hanaka2023framework} or to simultaneously guarantee approximate optimality and diversity via bi-criteria reductions to certain budget-constrained problems~\cite{gao2022obtaining}.

Recent work has extended dispersion to broader combinatorial structures, including NP-hard problems such as knapsack, vertex cover, and independent set, with provable approximation guarantees while ensuring a high level of diversity under symmetric-difference as the diversity measure~\cite{galvez2025framework}. In graphs and matroids, diverse bases, independent sets, and matchings have been studied, yielding NP-hardness results and fixed-parameter tractable algorithms~\cite{fomin2024diverse}. Dispersion has also been examined for the Longest Common Subsequence problem, allowing polynomial-time exact algorithms when the number of diverse subsequences is bounded and a PTAS for the max-sum variant~\cite{shida2024diverseLCS}. In satisfiability and related NP-complete problems, the diverse-$k$-SAT problem has been studied, yielding improved exponential-time algorithms and randomized approximations for both min-dispersion and sum-dispersion objectives~\cite{austrin2025algorithms}.

\subparagraph*{Pareto Optimality.} When optimization involves multiple criteria, there may not be a single solution that is optimal for all objectives. In this context, Pareto-optimal solutions have been studied: these are solutions where no objective can be improved without worsening another. Research has focused on computing approximate Pareto fronts. In particular,~\cite{Papadimitriou2000On} shows that for any multicriteria optimization problem, there exists a polynomial-size set of Pareto-optimal solutions such that each objective is satisfied up to a factor of $(1+\epsilon)$, and this set can be computed in polynomial time provided a gap version of the problem can be solved. Further,~\cite{Hezel2021One,Hezel2021Approx} show that for a class of problems in which a dual exists for a restricted version of a budget-constrained optimization problem, if the dual can be solved in polynomial time, it is possible to obtain $(1+\epsilon)$-approximations for all objectives except one, which can be optimized exactly.

\paragraph*{Technical Overview}
We start by addressing the challenge of identifying a set of diverse Hamming (exact) medians for a given dataset. We first demonstrate that by exploiting the special structure inherent in the space of all Hamming medians, it is possible to efficiently find diverse Hamming medians, either optimally or through approximation. Subsequently, we discuss methods for obtaining a collection of diverse approximate medians.

\noindent \textbf{Warm-up with Exact medians: Maximizing diameter and dispersions.}
Suppose we are given a dataset $X \subseteq \Gamma^d$, and let $w$ be a Hamming median of $X$. By a simple observation (\autoref{lem:MCC-opt}), for every index $i$, the character $w_i$ at that position in a Hamming median must be one of the most frequent characters at index $i$ in the dataset $X$. Therefore, if there are two distinct Hamming medians, they can only differ at those positions where multiple characters are tied for the highest frequency. This observation facilitates the construction of the two most diverse Hamming medians: for each such index, select two different most frequent characters and assign them to these positions in each median, thereby creating two distinct medians.

The scenario becomes less straightforward when the goal is to generate $k$ Hamming medians that maximize the sum of all pairwise Hamming distances (i.e., maximize sum dispersion). Still, the problem remains quite manageable. Since the objective is to maximize the sum, we can focus on optimizing it coordinate-wise. As before, the only indices that contribute to the sum dispersion among Hamming medians are those with multiple most frequent characters. More specifically, for any such index $i$, suppose there are $r$ most frequent characters, say $a_1,a_2,\cdots,a_r$. Then, the sum dispersion at that index is maximized if each character $a_j$ appears exactly $k/r$ times among the $k$ median strings (assuming $r$ divides $k$; otherwise, the frequencies should be distributed as evenly as possible). It is again easy to generate $k$ Hamming medians that respect the above property, leading to a maximum sum dispersion. For completeness, we provide the details in~\autoref{app:sum-exact}.

The problem becomes much more difficult when it comes to maximizing the min dispersion, i.e., we want to generate $k$ Hamming medians that maximize the minimum pairwise Hamming distances. The computational hardness arises from its innate connection with the minimum distance problem, one of the fundamental questions in error-correcting codes. We begin by giving a dynamic programming algorithm to solve the problem exactly when $k = O(1)$ (\autoref{lem:min-dispersion-DP} in~\autoref{app:min-dispersion-exact}). Thus, from now on, we assume that $k \ge \Omega(1)$. Next, we consider the following two cases separately: (I) diameter is $\Omega(\log k)$, and (II) diameter is $O(\log k)$. 

For Case (I), we achieve a $(1-\delta)$-approximation to the min dispersion (note, for brevity, we hide the dependency on $1/\delta^2$ in the above $\Omega(\cdot)$ and $O(\cdot)$ notation). Recall that Hamming median requires each index to take only one of the most frequent characters at that index in the input set. Thus, we first compute the set of most frequent characters for each index, denoted as $\Gamma_i$ for index $i$ ($\Gamma_i$ may contain a single element). Next, we construct a set of $k$ Hamming medians as follows: For each index $i$, we select a character from $\Gamma_i$ independently and uniformly at random. We repeat this process to generate $k$ strings. To argue that this yields a $(1-\delta)$-approximation to the min dispersion with high probability, we first apply standard concentration bounds to get a lower bound on the min dispersion of the output set. Then we establish a generalized version of the Plotkin bound for codes with potentially different alphabets in each index (see~\autoref{sec:plotkin}). Finally, by combining the lower bound on the min dispersion of the output set and the generalized Plotkin bound, we get our desired approximation guarantee. For Case (II), we observe that the set of all candidate Hamming medians is polynomially bounded, and we then apply the classical greedy algorithm for min dispersion from~\cite{Ravi1994Heuristic}, albeit paying a $1/2$-approximation to the min dispersion objective. We refer to~\autoref{app:min-dispersion-exact} for the details.

\noindent \textbf{Approximate medians: Maximizing diameter.}
Due to its special structure, the Hamming median is often unique for an input set, particularly when there is a unique majority character at each position. This uniqueness limits the possibility of having multiple diverse medians. However, this restriction does not generally apply to approximate medians, even if we consider an approximation factor of $(1+\epsilon)$, for any small $\epsilon >0$. For instance, consider an input set $X$ containing $n$ binary strings each of length $d$ such that for every index, the most frequent character is 1 and it occurs in $n/2 + 1$ strings (i.e., frequency is $n/2+1$). Here, the Hamming median is unique (the all-one string). However, it is easy to observe that even if we consider the all-zero string ($0^d$), it is still an $(1+\epsilon)$-approximate median, for $\epsilon \approx 8/n$, and thus the maximum distance between two $(1+\epsilon)$-approximate medians (referred to as diameter) could be as large as $d$.

In this paper, we present an exact algorithm for this diameter variant. Specifically, we design an algorithm that constructs two $(1+\epsilon)$-approximate medians that are maximally distant from each other. Our approach begins with the string formed by taking the most frequent (majority) character at each position (breaking ties arbitrarily). For each coordinate, we then consider the second most frequent character and assign a weight corresponding to the increase in the median objective if it were chosen instead; this weight is the frequency difference between the most and second most frequent characters. Next, we use a greedy strategy to select a maximal subset $T$ of positions so that the sum of these weights does not exceed $2\epsilon \opt$, where $\opt$ denotes the minimum Hamming median objective. Next, we partition this set $T$ in a balanced manner -- as evenly as possible into two subsets, $T_1$ and $T_2$, so that the difference in their total weights is minimized. We then output two strings: $z$, which uses the second most frequent characters at indices in $T_1$ and the most frequent elsewhere; and $y$, which uses the second most frequent characters at indices in $T_2$ and the most frequent elsewhere. Intuitively, both $y$ and $z$ have a median objective cost of at most $(1+\epsilon)\opt$ (due to the balanced partitioning), and since $ T_1$ and $ T_2$ are disjoint, they are maximally apart. To ensure both $y$ and $z$ are $(1+\epsilon)$-approximate medians and they realize the maximum diameter (not even off by a small factor), we introduce some additional refinements in the above selection process, leading to a more intricate analysis, which is detailed in~\autoref{app:diam-approx}.

\noindent \textbf{Approximate medians: Maximizing sum dispersion.}
Next, we turn our attention to generating $k$ (more than two) $(1+\epsilon)$-approximate medians with the aim of maximizing the sum dispersion measure, denoted as $\sDp$. This introduces new challenges, particularly in adapting and extending the previous approach used for the diameter variant in the case of approximate medians and for maximizing sum dispersion with exact medians. In the case of the diameter, since only two strings are produced, it is sufficient to focus on the two most frequent characters at each position. However, for generating $k$ approximate medians, we may need to consider more than two characters for each index -- potentially all characters in $\Gamma$ -- each associated with a different weight. Recall that the weight of a character at a particular index corresponds to the increase in the median objective if it is selected over the most frequent character.

Furthermore, it is no longer enough to simply identify a set of index positions where replacing the most frequent symbol with another does not increase the median cost by more than $k \epsilon \opt$, and then "distribute equitably" these positions among the $k$ candidate median strings. First, achieving a balanced partition into $k$ groups is hard, especially since $k$ can be arbitrarily large. Second, and perhaps more importantly, even if such a balanced partitioning is efficiently done and we then make changes to the assigned index set for each of the $k$ candidate medians, this does not guarantee maximization of $\sDp$. For exact medians, we have already observed that maximizing $\sDp$ coordinate-wise requires using as many distinct symbols as possible, distributed as evenly as possible across the candidate median strings. Now, since each symbol at a given position has potentially different weight, we face a trade-off: whether to increase the number of indices where candidates deviate from the most frequent symbol (thus different from the exact median) or to maximize diversity at certain index positions. In essence, the problem now involves meeting $k$ separate "hard" budget constraints (each having a budget of at most $\epsilon \opt$), while still striving to maximize the overall $\sDp$ objective.

To address the challenges outlined above, we first present a $(1-4/D^*)$-approximation algorithm, where $D^*$ represents the optimal diameter for $(1+\epsilon)$-approximate medians in the given input set. For any $\delta > 0$, our algorithm directly yields a $(1-\delta)$-approximation whenever  $D^* > 4/\delta$. Otherwise, it is not hard to observe that there are only a polynomial number of possible candidates for the $(1+\epsilon)$-approximate median strings. We argue that the Hamming metric is of "negative-type" and that the constraint of selecting $k$ strings reduces to a matroid constraint. Then, by applying the result from~\cite{cevallos2015maxsum} (which essentially involves rounding a quadratic program), we obtain a PTAS. Next, we briefly outline the main ideas behind the $(1-4/D^*)$-approximation algorithm.

We begin with a set $S$ of $k$ identical exact median strings, all equal to $w$; we call $S$ the set of candidate medians. We then consider a collection of modification operations. Each operation is specified by: an index position $i$, a character $a \in \Gamma$, an integer $r$ -- the target frequency of $a$ at position $i$, and an integer $\ell$ -- the frequency of the character $w_i$ at position $i$ among the candidate medians in $S$. We assign each operation a \emph{density}, which informally measures the ratio between its increase in the $\sDp$ objective and its increase in the median objective if applied.

Next, we sort all plausible operations in nondecreasing order of density and attempt to apply the longest prefix of this sorted list that yields a feasible solution. A prefix is feasible if the following holds: we initially allocate a budget of $\epsilon \opt$ to each candidate median in $S$. Then, for each index $i$ and character $a$, we perform all the corresponding operations in the chosen prefix, one by one, on candidate medians that still have positive remaining budget. After each operation, we deduct the appropriate amount from the budget of the candidate median to whom it was applied. If every operation in the prefix can be completed without any candidate median exceeding its budget, the prefix is deemed feasible. In the algorithm, we execute the selected operations in a carefully chosen order and select which candidate medians to modify so as to guarantee the desired approximation.

The approximation guarantee proceeds as follows. Let $v^*$ denote the optimal $\sDp$ achievable by $k$ many $(1+\epsilon)$-approximate medians. First, we note that $v^* \gtrsim k^2 D^*/4$. Next, we derive a lower bound on the $\sDp$ value $v$ attained by our algorithm’s output (the final set of candidate medians). Let $U$ be the total remaining budget across all candidate medians in $S$ at the end of the algorithm, and let $\rho$ be the maximum density among the operations that were not performed. The core of the argument is to establish the two statements: $U \cdot \rho \lesssim k^2$, and $v^* - v\le U \cdot \rho$. Informally, $U \cdot \rho$ upper-bounds the additional $\sDp$ one could gain by executing all leftover operations without violating any budget constraints. The main technical hurdle lies in proving these two bounds (see the proof of~\autoref{lem:opt-v-bound-k}). This immediately yields an approximation factor of $(1-4/D^*)$. Full details appear in~\autoref{app:sum-approx}.

\noindent \textbf{Approximate medians: Maximizing min dispersion.}
For the problem of selecting $k$ approximate Hamming medians while maximizing the min dispersion, just like the case of exact medians, we can get a dynamic programming algorithm in polynomial time for $k=O(1)$. Thus, from now on, we focus on $k \ge \Omega(1)$. Then we split into cases depending on the value of the optimal diameter $D^*$. When $D^* \le O(1)$, the number of candidate Hamming medians is polynomially bounded; in this case, applying the greedy min-dispersion heuristic~\cite{Ravi1994Heuristic} yields a $1/2$-approximation. For $D^* \ge \Omega(\log k)$, we first compute two $(1+\epsilon)$-approximate Hamming medians $y,z$ of distance equal to the diameter (using~\autoref{thm:diverse-fin}). Then for each index $i$, we create an alphabet set $\Gamma_i=\{y_i,z_i\}$. Next, we generate $k$ candidate approximate median strings via the following randomized procedure: For each index $i$, we select a character from $\Gamma_i$ independently and uniformly at random, and repeat this process to form $k$ strings. We first argue that each resulting string is an $(1+2\epsilon)$-approximate median. Using concentration bounds, we establish a lower bound on the min dispersion $\mDp$ for these strings, and combining this with the fact that the optimal min dispersion can at most be the diameter, we obtain $\mDp \gtrsim t^*/2$, where $t^*$ denotes the optimal $\mDp$ achievable by a set of $k$ many $(1+\epsilon)$-approximate medians. This yields a bi-criteria approximation. We can further improve this approximation factor for the "higher regime" of $t^*$. We formulate the problem using an integer linear program and consider its LP relaxation. Then, using a \emph{dependent rounding} framework, we show that we can generate $k$ many $(1+\epsilon +\delta)$-approximate medians (for any $\delta >0$) with $\mDp \ge (1/2 - \delta) t^*$. We detailed the arguments in~\autoref{app:min-dispersion-approx}.

%% file: prelim.tex
\noindent \textbf{Notations.} Let $\Gamma$ denote an alphabet set. For a string $s \in \Gamma^d$, we use $s_i$ to refer to the character at the index $i$ of $s$. Similarly, for any array (or ordered set) $W$, we use $W_i$ to denote the element at index $i$. We use $\ind{.}$ to indicate an identity function where for a logical predicate $t$, $\ind{t}=1$ if $t$ is true; and $0$ otherwise. For any $X \subseteq \Gamma^d$ and a string $s \in \Gamma^d$, let $\freq{i}{s}(X)$ indicate the number of times the character $s_i$ appears at the $i$-th index of strings in $X$, i.e., 
$\freq{i}{s}(X) := \left|\{x\in X : x_i = s_i\}\right| = \sum_{x \in X} \ind{x_i = s_i}.$
For brevity, when clear from the context, we drop $X$ from the above notation and simply use $f_i^s$.

For any $x,y \in \Gamma^d$, their \emph{Hamming distance} is defined as,  $H(x,y) := \sum_{i=1}^d \ind{x_i \neq y_i}.$

\noindent \textbf{Hamming Median.}
Given a set $X \subseteq \Gamma^d$, the \emph{Hamming median} problem asks to find a string $y^* \in \Gamma^d$ that minimizes the sum of distance to the strings in $X$, i.e.,
$y^* = \arg\min_{y \in \Gamma^d} \sum_{x \in X}H(x,y).$
We use $\opt(X)$ (or simply $\opt$ when $X$ is clear from the context) to denote $\sum_{x \in X}H(x,y^*)$. We call a string $y \in \Gamma^d$ an \emph{$\alpha$-approximate median} (for any $\alpha \ge 1$) iff $\sum_{x \in X}H(x,y) \le \alpha \cdot \opt$.

Finding an (exact) median string under the Hamming distance is folklore. Consider the following string: For any $X \subseteq \Gamma^d$, the \emph{most frequent character string}, denoted by $\mcc(X)$, is a string $w$ where $w_i$ is set to be the most frequently (breaking ties arbitrarily) occurred character at the $i$-th index in the strings in $X$, i.e., $w_i = \arg \max_{e \in \Gamma} |\{x\in X : x_i = e\}|$, and in the rest of the paper we use $w$ to refer to $\mcc(X)$.

It is straightforward to see that $w=\mcc(X)$ is a median for the set $X$ under the Hamming distance, as stated in the following result (we provide the proof in~\autoref{sec:folklore-proof}).
\begin{lemma}[Folklore]\label{lem:MCC-opt}
For any $X \subseteq \Gamma^d$, $w=\mcc(X)$ is an optimal median of $X$, i.e. $\sum_{x \in X}H(x,w) = \opt$. Furthermore, for any optimal solution $w^*$, $\freq{i}{w}=\freq{i}{w^*}$, for all $i \in [d]$.
\end{lemma}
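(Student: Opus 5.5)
The plan is to decompose the median objective coordinate-wise and then optimize each coordinate separately. For any $y \in \Gamma^d$, swapping the order of summation gives
\[
\sum_{x \in X} H(x,y) = \sum_{i=1}^d \sum_{x\in X}\ind{x_i \neq y_i} = \sum_{i=1}^d \bigl(n - \freq{i}{y}\bigr) = nd - \sum_{i=1}^d \freq{i}{y}.
\]
Minimizing the objective is therefore the same as maximizing $\sum_{i} \freq{i}{y}$. Each term $\freq{i}{y}$ depends only on the single character $y_i$, so the sum splits into $d$ independent maximization problems, one per coordinate.

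For the first claim, fix an index $i$. By definition of $w=\mcc(X)$, the character $w_i$ is a most frequent character at index $i$. Hence $\freq{i}{w} \ge \freq{i}{y}$ for every $y \in \Gamma^d$ and every $i$. Summing over $i$ and applying the identity above gives $\sum_{x}H(x,w) \le \sum_x H(x,y)$ for all $y$. So $w$ attains the minimum, that is, $\sum_x H(x,w) = \opt$.

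For the second claim, let $w^*$ be any optimal median. Then $\sum_i \freq{i}{w^*} = \sum_i \freq{i}{w}$, since both equal $nd-\opt$. The coordinate-wise inequality $\freq{i}{w} \ge \freq{i}{w^*}$ holds for every $i$. A sum of nonnegative differences $\freq{i}{w}-\freq{i}{w^*}$ that totals zero forces each difference to vanish, so $\freq{i}{w^*}=\freq{i}{w}$ for every $i \in [d]$.

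There is no real obstacle here. The only point needing care is that the coordinate-wise domination is established first. The equality of the sums then transfers to each coordinate, which also confirms that every optimal median uses only most frequent characters.
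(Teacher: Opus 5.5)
Your proposal is correct and follows essentially the same route as the paper. Both use the coordinate-wise identity $\sum_{x\in X}H(x,y)=\sum_{i=1}^d(n-\freq{i}{y})$ and the domination $\freq{i}{w}\ge\freq{i}{y}$, and both get the second claim from the fact that nonnegative differences summing to zero must each vanish. The only difference is cosmetic: you argue directly, while the paper phrases both steps as proofs by contradiction.
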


Next, we show how the cost of any string can be related to the optimal, the proof of which is deferred to~\autoref{sec:folklore-proof}.
\begin{lemma}\label{lem:opt-offset}
For any $X \subseteq \Gamma^d$, let $w=\mcc(X)$. Then, for any $s \in \Gamma^d$, we can express its objective cost using $w$ and $\opt$ as,
$\sum_{x \in X} H(x,s)=\opt+\sum_{i:s_i\neq w_i} (\freq{i}{w}-\freq{i}{s}).$
\end{lemma}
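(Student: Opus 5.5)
The statement is \autoref{lem:opt-offset}. The plan is to decompose the Hamming objective coordinate by coordinate and compare each coordinate with the corresponding term for $w=\mcc(X)$, using \autoref{lem:MCC-opt} to identify the total for $w$ with $\opt$.

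First I will swap the order of summation. For any $s \in \Gamma^d$,
\[
\sum_{x\in X} H(x,s)=\sum_{i=1}^d \sum_{x\in X}\ind{x_i\neq s_i}=\sum_{i=1}^d \bigl(n-\freq{i}{s}\bigr),
\]
where $n=|X|$. This holds because, at index $i$, exactly $\freq{i}{s}$ strings of $X$ agree with $s_i$ and the other $n-\freq{i}{s}$ disagree. Applying the same identity to $w$ and invoking \autoref{lem:MCC-opt} gives
\[
\opt=\sum_{x\in X}H(x,w)=\sum_{i=1}^d \bigl(n-\freq{i}{w}\bigr).
\]

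Next I will subtract the second expression from the first:
\[
\sum_{x\in X}H(x,s)-\opt=\sum_{i=1}^d\bigl(\freq{i}{w}-\freq{i}{s}\bigr).
\]
At any index with $s_i=w_i$ the characters coincide, so $\freq{i}{s}=\freq{i}{w}$ and that term vanishes. Only the indices with $s_i\neq w_i$ remain, which is exactly the claimed formula.

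No step here is a real obstacle. The only point that needs care is to note that $\freq{i}{s}$ depends only on the character $s_i$, which is what lets the terms with $s_i=w_i$ cancel. It is also worth remarking, though the lemma does not require it, that each surviving term is nonnegative because $w_i$ is a most frequent character at index $i$; this consistency check is what makes the weights in later sections well defined.
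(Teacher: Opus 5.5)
Your proposal is correct and matches the paper's proof: you write the objective as $\sum_{i=1}^d (n-\freq{i}{s})$, identify $\sum_{i=1}^d (n-\freq{i}{w})$ with $\opt$ via \autoref{lem:MCC-opt}, and drop the terms with $s_i=w_i$ because they vanish. Your nonnegativity remark is also accurate, although the lemma does not need it.
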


\noindent \textbf{Dispersion Measures.} Dispersion is the notion of computing $k \ge 2$ diverse solutions to the Hamming median problem. There are multiple ways we could define the dispersion of a set of strings. In this work, we consider two common forms of dispersion, minimum Hamming distance (\emph{min dispersion}) and sum of pairwise Hamming distances (\emph{sum dispersion}). We formally define the \emph{min dispersion} and \emph{sum dispersion} as follows:

\begin{definition}[Min Dispersion]
    Given a set of strings $S=\{s_1,s_2,\dots,s_m\}$, the \emph{min dispersion} of $S$ is defined as,
$\textsc{minDp}(S) := \min_{s_i,s_j\in S} H(s_i,s_j).$
\end{definition}

\begin{definition}[Sum Dispersion]
    Given a set of strings $S=\{s_1,s_2,\dots,s_m\}$, the \emph{sum dispersion} of $S$ is defined as,
$\textsc{sumDp}(S) := \sum_{s_i,s_j\in S|i<j} H(s_i,s_j).$
\end{definition}

\noindent \textbf{Diverse Hamming (Approximate) Median.}
In this paper, we explore three key problems. We first introduce the $\dmhm$ problem, where the goal is to find two (approximate) median strings such that the Hamming distance between them is maximized.

\begin{problem}[Diameter Maximization ($\dmhm$)]\label{prob:diameter-max}
Given a set of strings $X \subseteq \Gamma^d$ and $\epsilon \ge 0$, the $\dmhm$ problem asks to find two $(1+\epsilon)$-approximate Hamming medians $s_1,s_2 \in \Gamma^d$ of $X$ such that $H(s_1,s_2)$ is maximized.
\end{problem}

Next, the $\sdahm$ problem is to find a $k$ set of $(1+\epsilon)$-approximate Hamming medians that maximize the sum dispersion.

\begin{problem}[Sum Dispersion Approximate Medians ($\sdahm$)]\label{prob:Hamming-sum}
Given a set of strings $X \subseteq \Gamma^d$, a non-negative integer $k$, and $\epsilon \ge 0$, the \emph{Sum Dispersion Approximate Hamming medians} problem asks to find a set of (cardinality $k$) strings, $S=\{s_1,s_2,\dots,s_k\} \in \Gamma^d$ such that for all $i \in [k]$, $s_i$ is a $(1+\epsilon)$-approximate Hamming median of $X$, and the sum dispersion $\sDp(S)$ is maximized.   
\end{problem}

Finally, we explore the $\mdehm$ problem, where the goal is to find $k$ (approximate) Hamming medians that maximize the minimum dispersion.

\begin{problem}[Min Dispersion Hamming Medians ($\mdehm$)]\label{prob:Hamming-exact-min}
Given a set of strings $X \subseteq \Gamma^d$, a non-negative integer $k$, and $\epsilon \ge 0$, the \emph{Min Dispersion Hamming Medians} problem asks to find a set of (cardinality $k$) strings, $S=\{s_1,s_2,\dots,s_k\} \in \Gamma^d$ such that for all $i \in [k]$, $s_i$ is a $(1+\epsilon)$-approximate Hamming median of $X$, and the min dispersion $\mDp(S)$ is maximized. 
\end{problem}

%% file: diameter.tex
In this section, we develop efficient algorithms for the $\dmhm$ problem. We first claim that when $\epsilon = 0$, a straightforward construction yields two exact medians that maximize the diameter, which we defer to the~\autoref{app:diverse-fin-exact}. We then extend this result to the case $\epsilon > 0$, demonstrating that there is still an efficient algorithm that produces two $(1+\epsilon)$-approximate medians achieving maximum diameter.

\diversefin*

\subparagraph*{Algorithm Description.} Suppose we are given $X\subseteq \Gamma^d$ as input, and $n=|X|$. Next, we define an auxiliary string $\hat{w}$ that consists of the second-most frequent character (if it exists) in each position, more specifically: set $\hat{w}_i = \arg\max_{e \in \Gamma \setminus \{w_i\}} |\{x \in X:x_i=e\}|$ (breaking ties arbitrarily). 

Before proceeding with the detailed description of our algorithm, let us introduce the following problem, an optimal solution of which is pivotal in our algorithm. Let us consider the {\mdiff} problem: Given an $n$-length array $M$ and a set $T \subseteq [n]$, the goal is to partition $T$ into two (disjoint) sets $T_1, T_2$ (where $T=T_1 \cup T_2$) such that $\left |\sum_{i \in T_1} M_i - \sum_{i \in T_2} M_i \right|$ is minimized. It is not hard to see that this problem can be solved using a dynamic programming algorithm (\autoref{alg:min-sum-diff}), which we detailed in~\autoref{sec:smallest-sum-diff}.

Let us now describe our algorithm. Our algorithm first sorts the indices of $[d]$ in the non-decreasing order of the value of $\freq{i}{w}-\freq{i}{\hat{w}}$. The algorithm then greedily selects a maximal set $S$ of indices in the sorted order such that $\sum_{i \in S} \left(\freq{i}{w}-\freq{i}{\hat{w}} \right)\le \epsilon \opt$. Then, it again selects another maximal set $R$ greedily starting from the index $|S|+1$ of the sorted order, such that $\sum_{i \in R} \left(\freq{i}{w}-\freq{i}{\hat{w}}\right) \le \epsilon \opt$. We output the strings $z,y$ where the characters of $z$ are the same as $w$ except for the indices in $S$ (where they become the corresponding character in $\hat{w}$) and the characters of $y$ are the same as $w$ except for the indices in $R$ (where they become the corresponding character in $\hat{w}$). We also consider the set $T$, which consists of the first $|S|+|R|+1$ sorted indices. If $\sum_{i \in T} \left(\freq{i}{w}-\freq{i}{\hat{w}} \right)\le 2\epsilon \opt$, we use~\autoref{alg:min-sum-diff} (setting $M$ to be $\{f_i^w-f_i^{\hat{w}}|i \in [n]\}$) to find two partitions of $T$ such that the sum difference between the two partitions is minimized. If $T_1$ and $T_2$ are two partitions such that the cost is $\le (1+\epsilon)\opt$, we use the partition $T_1, T_2$. We output the strings $z,y$ where the characters of $z$ are the same as $w$ except for the indices in $S$ (or $T_1$ when there is a valid partition) and the characters of $y$ are the same as $w$ except for the indices in $R$ (or $T_1$ when there is a valid partition). 

We provide the pseudocode (\autoref{alg:diverse-finite}) along with a detailed analysis in~\autoref{app:diam-approx}.

%% file: sum-dispersion.tex
In this section, we present an approximation algorithm for the $\sdahm$ problem. We first show that when $\epsilon = 0$, a simple construction yields $k$ exact medians that maximize the sum dispersion. 

\begin{restatable}{theorem}{sumdispersionkexact}
\label{thm:sum-dispersion-k-exact}
Consider an alphabet $\Gamma$. There exists an algorithm that, given any $X \subseteq \Gamma^d$ of size $n$ and a non-negative integer $k$, returns $k$ Hamming medians maximizing the sum dispersion, in $O\left(nd\log \left(\min\left\{n,|\Gamma|\right\}\right)+kd\right)$ time.
\end{restatable}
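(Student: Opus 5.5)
The plan is to exploit the coordinatewise structure given by \autoref{lem:MCC-opt}. Every optimal median $w^*$ satisfies $\freq{i}{w^*}=\freq{i}{w}$ for all $i$, so $w^*_i$ must lie in the set $\Gamma_i$ of most frequent characters at index $i$. Conversely, any string $s$ with $s_i\in\Gamma_i$ for all $i$ is an optimal median, since by \autoref{lem:opt-offset} its cost is $\opt+\sum_{i:s_i\neq w_i}(\freq{i}{w}-\freq{i}{s})=\opt$. Hence the feasible set is exactly the product $\Gamma_1\times\cdots\times\Gamma_d$.

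Next, write the sum dispersion coordinatewise. For any $S=\{s_1,\dots,s_k\}$,
\[
\sDp(S)=\sum_{i=1}^d \sum_{j<l}\ind{s_{j,i}\neq s_{l,i}}.
\]
Because the constraints are a product over coordinates, the problem decouples: we can maximize each coordinate's contribution independently. At index $i$, if character $a\in\Gamma_i$ is used $c_a$ times among the $k$ strings, the contribution is $\binom{k}{2}-\sum_a\binom{c_a}{2}$. This is maximized over nonnegative integers with $\sum_a c_a=k$ by making the $c_a$ as equal as possible, i.e.\ each $c_a\in\{\lfloor k/r_i\rfloor,\lceil k/r_i\rceil\}$ where $r_i=|\Gamma_i|$. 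The proof is a standard exchange argument: if $c_a\ge c_b+2$, moving one occurrence from $a$ to $b$ changes $\sum\binom{c}{2}$ by $c_b-(c_a-1)<0$. Realizing this is trivial: set $s_{j,i}$ to the $((j \bmod r_i)+1)$-th element of $\Gamma_i$. Since coordinates are independent, the resulting $k$ strings attain the maximum of every coordinate term simultaneously, so they are optimal.

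The output may contain repeated strings, for instance when $\prod r_i<k$. I would handle this by reading "set of cardinality $k$" as a multiset, or else note it as the convention, since the upper bound argument is unaffected.

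For the running time, computing $\Gamma_i$ requires the frequencies of all characters at each index. With a balanced search tree or sorting keyed by character, over the at most $\min\{n,|\Gamma|\}$ distinct characters present at an index, this costs $O(n\log\min\{n,|\Gamma|\})$ per index, giving $O(nd\log\min\{n,|\Gamma|\})$ in total. Writing out the $k$ strings costs $O(kd)$. The main obstacle is modest: it is the clean justification of the per-coordinate balancing optimum and the remark that decoupling is legitimate because the feasible set is a product set. The timing bookkeeping is routine.
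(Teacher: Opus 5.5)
Your proposal is correct and follows essentially the same route as the paper: restrict each coordinate to the set $\Gamma_i$ of most frequent characters, write $\sDp$ as a sum of per-coordinate terms (your $\binom{k}{2}-\sum_a\binom{c_a}{2}$ equals the paper's $\frac12\bigl(k^2-\sum_a \ell_{a,i}^2\bigr)$), and use the same exchange argument to show optimal counts differ by at most one. You then realize a balanced assignment (round-robin instead of the paper's block assignment) and do the same running-time accounting. The one step worth making explicit is that every balanced count vector has the same value, because its multiset of counts is forced to be $q_i$ copies of $\lceil k/r_i\rceil$ and $r_i-q_i$ copies of $\lfloor k/r_i\rfloor$ with $q_i = k \bmod r_i$; the paper closes the optimality argument exactly this way.
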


We design the $\sdelg$ Algorithm (\autoref{alg:sum-exact}) that ends up giving $k$ strings that maximize sum dispersion. The main idea behind the algorithm is first to find the set of majority (most frequent) characters at each index, and then distribute them "evenly" over $k$ candidate medians (see~\autoref{fig:sum-exact-index-main}). This will ensure that the sum of pairwise distances between them is maximized. We present the algorithm and detailed analysis in~\autoref{app:sum-exact}.

\begin{figure}[h]
    \centering
    \includegraphics[width=\linewidth]{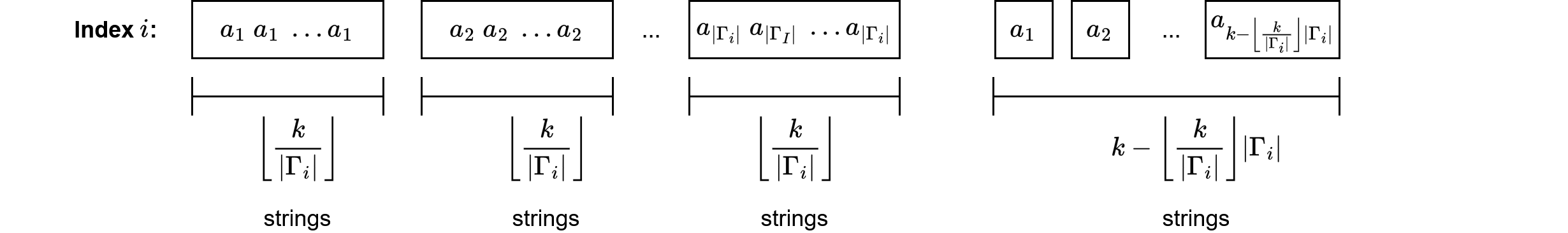}
    \caption{Let $\Gamma_i =\{a_1,\cdots,a_r\}$ be set of most frequent characters at the index $i$. Overview of the characters at index $i$ after the modifications by $\sdelg$ Algorithm (\autoref{alg:sum-exact})}
    \label{fig:sum-exact-index-main}
\end{figure}

We then generalize to the case $\epsilon > 0$, proving that an efficient algorithm can still be obtained to produce $k$ $(1+\epsilon)$-approximate medians that approximately maximize the sum dispersion. Let us now recall~\autoref{thm:sum-dispersion-k}.

\sumdispersionk*

The proof of the above theorem proceeds as follows. Given $X \subseteq \Gamma^d$, let $v^*$ be the maximum possible sum dispersion for any set of $k$ $(1+\epsilon)$-approximate medians, and $D^*$ be the diameter for the $(1+\epsilon)$-approximate medians. We first establish the following result, which is the key contribution towards attaining our approximation result for the sum dispersion. The proof is deferred to~\autoref{app:sum-approx}. 

\begin{restatable}{theorem}{sumdislargeD}
\label{thm:sum-dispersion-largeD}
Consider an alphabet $\Gamma$. There exists an algorithm that, given any $X \subseteq \Gamma^d$ of size $n$, a non-negative integer $k$ and $\epsilon > 0$, returns a set $S$ of $k$ $(1+\epsilon)$-approximate Hamming medians, such that their sum dispersion $\sDp(S) \ge \left(1 - \frac{4}{D^*}\right)v^*$, where $D^*$ is the maximum diameter between two $(1+\epsilon)$-approximate Hamming medians in $X$ and $v^*$ is the maximum sum dispersion of $k$ $(1+\epsilon)$-approximate Hamming medians. Moreover, the algorithm runs in time $O(nd|\Gamma| + d^2k^4|\Gamma|^2)$.
\end{restatable}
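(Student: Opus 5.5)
We prove \autoref{thm:sum-dispersion-largeD} with the density-greedy scheme outlined in the technical overview, and compare its output value $v$ against an upper bound on $v^*$. By \autoref{lem:opt-offset}, every $(1+\epsilon)$-approximate median $s$ has a budget constraint: its total \emph{weight} $\sum_{i:s_i\neq w_i}(\freq{i}{w}-\freq{i}{s})$ is at most $\epsilon\opt$. Since $\sDp$ decomposes over coordinates, the contribution of index $i$ depends only on the multiset of characters placed at $i$. If the counts at $i$ are $c_a$, the contribution is $\binom{k}{2}-\sum_a\binom{c_a}{2}$.

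\textbf{Operations and densities.} For each index $i$, character $a\neq w_i$, target count $r$, and current count $\ell$ of $w_i$, we define an elementary operation: change one more copy of $w_i$ to $a$ at index $i$ (moving from count $r-1$ to $r$). Its gain is the marginal increase in $\binom{k}{2}-\sum_a\binom{c_a}{2}$, namely $(\ell-1)-(r-1)$. Its cost is $\freq{i}{w}-\freq{i}{a}$. Its density is gain divided by cost. Because the gain in $r$ and $\ell$ is concave, the operations for fixed $(i,a)$ have nonincreasing density along their natural order. Hence sorting all $O(dk^2|\Gamma|)$ operations by density (we order them so that denser ones come first) respects the per-index order. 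We then take the longest feasible prefix. Feasibility is tested by assigning, for each $(i,a)$, the prefix's operations to distinct candidate medians with remaining budget, for instance the one with the largest residual budget. Tracking the counts and verifying all prefixes naively gives the stated $O(nd|\Gamma|+d^2k^4|\Gamma|^2)$ bound; the $nd|\Gamma|$ term comes from computing all frequencies.

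\textbf{Upper bound via an LP-type relaxation.} First, $v^*\ge \lfloor k/2\rfloor\lceil k/2\rceil D^*\approx k^2D^*/4$. To see this, take the diameter pair $y,z$ from \autoref{thm:diverse-fin} and use half of the copies as $y$ and half as $z$. Next, relax the problem so that only the \emph{total} budget $k\epsilon\opt$ matters rather than $k$ separate budgets. The fractional knapsack over the concave per-$(i,a)$ chains is solved optimally by the density-greedy order. Let $\rho$ be the largest density among operations not taken, and let $U$ be the leftover total budget. Then $v^*-v\le U\rho$, plus the fractional part of the first rejected operation. The delicate point is that the prefix is maximal under the \emph{per-median} constraints, not under the total budget. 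So we must argue that unused budget in the prefix-greedy solution is wasted only because each individual median lacks room.

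\textbf{Main obstacle: bounding $U\rho\lesssim k^2$.} The first rejected operation has cost $c$ exceeding every median's residual budget. Hence $U< k c$, and $U\rho< k\cdot c\rho=k\cdot\text{gain}\le k\cdot k=k^2$, since a single gain is at most $k-1$. The hard part is justifying that the rejection really certifies every median's residual is below $c$. If operations on the same $(i,a)$ must go to distinct medians, a median with large residual might be blocked by already holding $a$ at $i$. I would handle this by showing the assignment rule places each operation on a median with maximal residual among those eligible. Blocked medians then number at most $r\le k$, and each carries residual at most what an extra chain step would cost, which again gives an $O(k)\cdot c$ bound. With the factors tracked, $v^*-v\le U\rho\le k^2\le (4/D^*)\,v^*$, which yields the claimed $(1-4/D^*)$ ratio.
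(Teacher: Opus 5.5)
Your outline follows the paper's architecture: the densities $(\ell-r)/\costind(a,i)$, a longest feasible prefix, $v^*\ge\frac{(k-1)(k+1)}{4}D^*$ as in \autoref{lem:opt-l-bound-k}, and $v^*-v\le U\rho\lesssim k^2$. You are also right that eligibility is a real issue. However, the fix you propose for it is false.

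\textbf{Blocked medians can keep large budgets.} A median that already holds a character other than $w_i$ at index $i$ is ineligible for every further step at $i$. There are $k-\ell$ such medians, not at most $r$. Nothing in your assignment rule ties such a median's residual to the cost of the rejected step. Since you stop at the first rejected step, that budget is simply lost. Here is a concrete instance, reading operations as steps applied to the current configuration:
\begin{itemize}
\item Take $k=3$ and $m$ indices. Each index has a cheap alternative $a_j$ of cost $1$, whose first step has gain $2$ and so precedes every expensive step. Each index also has an expensive alternative $b_j$ of cost $C$ or $C+1$, with $C\ge 8$; once $\ell=2$, its step has gain $1$.
\item Give every median budget $m/3+Cm/12$.
\item The cheap steps are spread evenly, so each median is blocked at $m/3$ indices.
\item Put cost exactly $C$ at the indices where median $1$ is blocked; this set depends only on the cheap phase. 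Those expensive steps then come first.
\item Medians $2,3$ are exhausted after $m/6$ of these steps, and the next one is rejected.
\item Median $1$ still holds $Cm/12$. That is enough for $\Omega(m)$ further gain-$1$ steps at indices where it is unblocked.
\end{itemize}
Hence $U\rho\ge m/12$, and even $v^*-v=\Omega(m)$. Meanwhile $D^*=m$ and $v^*\le 3m$, so the $(1-4/D^*)$ guarantee fails for large $m$.

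\textbf{The constant is also off.} Even an $O(k)\cdot c$ bound would not give the stated ratio. Matching \autoref{lem:opt-l-bound-k} requires $U\rho\le (k-1)(k+1)$. Your closing chain $U\rho\le k^2\le (4/D^*)v^*$ already breaks for odd $k$.

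\textbf{The bound $v^*-v\le U\rho$ is not established either.} You derive it from optimality of density-greedy for a fractional knapsack over independent concave chains indexed by $(i,a)$. These chains are coupled through $\ell$. The gain $\ell-r$ of a step $w_i\to a$ depends on how many copies of $w_i$ remain, and steps for every other character at index $i$ consume those copies too. So the per-index value $\binom{k}{2}-\sum_a\binom{c_a}{2}$ is not separable in the counts of the characters $a\neq w_i$. Tuples $(i,a,r,\ell)$ and $(i,b,r',\ell)$ with the same $(i,\ell)$ can both lie in a density prefix, although at most one is realizable. Consequently, neither ``sorting respects the per-index order'' nor ``$v$ equals the total gain of the taken tuples'' is automatic.

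\textbf{How the paper handles these two steps.} For $v^*-v\le U\rho$, the paper uses several pieces together:
\begin{itemize}
\item the preprocessing that keeps one tuple per $(i,\ell)$;
\item Remark~\ref{rem:all-ell}, so that the $w_i$-terms telescope and $v=\sum\costind(a,i)\,\rho_{i,a,r,\ell}$ over $L_{\text{thr}}$;
\item Remark~\ref{rem:ordering}: without loss of generality, $Y^*$ uses cheaper characters at least as often;
\item the surplus/deficit pairing in Claim~\ref{clm:sum-opt-rho}, which yields $v^*\le v+\rho\,(\mathrm{cost}(Y^*)-\mathrm{cost}(Y))\le v+U\rho$.
\end{itemize}
For the bound on $U\rho$, the paper does not reason about a single rejected step. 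It reruns \cgrd{} on $L_{\text{thr}+1}$ and shows $U\le U'+\costind(\hat a,\hat i)$. It also shows $U'\le k\,\costind(\tilde a,\tilde i)$ for a pair $(\tilde a,\tilde i)$ that \cgrd{} cannot fully serve; this uses the fact that the candidate set $\mathcal{R}_{a,i}$ in \cgrd{} depends only on residual budget. A case split on which of the two costs is larger then gives exactly $U\rho\le (k+1)(k-1)$.
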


We then argue that if $D^*$ is sufficiently large, this already leads to a PTAS. On the other hand, if $D^*$ is small, a PTAS can be obtained by arguing that the Hamming metric is of "negative-type" and that the constraint of selecting $k$ strings reduces to a matroid constraint, and then using the earlier work of~\cite{cevallos2015maxsum}. Together, it completes the proof of~\autoref{thm:sum-dispersion-k}. We defer the details to~\autoref{app:sum-approx}.

%% file: min-dispersion.tex
 In this section, we study the $\mdehm$ problem. We first show that in the special case of exact medians ($\epsilon = 0$), the problem admits a PTAS, at least for a constant-sized alphabet. More specifically, we show the following result. 

\begin{restatable}{theorem}{exactkon}
\label{thm:exact-kon}
Given a set of strings $X$, a parameter $k$, and two parameters $\delta, \eta$, there exists an algorithm such that:
\begin{enumerate}[(I).]
\item If $k \le \frac{1}{\delta}$, the algorithm outputs $k$ exact medians with min dispersion at least $t^*$, in $O(nd\log \min\left(n,|\Gamma|\right)+|\Gamma|^{\frac{1}{\delta}} d^{\frac{1}{2\delta^2}})$ time, and,
\item If $D^* \ge \frac{4}{\delta^2}(2\log k + 1)$ and $k>\frac{1}{\delta}$, then with probability at least $1-\eta$, the algorithm outputs $k$ exact medians with min dispersion at least $(1-2\delta)t^*$, in $O(nd\log \min\left(n,|\Gamma|\right)+(kd|\Gamma|+k^2d)\log \frac{1}{\eta})$ time, and,
\item If $D^* < \frac{4}{\delta^2}(2\log k + 1)$ and $k>\frac{1}{\delta}$, the algorithm outputs $k$ exact median strings with min dispersion at least $\frac{1}{2}t^*$, in $O\left(nd\log \min\left(n,|\Gamma|\right)+|\Gamma|^{\frac{4}{\delta^2}}\cdot k^{2+\frac{8}{\delta^2} \log |\Gamma|}\right)$ time.
\end{enumerate}
Here $D^*$ is the optimal diameter between two exact medians in $X$, and $t^*$ is the optimal min dispersion of $k$ exact medians in $X$.
\end{restatable}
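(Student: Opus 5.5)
Throughout, write $\Gamma_i$ for the set of most frequent characters at index $i$ and $m_i=|\Gamma_i|$. By \autoref{lem:MCC-opt}, the exact medians of $X$ are exactly the strings in $\Gamma_1\times\cdots\times\Gamma_d$. Computing all $\Gamma_i$ costs $O(nd\log\min(n,|\Gamma|))$. Let $I=\{i: m_i\ge 2\}$. Two medians can differ only on $I$, and by picking two distinct characters at every $i\in I$ we see that $D^*=|I|$. The three cases of the theorem are then handled by three different procedures on this product structure.

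\textbf{Case (I), $k\le 1/\delta$.} I would run an exact search. If $|I|$ is large relative to $k$, the optimum is simply $t^*=|I|$ or close to it, so the real work is when $|I|$ is small. The cleanest approach is to guess the $k\times|I|$ column patterns. Each column of a $k$-tuple of medians is a function from $[k]$ into $\Gamma_i$, so there are at most $|\Gamma|^{k}$ column types. The pairwise distance vector depends only on how many columns of each type occur. I would do a dynamic program over $i\in I$ whose state is the vector of pairwise distances, each capped at $d$. Alternatively, I would guess type counts up to symmetry and enumerate, which I expect gives the stated $|\Gamma|^{1/\delta}d^{1/(2\delta^2)}$: there are $\binom{k}{2}\le 1/(2\delta^2)$ pairwise distances, each in $[0,d]$, and $|\Gamma|^{k}\le|\Gamma|^{1/\delta}$ choices per column. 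The DP is routine, and it outputs the optimum exactly.

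\textbf{Case (II), large $D^*$.} Here I would use random sampling. Choose each of the $k$ strings independently, taking a uniform character from $\Gamma_i$ at every index. For a fixed pair of strings, $H$ is a sum of independent indicators with mean $\mu=\sum_{i\in I}(1-1/m_i)\ge D^*/2$. Chernoff gives $\Pr[H<(1-\delta)\mu]\le e^{-\delta^2\mu/2}\le e^{-\delta^2 D^*/4}\le 1/(ek^2)$ by the assumption on $D^*$. A union bound over the $\binom{k}{2}$ pairs then gives constant success probability, and repeating $\log(1/\eta)$ times while checking all $k^2d$ distances boosts it to $1-\eta$.

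The remaining, key step is the generalized Plotkin bound: for any $k$ strings in $\prod_i\Gamma_i$,
\[
\binom{k}{2}\,t^* \;\le\; \sum_{i\in I}\frac{k^2}{2}\Bigl(1-\frac{1}{m_i}\Bigr)=\frac{k^2}{2}\mu .
\]
This holds because the column-wise sum of pairwise disagreements is maximized by an even distribution of characters. Hence $t^*\le \frac{k}{k-1}\mu$, and so the sample has min dispersion $(1-\delta)\mu\ge(1-\delta)(1-1/k)t^*\ge(1-2\delta)t^*$, using $k>1/\delta$. I expect this Plotkin-type inequality with index-dependent alphabets to be the main point to verify carefully.

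\textbf{Case (III), small $D^*$ and $k>1/\delta$.} Here I would use brute-force enumeration plus a greedy step. Since $|I|=D^*<\frac{4}{\delta^2}(2\log k+1)$, the number of exact medians is at most $|\Gamma|^{D^*}\le|\Gamma|^{4/\delta^2}k^{(8/\delta^2)\log|\Gamma|}$. I would enumerate all of them explicitly and run the greedy farthest-point heuristic of \cite{Ravi1994Heuristic} for max-min dispersion. That heuristic gives a $1/2$-approximation in any finite metric, and it takes $O(k\cdot N)$ distance evaluations on $N$ points, which matches the stated time up to factors absorbed in $k^2$.
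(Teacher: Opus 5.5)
Your proposal is correct, and most of it is the paper's argument. Case (I) is the same DP over the $\binom{k}{2}$-vector of pairwise distances, with $|\Gamma|^k$ column assignments per step. Case (II) uses the same sampling half: uniform sampling from each $\Gamma_i$, $\mu\ge D^*/2$, Chernoff, a union bound and $\log(1/\eta)$ repetitions. Case (III) is the same enumeration of the at most $|\Gamma|^{D^*}$ medians followed by the greedy of \cite{Ravi1994Heuristic}.

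The real difference is how you prove $t^*\le\frac{k}{k-1}\mu$, which the paper calls the crux. The paper proves a generalized Plotkin bound (\autoref{lem:plotkin-thm}) geometrically. It maps each coordinate alphabet to the vertices of a centred simplex, so the codewords become unit vectors with pairwise inner products at most $1-t/\mu$ (\autoref{lem:map-lem}). \autoref{lem:geometric} then gives $k\le t/(t-\mu)$ when $t>\mu$, and the case $t^*\le\mu$ is handled separately. You instead double count. $\binom{k}{2}t^*$ is at most the total pairwise distance, and column $i$ contributes $\frac12\bigl(k^2-\sum_a\ell_{a,i}^2\bigr)\le\frac{k^2}{2}\bigl(1-\frac1{m_i}\bigr)$ by Cauchy--Schwarz, where $\ell_{a,i}$ counts the strings carrying $a$ at index $i$. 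This is the same per-column identity the paper already uses for exact sum dispersion in \autoref{thm:sum-dispersion-k-exact}, so you are really bounding $\binom{k}{2}t^*$ by the maximum exact sum dispersion.

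Both routes give the same inequality: item 2 of \autoref{lem:plotkin-thm} rearranges to $t(k-1)\le k\mu$. Your version needs no case split and no embedding. The geometric route also yields the boundary statement $|C|\le 2\sum_\ell|\Gamma_\ell|$ at $t=\mu$, but the theorem never uses it. Your constant $e^{-(2\ln k+1)}=1/(ek^2)$ is also the accurate form of the paper's ``$=\frac{1}{2k^2}$''.

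One small correction: your aside in Case (I), that $t^*$ equals or is close to $|I|$ when $|I|$ is large, is false for $k\ge 3$. With $m_i=2$ and $k=3$, each column has at most two disagreeing pairs, so $t^*\le\frac23|I|$. In general your own bound gives $t^*\le\frac{k}{k-1}\mu$, which is roughly $|I|/2$ for binary $\Gamma_i$. The remark is harmless, since the DP is exact and its correctness and running time do not rely on it.
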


We now provide a high-level proof idea for the above theorem. We first derive a dynamic programming–based algorithm that exactly solves the minimum dispersion problem (\autoref{lem:min-dispersion-DP}), and that establishes Item (I). Next, we consider the case where $D^*$ is large enough, and show that one can obtain a $(1-\delta)$-approximation to the minimum dispersion. In this case, we first find the set $\Gamma_i$ of all majority (most frequent) characters per index $i$, and then generate $k$ candidate medians by drawing a character for an index $i$ uniformly at random from that set $\Gamma_i$. A lower bound on the min dispersion achieved by this randomized process follows from a standard concentration inequality (\autoref{lem:uniform-exact}). The main crux of the argument lies in establishing a near-tight upper bound on the optimum min dispersion objective, which we derive by proving a generalized Plotkin bound (\autoref{lem:plotkin-thm} in~\autoref{sec:plotkin}), and that in turn implies Item (II). Finally, we consider the scenario in which $D^*$ is small, and demonstrate that a solution achieving a $1/2$-approximation to the minimum dispersion can be obtained via a greedy algorithm over a polynomial-sized solution (search) space (see~\autoref{lem:exact-small}), establishing Item (III). We defer all the details to~\autoref{app:min-dispersion-exact}.

For the more general case where $\epsilon > 0$, meaning the objective is to compute $(1+\epsilon)$-approximate medians, we present a bi-criteria approximation algorithm.

\begin{restatable}{theorem}{mindisapproxk}
\label{thm:mindis-approx-k}
Given a set of strings $X$, a parameter $k$, and two parameters $\delta, \eta$, there exists an algorithm such that:
\begin{enumerate}[(I).]
\item If $k \le \frac{1}{\delta}$, then the algorithm outputs $k$ $(1+\epsilon)$-approximate medians with min dispersion at least $t^*$, in $O((1+\epsilon)^{\frac{1}{\delta}}|\Gamma|^\frac{1}{\delta} n^{\frac{1}{\delta}} d^{\frac{2}{\delta^2}})$ time, and,
\item If $D^* \le \frac{4}{\delta^2}$ and $k >\frac{1}{\delta}$, then the algorithm outputs $k$ $(1+\epsilon)$-approximate medians with min dispersion at least $\frac{1}{2}t^*$, in $O \left(k^2 |\Gamma|^{\frac{4}{\delta^2}}d^{\frac{4}{\delta^2}}+nd \cdot |\Gamma|^{\frac{4}{\delta^2}}d^{\frac{4}{\delta^2}}\right)$ time.
\item If $D^* \ge \frac{4}{\delta^2}(2\log k + 1)$ and $k >\frac{1}{\delta}$, then with probability at least $1-\eta$, the algorithm outputs $k$ $(1+2\epsilon)$-approximate medians with min dispersion at least $\frac{1-\delta}{2}t^*$, in $O ((1+\epsilon)nd+d\log d+k^2d\log \frac{1}{\eta})$ time.
\item If $t^* \ge \frac{8+4\delta}{\delta}\sqrt{d}(2\log k + 2)$ and $k >\frac{1}{\delta}$, then with probability at least $1-\eta$, the algorithm outputs $k$ distinct $(1+\epsilon+\delta)$-medians with min dispersion at least $\frac{1-\delta}{2}t^*$, in $O(nd\log \min\left(n,|\Gamma|\right) +(nkd+ k^9 d^3) \log \frac{1}{\eta})$ time.
\end{enumerate}
Here $D^*$ is the optimal diameter between two $(1+\epsilon)$-approximate medians in $X$, and $t^*$ is the optimal min dispersion of $k$ many $(1+\epsilon)$-approximate medians in $X$.
\end{restatable}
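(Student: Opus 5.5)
The plan is to prove the four items of \autoref{thm:mindis-approx-k} separately, with one regime per item.

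\textbf{Item (I): $k \le 1/\delta$.} We solve the problem exactly by a dynamic program over indices, adapting the one behind \autoref{lem:min-dispersion-DP}. We process the indices $i=1,\dots,d$ one at a time. The state records four things: the index; the $k$-tuple of characters placed at that index (at most $|\Gamma|^{k}$ choices); the current cost increments $\sum (\freq{i}{w}-\freq{i}{s})$ of each of the $k$ candidates, each bounded by $\epsilon\opt\le \epsilon nd$, justified by \autoref{lem:opt-offset}; and the $\binom{k}{2}$ pairwise distances, each capped at a target value $t\le d$. For each guessed $t$ we test feasibility, namely whether every pairwise distance reaches $t$ while every budget stays within $\epsilon\opt$. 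The resulting state count is of the claimed form.

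\textbf{Item (II): $D^*\le 4/\delta^2$.} Every $(1+\epsilon)$-approximate median $s$ lies within distance $D^*$ of $w$, because $w$ is itself such a median. We therefore enumerate all strings differing from $w$ in at most $4/\delta^2$ positions, which gives $O(|\Gamma|^{4/\delta^2}d^{4/\delta^2})$ candidates, and keep those whose cost, computed in $O(nd)$ each, is at most $(1+\epsilon)\opt$. On this explicit finite metric we run the greedy farthest-point heuristic of \cite{Ravi1994Heuristic}, which is known to give a $1/2$-approximation for min dispersion.

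\textbf{Item (III): $D^*\ge \frac{4}{\delta^2}(2\log k+1)$.} Using \autoref{thm:diverse-fin}, we obtain $y,z$ with $H(y,z)=D^*$, where both are $(1+\epsilon)$-approximate medians. We sample $k$ strings, choosing each coordinate independently and uniformly from $\{y_i,z_i\}$.
\begin{itemize}
\item \emph{Cost.} At each coordinate the string agrees with $y$ or with $z$, so by \autoref{lem:opt-offset} its extra cost is at most the extra cost of $y$ plus that of $z$, which is at most $2\epsilon\opt$. Hence each sample is a $(1+2\epsilon)$-approximate median deterministically.
\item \emph{Dispersion.} For any pair of samples, the distance is a sum of $D^*$ independent Bernoulli$(1/2)$ variables. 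A Chernoff bound gives distance below $(1-\delta)D^*/2$ with probability at most $e^{-\delta^2 D^*/8}\le k^{-2}/2$ or so, and a union bound over the $\binom{k}{2}$ pairs gives constant success probability. Repeating $O(\log(1/\eta))$ times and keeping the best output, each check costing $O(k^2d)$, boosts this to probability $1-\eta$.
\item \emph{Conclusion.} Since $t^*\le D^*$, we obtain min dispersion $\ge \frac{1-\delta}{2}t^*$.
\end{itemize}

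\textbf{Item (IV): large $t^*$.} This is the hardest part. We write an ILP with variables $x_{j,i,a}$, indicating that string $j$ has character $a$ at index $i$. It has per-string budget constraints and pairwise distance constraints, where the distance between strings $j$ and $j'$ is $\sum_i (1-\sum_a x_{j,i,a}x_{j',i,a})$. This is quadratic, so we linearize it with auxiliary agreement variables and guess $t^*$. We solve the LP relaxation and round each string's coordinate independently according to its marginals, using a dependent-rounding scheme that preserves the budget sum up to additive $O(\sqrt d\log k)$ deviations.

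The main obstacle is that the relaxation only certifies pairwise agreement probability in expectation; for independently rounded strings $j,j'$ the expected agreement is $\sum_a x_{j,i,a}x_{j',i,a}$, which is not the linearized variable. We will first try to argue that the linear constraints imply expected distance at least $t^*/2$, via $\sum_a p_aq_a\le \min(\ldots)$, which explains the factor $1/2$. Then Hoeffding bounds with deviation $\sqrt{d\log k}$, together with $t^*\ge \Omega(\sqrt d\log k/\delta)$, absorb the additive errors both into a $(1-\delta)$ dispersion loss and into a $+\delta$ cost slack (using $\opt\gtrsim$ the relevant scale, or else handling that case separately).
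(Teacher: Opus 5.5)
Items (I)--(III) follow the paper's route: a DP over indices tracking pairwise distances and per-string costs; enumeration of the $O(|\Gamma|^{4/\delta^2}d^{4/\delta^2})$ candidates followed by the greedy of \cite{Ravi1994Heuristic}; and uniform sampling from $\{y_i,z_i\}$ combined with $t^*\le D^*$. The genuine gap is in Item (IV), on the cost side.

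\textbf{The cost guarantee in (IV).} By \autoref{lem:opt-offset}, a single coordinate contributes $\freq{i}{w}-\freq{i}{s}\in[0,n]$ to a string's extra cost. So the additive deviation that Hoeffding controls after rounding is of order $n\sqrt{d\log k}$, not $\sqrt d\log k$. Absorbing it into a $+\delta\opt$ slack requires $\opt\gtrsim n\sqrt{d\log k}/\delta$, and your proposal does not supply this. The fallback ``or else handling that case separately'' has no content, because there is no other case.

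The missing idea is the paper's Claim~\ref{clm:opt-hamming}, which ties $\opt$ to $t^*$. If $s_i\neq w_i$ then $\freq{i}{s}\le n/2$, since $\freq{i}{s}\le\freq{i}{w}$ and $\freq{i}{s}+\freq{i}{w}\le n$. Hence $\sum_{x\in X}H(x,s)=\sum_i(n-\freq{i}{s})\ge\frac{n}{2}\,|\{i:s_i\neq w_i\}|$. So every $(1+\epsilon)$-approximate median differs from $w$ in at most $2(1+\epsilon)\opt/n$ positions, and therefore $t^*\le D^*\le 4(1+\epsilon)\opt/n$. Substituting $\opt\ge nt^*/(4(1+\epsilon))$ into Hoeffding is what lets the hypothesis $t^*=\Omega(\sqrt d\log k/\delta)$ give $\sum_{x\in X}H(x,s_r)\le(1+\epsilon+\delta)\opt$ for all $r$ with constant probability. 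The same threshold thus does double duty: it drives the cost concentration as well as the dispersion concentration.

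\textbf{The factor-$1/2$ step.} The step you only promise to ``try'' is that the expected rounded distance is at least half the LP distance. It does go through, provided your linearization contains the constraint that the agreement variable for $a$ is at least $p_a+q_a-1$. This is the analogue of the paper's $z_{r\hat r ij}\le 2-u_{rij}-u_{\hat r ij}$. With it, the LP distance at a coordinate with marginals $p,q$ is at most $1-\sum_a(p_a+q_a-1)^+$, and at most one $a$ has $p_a+q_a>1$.
\begin{itemize}
\item If no $a$ does, then $\sum_a p_aq_a\le\sum_a(p_a+q_a)^2/4\le 1/2$.
\item If some $a$ does, set $s=p_a+q_a$. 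Then $\sum_b p_bq_b\le p_aq_a+(1-p_a)(1-q_a)\le 1-s+s^2/2$, and $s-s^2/2\ge(2-s)/2$ because $(s-1)(s-2)\le 0$.
\end{itemize}
This is the paper's two-case bound $\exv[\hat z_{r\hat r ij}]\ge\tilde z_{r\hat r ij}/2$, written in your parametrization.

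\textbf{Two smaller slips.}
\begin{itemize}
\item In (I), the character tuple belongs in the transition, not the state. Keeping it in the state costs an extra $|\Gamma|^k$ factor over the claimed bound.
\item In (III), the exponent $\delta^2D^*/8$ only gives about $k^{-1}$ per pair at the stated threshold, so the union bound over $\binom{k}{2}$ pairs fails. Use $\Pr[X\le(1-\delta)\mu]\le e^{-\delta^2\mu/2}$ with $\mu=D^*/2$, which gives $e^{-\delta^2D^*/4}\le e^{-1}k^{-2}$.
\end{itemize}
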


The proofs of Item (I) and (II) are similar to the argument used for the corresponding cases in~\autoref{thm:exact-kon} (see~\autoref{lem:min-dispersion-approx-DP},~\autoref{lem:approx-small}). For Item (III), we start with two $(1+\epsilon)$-approximate medians realizing the diameter (obtained from~\autoref{thm:diverse-fin}), and generate $k$ candidate approximate medians via a randomized process by selecting characters randomly from these two initial approximate medians (see~\autoref{lem:uniform-approx}). We argue that all these candidates are also $(1+2\epsilon)$-approximate medians. Then, using a standard concentration bound together with the fact that the min dispersion can at most be the diameter, we derive Item (III). We further improve the bi-criteria approximation in Item (IV) for a large regime by using LP relaxation together with \emph{dependent rounding} framework (\autoref{thm:diverse-fin-k}). We provide all the details in~\autoref{app:min-dispersion-approx}.

%% file: appendix.tex
\section{Missing Proofs from Preliminaries}
\label{sec:folklore-proof}

\begin{proof}[Proof of~\autoref{lem:MCC-opt}]
First, observe the following about the Hamming distance objective of the median string problem. Let $s$ be a string in $\Gamma^d$. Then,
\begin{align*}
\sum_{x \in X}H(x,s) &= \sum_{x \in X}\sum_{i=1}^d \ind{x_i \neq s_i} \\
&= \sum_{i=1}^d \sum_{x \in X} \ind{x_i \neq s_i} \\
&= \sum_{i=1}^d \left( n-\sum_{x \in X} \ind{x_i = s_i}\right)\\
&= \sum_{i=1}^d \left( n-\freq{i}{s}\right)
\end{align*}

Now consider the string $w=\mcc(X)$. For the sake of contradiction, assume $w$ is not an optimal solution. Let $y^*$ be an (arbitrary) optimal solution. Then, since $w$ is not optimal, we get that,
\begin{align*}
&\sum_{x \in X}H(x,w) > \sum_{x \in X}H(x,y^*)\\ \iff & \sum_{i=1}^d \left( n-\freq{i}{w}\right) > \sum_{i=1}^d \left( n-\freq{i}{y^*}\right)\\
\iff & \sum_{i=1}^d \freq{i}{y^*} > \sum_{i=1}^d \freq{i}{w}.\\
\end{align*}
However, by definition, we know that for all $i \in [d]$, $\freq{i}{w} \ge \freq{i}{y^*}$. Therefore, we can see that $\sum_{i=1}^d \freq{i}{y^*} > \sum_{i=1}^d \freq{i}{w}$ is a contradiction. Therefore, the assumption that $\sum_{x \in X}H(x,w) > \sum_{x \in X}H(x,y^*)$ is wrong. Since $\sum_{x \in X}H(x,y^*) = \opt$ and $w$ is a valid solution, we get that $\opt \le \sum_{x \in X}H(x,w) \le \opt$ which implies $\sum_{x \in X}H(x,w) = \opt$. Since $w$ is an optimal solution, we get,
\begin{align*}
0&=\sum_{x \in X}H(x,w)-\sum_{x \in X}H(x,y^*) \\
&=\sum_{i=1}^d \left( n-\freq{i}{w}\right)-\sum_{i=1}^d \left( n-\freq{i}{y^*}\right)\\
&=\sum_{i=1}^d \left( \freq{i}{y^*}-\freq{i}{w}\right).
\end{align*}
Now assume $\exists\; i\in[d]$ such that $\freq{i}{y^*} \neq \freq{i}{w}$. Since $\freq{i}{w} \ge \freq{i}{y^*}$, we get $\freq{i}{w} > \freq{i}{y^*}$. Since $\freq{j}{y^*}-\freq{j}{w} \le 0$ for all $j \in [d]$, we get that $\sum_{j=1}^d \left( \freq{j}{y^*}-\freq{j}{w}\right) = \sum_{j \neq i} \left( \freq{j}{y^*}-\freq{j}{w}\right) + \left( \freq{i}{y^*}-\freq{i}{w}\right) <0$. However, this contradicts with $\sum_{j=1}^d \left( \freq{j}{y^*}-\freq{j}{w}\right) = 0$ which is from our assumption of optimality of $w$. Therefore, the assumption that $\exists\; i\in[d]$ such that $\freq{i}{y^*} \neq \freq{i}{w}$ is incorrect. Therefore, $\forall\; i \in [d]$, $\freq{i}{y^*} = \freq{i}{w}$. 
\end{proof}

\begin{proof}[Proof of~\autoref{lem:opt-offset}]
The proof follows from the direct calculation given below,
\begin{align*}
\sum_{x \in X} H(x,s)&=\sum_{i=1}^d (n-\freq{i}{s}) \\
&=\sum_{i=1}^d (n-\freq{i}{w})+\sum_{i=1}^d (\freq{i}{w}-\freq{i}{s}) \\
&=\opt+\sum_{i:s_i\neq w_i} (\freq{i}{w}-\freq{i}{s})+ \sum_{i:s_i=w_i} (\freq{i}{w}-\freq{i}{s}) \\
&=\opt+\sum_{i:s_i\neq w_i} (\freq{i}{w}-\freq{i}{s}).
\end{align*}
\end{proof}

\section{Exact Algorithm for Diameter Maximization: Median Strings}
\label{app:diverse-fin-exact}
\begin{restatable}{theorem}{diversefinexact}
\label{thm:diverse-fin-exact}
Consider an alphabet $\Gamma$. There exists an algorithm that, given any $X \subseteq \Gamma^d$ of size $n$, outputs two Hamming medians with maximum diameter, and runs in time $O(nd)$.
\end{restatable}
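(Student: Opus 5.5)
We prove \autoref{thm:diverse-fin-exact} by combining the structural characterization in \autoref{lem:MCC-opt} with a simple coordinatewise construction.

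\emph{Characterization of medians.} For each index $i$, let $\Gamma_i := \{e \in \Gamma : |\{x \in X : x_i = e\}| = \max_{e'} |\{x \in X : x_i = e'\}|\}$ be the set of most frequent characters at index $i$. By \autoref{lem:MCC-opt}, every optimal median $w^*$ satisfies $\freq{i}{w^*} = \freq{i}{w}$ for all $i$, so $w^*_i \in \Gamma_i$ for every $i$. Conversely, any string $s$ with $s_i \in \Gamma_i$ for all $i$ has $\freq{i}{s} = \freq{i}{w}$ for every $i$. By \autoref{lem:opt-offset}, its cost is then exactly $\opt$. Hence the set of Hamming medians is exactly the product $\Gamma_1 \times \cdots \times \Gamma_d$.

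\emph{Upper bound.} Any two medians $s, s'$ can differ only at indices $i$ with $|\Gamma_i| \ge 2$, since when $|\Gamma_i| = 1$ both are forced to equal the unique majority character. Therefore
\[
H(s,s') \le |\{i : |\Gamma_i| \ge 2\}| .
\]

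\emph{Construction achieving the bound.} Define two strings $s^{(1)}, s^{(2)}$ as follows. At every index with $|\Gamma_i| = 1$, set both to the unique element of $\Gamma_i$. At every index with $|\Gamma_i| \ge 2$, set $s^{(1)}_i$ and $s^{(2)}_i$ to two distinct elements of $\Gamma_i$. Both strings lie in the product set, so both are medians. They differ at exactly the indices with $|\Gamma_i| \ge 2$, so they meet the upper bound and achieve the maximum diameter.

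\emph{Running time.} For each index, one pass over the $n$ strings computes the frequency counts; storing them in a hash map, or in an array indexed by character, gives expected or worst-case $O(n)$ time per index. A second pass over the touched characters extracts the maximum frequency and two characters attaining it. Over all $d$ indices this is $O(nd)$, and writing the two output strings costs $O(d)$.

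The only delicate point is obtaining $O(nd)$ time without a $\log|\Gamma|$ factor. We handle this by counting only the at most $n$ characters that actually occur at each index, resetting the counters afterwards by revisiting just those characters. The correctness argument itself is immediate from the two lemmas.
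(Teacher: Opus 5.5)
Your proof is correct and follows essentially the same route as the paper. The medians are exactly the strings that pick a most frequent character at every index (by \autoref{lem:MCC-opt} together with \autoref{lem:opt-offset}), so two medians can differ only at tied indices, and picking two distinct tied characters at each such index attains this bound in $O(nd)$ time. Your version is slightly more explicit than the paper's: the paper only compares an arbitrary median $s$ against the fixed median $w=\mcc(X)$, whereas you bound $H(s,s')$ for an arbitrary pair of medians directly.
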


\begin{proof}
For the Hamming metric, two exact medians such that the distance between them is maximized are straightforward to obtain. Let $T$ be the set of all indices $i \in [d]$ where $\exists \; e\neq w_i \in \Gamma$ such that $|\{x \in X : x_i=e\}|=\freq{i}{w}$. Now, consider the following string $\hat{w}$: For each $i \in T$, set $\hat{w}_i = e$, where $|\{x \in X : x_i=e\}|=\freq{i}{w}$, and for each $i \not \in T$, set $\hat{w}_i = w_i$. Since $\freq{i}{w}=\freq{i}{\hat{w}}$ for all $i \in [d]$, $\hat{w}$ is also an optimal median (by~\autoref{lem:MCC-opt}). Now, for the contradiction's sake, suppose there is an optimal median $s$ such that $H(w,s) > H(w,\hat{w})$. By~\autoref{lem:MCC-opt}, for each $i$, $\freq{i}{s} = \freq{i}{w}$, and thus each $j \in \{i:w_i \neq s_i\}$, $j \in T$ (by the construction). Hence, $H(w,s) \le H(w, \hat{w})$, leading to a contradiction. Thus, we conclude that $w, \hat{w}$ are two optimal medians with maximum diameter.

We can see that using an $O(nd)$ memory to store the number of occurrences for each character (assuming constant read and write), we can calculate the $\hat{w}$ in $O(nd)$ time.
\end{proof}

\section{Exact Algorithm for Diameter Maximization: \texorpdfstring{$(1+\epsilon)$}{(1+ϵ)}-Approximate Medians}
\label{app:diam-approx}

\begin{algorithm}
\caption{Diverse $(1+\epsilon)$-Approximate Medians Algorithm}\label{alg:diverse-finite}
\textbf{Input} A set of strings $X \subseteq \Gamma^d$, and an $\epsilon \ge 0$\\
\textbf{Output} Two strings $y,z \in \Gamma^d$\\
\begin{algorithmic}[1]
\STATE Compute $w=\mcc(X)$, and using that compute the value of $\opt$
\STATE Let $S,R=\emptyset$
\STATE Compute $\hat{w}$ as follows: If $\freq{i}{w} < n$, then set $\hat{w}_i = \arg\max_{e \in \Gamma \setminus \{w_i\}} |\{x \in X:x_i=e\}|$ (breaking ties arbitrarily); otherwise, set $\hat{w}_i=e$ for an arbitrary character $e \in \Gamma \setminus \{w_i\}$\label{lne:g-fin}
\STATE Let array $M=\{(\freq{i}{w}-\freq{i}{\hat{w}},i): i\in [d]\}$. Sort $M$ in non-decreasing order of $(\freq{i}{w}-\freq{i}{\hat{w}})$.  \label{lne:M-sort-fin}
\FOR{$i = 1 \text{ to } d$}\label{lne:for-st-fin}
    \STATE Consider the tuple $M_i$ (in the sorted $M$) and consider the index $k$ such that, $M_i=(\freq{k}{w}-\freq{k}{\hat{w}},k)$.
    \IF{$\sum_{j \in S} (\freq{j}{w}-\freq{j}{\hat{w}}) + (\freq{k}{w}-\freq{k}{\hat{w}}) \le \epsilon \opt$} 
        \STATE Add $k$ to $S$\label{lne:S-lne-fin}
    \ELSE
        \STATE Break
    \ENDIF
\ENDFOR
\FOR{$i = |S|+1 \text{ to } d$}
    \STATE Consider the tuple $M_i$ (in the sorted $M$) and consider the index $k$ such that, $M_i=(\freq{k}{w}-\freq{k}{\hat{w}},k)$.
    \IF{$\sum_{j \in R} (\freq{j}{w}-\freq{j}{\hat{w}}) + (\freq{k}{w}-\freq{k}{\hat{w}}) \le \epsilon \opt$} 
        \STATE Add $k$ to $R$\label{lne:R-lne-fin}
    \ELSE
        \STATE Break
    \ENDIF
\ENDFOR \label{lne:for-en-fin}
\STATE Construct $s$ by setting $s_i = \hat{w}_i$ for all $i \in S$, and $s_i=w_i$ for all $i \not \in S$.\label{lne:S-string-fin} 
\STATE Construct $r$ by setting $r_i = \hat{w}_i$ for all $i \in R$, and $r_i=w_i$ for all $i \not \in R$. \label{lne:R-string-fin}
\STATE \label{lne:T-set-fin}Set $T = S \cup R \cup \{k\}$ where $k$ is the index such that $M_{(|S|+|R|+1)}=(\freq{k}{w}-\freq{k}{\hat{w}},k)$ (where $M$ is the sorted array).
\STATE Take two empty strings $y,z$, and set $y \leftarrow s$, $z \leftarrow r$
\IF{$\sum_{i \in T} (\freq{i}{w}-\freq{i}{\hat{w}}) \le 2\epsilon \opt$}
    \STATE Use {\mdiff} algorithm (Algorithm~\ref{alg:min-sum-diff} in~\autoref{sec:smallest-sum-diff}) with inputs $T$ and $\hat{M}=\{\freq{i}{w}-\freq{i}{\hat{w}}: i\in [d]\}$, to find partitions of $T$, $T_1,T_2$ such that the $|\sum_{i \in T_1} (\freq{i}{w}-\freq{i}{\hat{w}})-\sum_{i \in T_2} (\freq{i}{w}-\freq{i}{\hat{w}})|$ is minimized. \label{lne:fin-sum-diff}
    \IF{$\sum_{i \in T_1} (\freq{i}{w}-\freq{i}{\hat{w}}) \le \epsilon \opt$ and $\sum_{i \in T_2} (\freq{i}{w}-\freq{i}{\hat{w}}) \le \epsilon \opt$}\label{lne:if-condition}
        \STATE Set $z$ to be the string where $z_i = \hat{w}_i$ for $i \in T_1$ and $z_i=w_i$ for $i \not \in T_1$ \label{lne:T1-string-fin}
        \STATE Set $y$ to be the string where $y_i = \hat{w}_i$ for $i \in T_2$ and $y_i=w_i$ for $i \not \in T_2$.\label{lne:T2-string-fin}
    \ENDIF
\ENDIF
\STATE Return $y,z$
\end{algorithmic}
\end{algorithm}

\noindent \textbf{Analysis of the algorithm.} Let $y^*,z^*$ be (arbitrary) optimal solutions to $\dmhm$ Problem (Problem~\ref{prob:diameter-max}). In other words, $y^*,z^*$ are two $(1+\epsilon)$-approximate medians with maximum (diversity) $H(y^*,z^*)$.

Let us first argue that there always exists a pair of "structured" strings with the same optimality guarantees; however, the sets of indices on which these two strings deviate from $w$ are disjoint. More specifically,

\begin{lemma}\label{lem:optimal-string}
There exists a pair of $(1+\epsilon)$-approximate medians $\hat{y},\hat{z} \in \Gamma^d$ such that,
\begin{enumerate}[(I).]
    \item For all $ i\in [d]$, either $\hat{y}_i=w_i$ or $\hat{z}_i=w_i$,
    \item $H(\hat{y},\hat{z})=D^*$.
\end{enumerate}
Furthermore, for $\hat{Y} := \{i: \hat{y}_i \neq w_i\}$ and $\hat{Z} := \{i: \hat{z}_i \neq w_i\}$, $\hat{Y}, \hat{Z}$ are disjoint and $|\hat{Y}|+|\hat{Z}| = D^*$.
\end{lemma}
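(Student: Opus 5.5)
Starting from an arbitrary optimal pair $y^*,z^*$ with $H(y^*,z^*)=D^*$, I would modify the two strings coordinate by coordinate. Each modification keeps both strings $(1+\epsilon)$-approximate medians and does not decrease their distance. The final pair should deviate from $w$ on disjoint sets. The tool is \autoref{lem:opt-offset}: the cost of a string $s$ equals $\opt+\sum_{i:s_i\neq w_i}(\freq{i}{w}-\freq{i}{s})$. Each term in this sum is nonnegative, and setting a coordinate back to $w_i$ makes its term zero. Hence resetting any coordinate of a string to $w_i$ never increases its cost, so approximate-median feasibility is preserved.

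I would sort the coordinates into three cases.

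\emph{Case 1:} $y^*_i=z^*_i$. This coordinate contributes nothing to the distance. I would set both characters to $w_i$; the distance is unchanged and neither cost increases.

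\emph{Case 2:} $y^*_i\neq z^*_i$ and one of them already equals $w_i$. I would leave this coordinate as it is.

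\emph{Case 3:} $y^*_i\neq z^*_i$ and both differ from $w_i$. I would reset one of the two, say $y^*_i$, to $w_i$. The coordinate still contributes $1$ to the distance, because $w_i\neq z^*_i$. The cost of $y^*$ does not increase, by the monotonicity above.

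After these changes, the new pair $\hat y,\hat z$ satisfies (I) everywhere: every coordinate has at least one of the two equal to $w_i$. The distance is still at least $H(y^*,z^*)=D^*$. Both strings remain $(1+\epsilon)$-approximate medians, so maximality of $D^*$ forces $H(\hat y,\hat z)=D^*$, which is (II).

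For the ``furthermore'' part, disjointness of $\hat Y$ and $\hat Z$ is exactly property (I). Because of disjointness, on every $i\in\hat Y\cup\hat Z$ exactly one string equals $w_i$ and the other does not, so the two strings differ there. Outside this union both strings equal $w_i$, so they agree. Hence $H(\hat y,\hat z)=|\hat Y|+|\hat Z|=D^*$.

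I expect the main point needing care to be Case 3. There it must be checked that the reset preserves both the distance contribution and the cost bound, and this follows from the nonnegativity of $\freq{i}{w}-\freq{i}{s}$, which holds because $w_i$ is a most frequent character at index $i$. The remaining steps are routine.
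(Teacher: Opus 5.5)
Your proof is correct and follows essentially the same route as the paper. Both start from an optimal pair $y^*,z^*$ and reset coordinates to $w_i$, using \autoref{lem:opt-offset} and the nonnegativity of $\freq{i}{w}-\freq{i}{s}$ to keep both strings $(1+\epsilon)$-approximate medians. The one difference is in how coordinates with $y^*_i=z^*_i\neq w_i$ are handled. The paper keeps $\hat y=y^*$, resets $z^*$ wherever $y^*_i\neq w_i$, and then needs a maximality (contradiction) argument to rule out such coordinates. Your Case~1 instead resets both strings there, so the distance is preserved coordinate by coordinate and the appeal to maximality of $D^*$ is not actually needed.
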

To prove~\autoref{lem:optimal-string} we use the strings $y^*,z^*$ and directly construct two strings $\hat{y},\hat{z}$ from $y^*,z^*$ such that  $\hat{y},\hat{z}$ satisfies the lemma. 

\begin{proof}
We provide a constructive proof of the lemma. Let us construct two strings as follows:
\[
\hat{y}= y^*, 
\]
and $\hat{z}$ by setting for each $i \in [d]$,
\begin{align*}
    \hat{z}_i = \begin{cases}
    z^*_i \text{ if }y^*_i=w_i\\
    w_i \text{ otherwise}
    \end{cases}
\end{align*}
Let $\hat{Y} := \{i: \hat{y}_i \neq w_i\}$ and $\hat{Z} := \{i: \hat{z}_i \neq w_i\}$. It is not hard to observe that by the above construction, $\hat{Y}, \hat{Z}$ are disjoint.

We first argue that $\hat{y},\hat{z}$ are $(1+\epsilon)$-approximate medians. Clearly, $\hat{y}=y^*$ is an $(1+\epsilon)$-approximate median. Now consider $\hat{z}$. By construction, observe that for any $i \in [d]$, $\hat{z}_i \neq w_i$ only if $z^*_i \neq w_i$, and thus 
\begin{equation}
    \label{eq:Z-subset}
    \hat{Z} \subseteq \{i:z^*_i \neq w_i\}.
\end{equation}
Moreover, observe that
\begin{equation}
\label{eq:Z-equal}
    \forall i\in \hat{Z},\; \freq{i}{\hat{z}}=\freq{i}{z^*}.
\end{equation}
    
Now, by~\autoref{lem:opt-offset} and the definition of $\hat{Z}$, we get that
\begin{align*}
\sum_{x \in X} H(x,\hat{z}) &= \opt+\sum_{i \in \hat{Z}} (\freq{i}{w}-\freq{i}{\hat{z}})\\
&= \opt+\sum_{i \in \hat{Z}} (\freq{i}{w}-\freq{i}{z^*})\quad\text{(By Eq.~\ref{eq:Z-equal})}\\
&\le \opt+\sum_{i: z^*_i \neq w_i} (\freq{i}{w}-\freq{i}{z^*})\quad\text{(By Eq.~\ref{eq:Z-subset})}\\
&=\sum_{x \in X} H(x,z^*) \quad\text{(By Lem.~\ref{lem:opt-offset})}\\
&\le (1+\epsilon) \opt.
\end{align*}

Next, we argue that $H(\hat{y},\hat{z}) = H(y^*,z^*)$ (recall, $D^*=H(y^*,z^*)$). In doing so, our first step would be to show that $H(y^*,z^*) = |\hat{Y}|+|\hat{Z}|$.

We now claim the following:
\begin{enumerate}
    \item \label{itm:Y} For any $i \in \hat{Y}$, $y^*_i \ne z^*_i$.
    \item \label{itm:Z} For any $i \in \hat{Z}$, $y^*_i \ne z^*_i$.
    \item \label{itm:Y-Z} For any $i \in [d] \setminus \left(\hat{Y}\cup \hat{Z}\right)$, $y^*_i = z^*_i$.
\end{enumerate}

Let us start by reasoning about the last two items since they are relatively easier to observe. To see Item~\ref{itm:Y-Z}, over the universe $[d]$, $[d] \setminus \left(\hat{Y}\cup \hat{Z}\right) = \overline{\hat{Y}} \cap \overline{\hat{Z}}$, and thus by construction
\begin{equation}
    \label{eq:Y-Z}
    y^*_i = \hat{y}_i = w_i = \hat{z}_i = z^*_i.
\end{equation}

Then, to see Item~\ref{itm:Z}, for any $i \in \hat{Z}$, by construction, we must have that $y^*_i=w_i$, and thus by Equation~\ref{eq:Z-subset}, $y^*_i \ne z^*_i$.

Now, we reason about Item~\ref{itm:Y}. Note that for any $i \in \hat{Y}$, by construction of $\hat{y}$, $y^*_i =\hat{y}_i\neq w_i$. So,
\begin{itemize}
    \item If $z^*_i = w_i$, clearly, $y^*_i \ne z^*_i$.
    \item Otherwise (i.e., $z^*_i \ne w_i$), we claim that if there exists an $i \in \hat{Y}$ such that $y^*_i = z^*_i$, we get another string $z'$ such that $z'$ is also an $(1+\epsilon)$-approximate median and $H(y^*,z') > H(y^*,z^*)$, contradicting the fact that $y^*,z^*$ maximizes $H(y^*,z^*)$ (among all pair of $(1+\epsilon)$-approximate medians). To see this, assuming the existence of an $i \in \hat{Y}$ such that $y^*_i = z^*_i$, let us construct the following string $z'$ as follows: Set $z'_i = w_i$, and for all $j \ne i$, set $z'_j = z^*_j$. Then, clearly, $H(y^*,z') = H(y^*,z^*) + 1$. Further, by~\autoref{lem:opt-offset},
\begin{align*}
\sum_{x \in X} H(x,z') &= \opt+\sum_{z'_j \neq w_j} (\freq{j}{w}-\freq{j}{z'})\\
&= \opt+\sum_{z'_j \neq w_j,j\neq i} (\freq{j}{w}-\freq{j}{z^*})+ (\freq{i}{w}-\freq{i}{z'})\\
&\le \opt+\sum_{z^*_j \neq w_j} (\freq{j}{w}-\freq{j}{z^*}) \;\;\;\;\text{(Since $z'_i=w_i$)}\\
&=\sum_{x \in X} H(x,z^*) \le (1+\epsilon) \opt
\end{align*}
    showing $z'$ is also an $(1+\epsilon)$-approximate median.
\end{itemize}
This establishes Item~\ref{itm:Y}.

Now, by definition,
\begin{align} \label{eq:opt-diverse}
H(y^*,z^*) &= \sum_{i=1}^d \ind{y^*_i \neq z^*_i} \nonumber\\
&= \sum_{i\in\hat{Y}} \ind{y^*_i \neq z^*_i} +\sum_{i\in\hat{Z}} \ind{y^*_i \neq z^*_i} +\sum_{i\in[d]\setminus (\hat{Y}\cup \hat{Z})} \ind{y^*_i \neq z^*_i}\quad \text{(Note, $\hat{Y}\cap \hat{Z}=\emptyset$)}\nonumber\\
&=|\hat{Y}|+|\hat{Z}| \quad\text{(By Items~\ref{itm:Y},~\ref{itm:Z},~\ref{itm:Y-Z})}.
\end{align}

Next, we show that $H(\hat{y},\hat{z}) = |\hat{Y}|+|\hat{Z}|$.
\begin{align}\label{eq:construct-diverse}
H(\hat{y},\hat{z}) &= \sum_{i=1}^d \ind{\hat{y}_i \neq \hat{z}_i}\nonumber\\
&= \sum_{i\in\hat{Y}} \ind{\hat{y}_i \neq \hat{z}_i}+\sum_{i\in\hat{Z}} \ind{\hat{y}_i \neq \hat{z}_i}+\sum_{i\in[d]\setminus (\hat{Y}\cup \hat{Z})} \ind{\hat{y}_i \neq \hat{z}_i}\quad\text{(Note, $\hat{Y}\cap \hat{Z}=\emptyset$)}\nonumber\\
&= \sum_{i\in\hat{Y}} \ind{\hat{y}_i \neq \hat{z}_i}+\sum_{i\in\hat{Z}} \ind{\hat{y}_i \neq \hat{z}_i}\quad\text{(By Eq.~\ref{eq:Y-Z})}\nonumber\\
&=|\hat{Y}|+|\hat{Z}|
\end{align}
where the last equality follows since for any $i \in  \hat{Y}$, by construction, $\hat{z}_i = w_i$, and for any $i\in  \hat{Z}$, again by construction, $\hat{y}_i = y^*_i = w_i$. Hence, by Equation~\ref{eq:opt-diverse} and Equation~\ref{eq:construct-diverse}, we deduce that, 
\[
H(\hat{y},\hat{z}) = H(y^*,z^*)=D^*.
\]
\end{proof}

Next, we consider the strings $s,r$ constructed at lines~\ref{lne:S-string-fin}, \ref{lne:R-string-fin} of Algorithm~\ref{alg:diverse-finite}. We show that they are approximate medians with diversity ``close'' to the maximum. We then use this result to argue that our algorithm indeed returns two approximate medians with the maximum diversity.

\begin{lemma}\label{lem:diverse-fin-diff}
Consider $s,r$ constructed in Algorithm~\ref{alg:diverse-finite}.
\begin{enumerate}[(I).]
\item Both $s,r$ are $(1+\epsilon)$-approximate medians, and
\item $H(s,r) \ge D^* - 1$. 
\end{enumerate}
\end{lemma}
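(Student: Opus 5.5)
Part (I) is immediate from \autoref{lem:opt-offset}: $s$ differs from $w$ exactly on $S$, where $s_i=\hat{w}_i\neq w_i$, so its cost is $\opt+\sum_{i\in S}(\freq{i}{w}-\freq{i}{\hat{w}})\le(1+\epsilon)\opt$ by the greedy stopping rule. The same computation with $R$ handles $r$. Since $S$ and $R$ are disjoint sets of indices and both strings agree with $w$ elsewhere, we get $H(s,r)=|S|+|R|$. So part (II) reduces to showing $|S|+|R|\ge D^*-1$.

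For this I will use \autoref{lem:optimal-string}. It gives disjoint sets $\hat Y,\hat Z$ with $|\hat Y|+|\hat Z|=D^*$, and both $\hat{y}$ and $\hat{z}$ are $(1+\epsilon)$-approximate. Write $c_i:=\freq{i}{w}-\freq{i}{\hat{w}}\ge 0$. Since $\hat{w}_i$ is a second most frequent character, any $e\neq w_i$ has $\freq{i}{w}-|\{x:x_i=e\}|\ge c_i$. Applying \autoref{lem:opt-offset} then gives $\sum_{i\in\hat Y}c_i\le\epsilon\opt$ and $\sum_{i\in\hat Z}c_i\le\epsilon\opt$. Now sort the indices by $c$, and let $a_1\le a_2\le\dots$ be the sorted values, so $S$ is the first $|S|$ positions and $R$ the next $|R|$.

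\textbf{Key exchange step.} The sets $\hat Y,\hat Z$ can be replaced, without increasing either budget sum and without changing their sizes, by two disjoint sets whose union is an initial segment of the sorted order. Replacing an element of $\hat Y\cup\hat Z$ by a cheaper unused index only decreases the corresponding sum. So we may assume $\hat Y\cup\hat Z=\{1,\dots,D^*\}$ in sorted positions, and each part has budget at most $\epsilon\opt$.

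Next, suppose for contradiction that $|S|+|R|\le D^*-2$. By maximality of the greedy, $\sum_{j\le |S|+1}a_j>\epsilon\opt$ and $\sum_{j=|S|+1}^{|S|+|R|+1}a_j>\epsilon\opt$. Both windows lie inside $\{1,\dots,|S|+|R|+1\}\subseteq\{1,\dots,D^*-1\}$ and share exactly position $|S|+1$. This gives
\[
\sum_{j\le |S|+|R|+1}a_j + a_{|S|+1} > 2\epsilon\opt .
\]
On the other hand $\sum_{j\le D^*}a_j\le 2\epsilon\opt$. Since $|S|+|R|+2\le D^*$, we get
\[
\sum_{j\le |S|+|R|+1}a_j + a_{|S|+|R|+2}\le 2\epsilon\opt .
\]
Sortedness gives $a_{|S|+|R|+2}\ge a_{|S|+1}$, so the left side of this second inequality is at least the left side of the first, which is a contradiction. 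Hence $|S|+|R|\ge D^*-1$.

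The main obstacle I expect is the exchange step. When one element is swapped for a cheaper index, I must make sure the new index is not already in the other set, and that each set's individual budget, not just their total, is preserved. I would handle this by swapping the costliest element of $\hat Y\cup\hat Z$ lying outside the prefix $\{1,\dots,D^*\}$ with the smallest unused prefix index, keeping it in whichever part it came from. The edge cases are a greedy loop that breaks immediately and $d$ being exhausted. In those cases $|S|+|R|$ already equals the largest feasible count, and the argument goes through unchanged.
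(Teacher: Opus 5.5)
Your proof is correct and follows essentially the same route as the paper. Part (I) goes through \autoref{lem:opt-offset}, and part (II) is a contradiction argument combining \autoref{lem:optimal-string}, the sorted order and the greedy stopping rule against the combined budget $2\epsilon\opt$. The paper phrases part (II) via $P=(\hat Y\cup\hat Z)\setminus(S\cup R)$, $Q=(S\cup R)\setminus(\hat Y\cup\hat Z)$ and the cheapest index $\ell\in P$, obtaining $\sum_{i\in S\cup R}c_i+2c_\ell>2\epsilon\opt$; this is your prefix-sum comparison in different form.

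The exchange step you flag as the main obstacle is unnecessary. Your contradiction only uses $\sum_{j\le D^*}a_j\le 2\epsilon\opt$, and this holds immediately: the $D^*$ smallest values cost no more than any $D^*$ indices, in particular $\hat Y\cup\hat Z$. So preserving the individual budgets never matters.
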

To prove~\autoref{lem:diverse-fin-diff} we use the strings $\hat{y},\hat{z}$ from~\autoref{lem:optimal-string} and the sets $S,R$ lines~\ref{lne:S-string-fin}, \ref{lne:R-string-fin}, and show that by construction, $s,r$ are  $(1+\epsilon)$-approximate medians and due to the maximality of $S$ and $R$ (and since $S \cup R$ is the set of indices corresponding to smallest $\freq{i}{w}-\freq{i}{\hat{w}}$ values), $H(s,r) \ge D^* - 1$. 

\begin{proof}
Let $\hat{y},\hat{z},\hat{Y},\hat{Z}$ be strings and sets referred to in~\autoref{lem:optimal-string}. Further, consider the sets $S, R$ constructed in Algorithm~\ref{alg:diverse-finite}. It follows immediately from the construction of the string $\hat{w}$ and sets $S,R$ that
\begin{align}
    &S=\{i: s_i = \hat{w}_i\} = \{i: s_i \neq w_i\} \label{eq:S-def}\\
    &R=\{i: r_i = \hat{w}_i\} = \{i: r_i \neq w_i\}. \label{eq:R-def}
\end{align}

First, by~\autoref{lem:opt-offset},
\begin{align*}
\sum_{x \in X} H(x,s) &= \opt + \sum_{i:s_i \neq w_i} (\freq{i}{w}-\freq{i}{s})\\
&= \opt + \sum_{i\in S} (\freq{i}{w}-\freq{i}{s})\\
&= \opt + \sum_{i\in S} (\freq{i}{w}-\freq{i}{\hat{w}}) \\
&\le \opt+\epsilon \opt = (1+\epsilon)\opt
\end{align*}
where the inequality follows since $\sum_{i\in S} (\freq{i}{w}-\freq{i}{\hat{w}}) \le \epsilon \opt$ by construction. Similarly, since $\sum_{i\in R} (\freq{i}{w}-\freq{i}{\hat{w}}) \le \epsilon \opt$,
\begin{align*}
\sum_{x \in X} H(x,r) &\le (1+\epsilon)\opt.
\end{align*}
Thus, both $s,r$ are $(1+\epsilon)$-approximate medians.

We now argue that $H(s,r) \ge H(y^*,z^*) - 1$ (recall, $H(y^*,z^*) = D^*$). For the sake of contradiction, assume that $H(s,r) \le H(y^*,z^*) - 2$. Note, by construction, $S \cap R = \emptyset$. Thus,
\begin{align}\label{eq:Ham-sr}
H(s,r) &= \sum_{i=1}^d \ind{s_i \neq r_i}\nonumber\\
&=\sum_{i \in S} \ind{s_i\neq r_i}+\sum_{i \in R} \ind{s_i\neq r_i}+\sum_{i \in [d]\setminus (S\cup R)} \ind{s_i\neq r_i}\nonumber \\
&=\sum_{i \in S} \ind{s_i\neq r_i}+\sum_{i \in R} \ind{s_i\neq r_i}\nonumber\\
&=|S| + |R| \; \; \text{(By Eq.~\ref{eq:S-def},~\ref{eq:R-def}, and $S\cap R=\emptyset$)}
\end{align}
where the second last equality follows since for all $i \in [d]\setminus (S\cup R)$, by construction, $s_i=r_i=w_i$. Now, by Equation~\ref{eq:Ham-sr} and~\autoref{lem:optimal-string}, our assumption ($H(s,r) \le H(y^*,z^*) - 2$) implies
\begin{equation}
    \label{eq:contradict}
    |\hat{Y}|+|\hat{Z}|\ge |S|+|R|+2.
\end{equation}

Recall, $S,R$ are disjoint (thus $|S\cup R|=|S|+|R|$) and by~\autoref{lem:optimal-string}, $\hat{Y},\hat{Z}$ are also disjoint (thus $|\hat{Y} \cup \hat{Z}|=|\hat{Y}|+|\hat{Z}|$). Let $P=\left(\hat{Y}\cup\hat{Z} \right) \setminus \left( S \cup R \right)$, and $Q=\left( S \cup R \right) \setminus \left(\hat{Y}\cup\hat{Z} \right)  $. Now, Equation~\ref{eq:contradict} immediately implies
\begin{equation}
    \label{eq:contradict-modified}
    |P|\ge |Q|+2.
\end{equation}

Let us now consider $(1+\epsilon)$-approximate medians $\hat{y},\hat{z}$ (from~\autoref{lem:optimal-string}). By applying~\autoref{lem:opt-offset}, we derive that
\begin{equation}
    \label{eq:sum-bound}
    \sum_{i:\hat{y}_i\neq w_i} (\freq{i}{w}-\freq{i}{\hat{y}})+\sum_{i:\hat{z}_i\neq w_i} (\freq{i}{w}-\freq{i}{\hat{z}}) \le 2\epsilon \opt.
\end{equation}

Next, we use Equation~\ref{eq:contradict-modified} to reach a contradiction to the above equation, refuting our initial assumption $H(s,r) \le H(y^*,z^*) - 2$.

Note that $S \cup R$ consists of the first $|S|+|R|$ indices in $M$ sorted by the non-decreasing order of $\freq{i}{w}-\freq{i}{\hat{w}}$. Thus
\begin{equation}
    \label{eq:sorted}
    \forall i\in P, \forall j\in S\cup R,; \freq{i}{w}-\freq{i}{\hat{w}} \ge \freq{j}{w}-\freq{j}{\hat{w}}.
\end{equation}
Let $\ell$ be an index in $P$ with the minimum value of $\freq{i}{w}-\freq{i}{\hat{w}}$, i.e.,
\begin{equation}
    \label{eq:l-def}
    \ell:= \arg \min_{i \in P}\freq{i}{w}-\freq{i}{\hat{w}}.
\end{equation}

Now, we get

\begin{align*}
&\sum_{i:\hat{y}_i\neq w_i} (\freq{i}{w}-\freq{i}{\hat{y}})+\sum_{i:\hat{z}_i\neq w_i} (\freq{i}{w}-\freq{i}{\hat{z}})\\
&\ge \sum_{i:\hat{y}_i\neq w_i} (\freq{i}{w}-\freq{i}{\hat{w}})+\sum_{i:\hat{z}_i\neq w_i} (\freq{i}{w}-\freq{i}{\hat{w}})\quad\text{(By definition of $\hat{w}$)}\\
&= \sum_{i\in \hat{Y}} (\freq{i}{w}-\freq{i}{\hat{w}})+\sum_{i\in\hat{Z}} (\freq{i}{w}-\freq{i}{\hat{w}})\\
&= \sum_{i\in P} (\freq{i}{w}-\freq{i}{\hat{w}}) + \sum_{i\in ({\hat{Y}} \cup {\hat{Z}}) \cap (S\cup R)} (\freq{i}{w}-\freq{i}{\hat{w}})\\
&\ge \left(\sum_{i\in Q} (\freq{i}{w}-\freq{i}{\hat{w}})+2(\freq{\ell}{w}-\freq{\ell}{\hat{w}}) \right)\quad\text{(By Eq.~\ref{eq:contradict-modified},~\ref{eq:sorted},~\ref{eq:l-def})}\\
&+\sum_{i\in (\hat{Y} \cup \hat{Z}) \cap (S\cup R)} (\freq{i}{w}-\freq{i}{\hat{w}})\\
&= \sum_{i\in S\cup R} (\freq{i}{w}-\freq{i}{\hat{w}})+2(\freq{\ell}{w}-\freq{\ell}{\hat{w}})\\
&= \left(\sum_{i\in S} (\freq{i}{w}-\freq{i}{\hat{w}})+(\freq{\ell}{w}-\freq{\ell}{\hat{w}}) \right)+\left(\sum_{i\in R} (\freq{i}{w}-\freq{i}{\hat{w}})+(\freq{\ell}{w}-\freq{\ell}{\hat{w}})\right)\\
&>2 \epsilon \opt
\end{align*}
leading to a contradiction to Equation~\ref{eq:sum-bound}. The last inequality follows due to the following reason: Note, $\ell \not \in S \cup R$. By the construction of $S$, since $S$ is a maximal sized subset such that $\sum_{i\in S} (\freq{i}{w}-\freq{i}{\hat{w}}) \le \epsilon \opt$, 
\[
\sum_{i\in S} (\freq{i}{w}-\freq{i}{\hat{w}})+(\freq{\ell}{c}-\freq{\ell}{g})>\epsilon\opt.
\]
Similarly, $\sum_{i\in R} (\freq{i}{w}-\freq{i}{\hat{w}})+(\freq{\ell}{c}-\freq{\ell}{g})>\epsilon\opt$. 

This, in turn, provides a contradiction to our initial assumption that $H(s,r) \le H(y^*,z^*) - 2$. This completes the proof of~\autoref{lem:diverse-fin-diff}.
\end{proof}

~\autoref{lem:diverse-fin-diff} essentially establishes that it suffices to consider only two cases in showing the correctness of Algorithm~\ref{alg:diverse-finite}. More specifically, next, we argue that when $H(s,r)=D^*$, Algorithm~\ref{alg:diverse-finite} indeed returns $s,r$; otherwise, it returns $y,z$ such that $H(y,z)=H(s,r)+1=D^*$.

\begin{lemma}
    \label{lem:correct-case1}
    Consider $s,r$ constructed in Algorithm~\ref{alg:diverse-finite}. If $H(s,r)=D^*$, Algorithm~\ref{alg:diverse-finite} returns $s,r$.
\end{lemma}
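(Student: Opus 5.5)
The plan is to show that whenever $H(s,r)=D^*$, the overwriting branch at lines~\ref{lne:T1-string-fin}--\ref{lne:T2-string-fin} of Algorithm~\ref{alg:diverse-finite} is never executed. The algorithm first sets $y\leftarrow s$ and $z\leftarrow r$, and these values survive unless both \textbf{if}-conditions succeed, so this suffices. The argument is by contradiction: if the branch did execute, it would produce two $(1+\epsilon)$-approximate medians at distance $D^*+1$, contradicting the maximality of $D^*$.

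\textbf{Step 1: the partition strings would be approximate medians.} Suppose both \textbf{if}-conditions hold. Then $T=S\cup R\cup\{k\}$ is partitioned into $T_1,T_2$ with $\sum_{i\in T_j}(\freq{i}{w}-\freq{i}{\hat{w}})\le\epsilon\opt$ for $j=1,2$. The string $z$ agrees with $\hat{w}$ on $T_1$ and with $w$ elsewhere. Line~\ref{lne:g-fin} guarantees $\hat{w}_i\neq w_i$ for every $i$, so $\{i:z_i\neq w_i\}=T_1$, and on $T_1$ we have $\freq{i}{z}=\freq{i}{\hat{w}}$. Hence \autoref{lem:opt-offset} gives
\[
\sum_{x\in X}H(x,z)=\opt+\sum_{i\in T_1}(\freq{i}{w}-\freq{i}{\hat{w}})\le(1+\epsilon)\opt .
\]
The same computation applies to $y$ with $T_2$ in place of $T_1$.

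\textbf{Step 2: their distance would exceed $D^*$.} Because $T_1$ and $T_2$ are disjoint, at each $i\in T$ exactly one of $y,z$ equals $\hat{w}_i$ and the other equals $w_i$; these differ. Outside $T$, both strings equal $w$. Therefore $H(y,z)=|T|$. Next I use the identity $H(s,r)=|S|+|R|$, which is Equation~\ref{eq:Ham-sr} in the proof of \autoref{lem:diverse-fin-diff}. Since the extra index $k$ lies outside $S\cup R$, we get $|T|=|S|+|R|+1=D^*+1$. This contradicts the definition of $D^*$ as the maximum distance between two $(1+\epsilon)$-approximate medians. So the branch is not taken, and the algorithm returns $s,r$.

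\textbf{Main obstacle: the degenerate case.} The point needing care is the existence of the index $k$ with $M_{(|S|+|R|+1)}=(\freq{k}{w}-\freq{k}{\hat{w}},k)$. If $|S|+|R|=d$, there is no such $k$, $T$ collapses to $S\cup R$, and the partition step could return a different pair. I would handle this by reading line~\ref{lne:T-set-fin} as skipping the partition step when $|S|+|R|=d$. Alternatively, I would note that in this case any returned pair is at distance $|T|=d=D^*$, and consists of $(1+\epsilon)$-approximate medians by Step 1. The statement would then hold in the weaker sense that the output realizes the diameter, which is all that is used for \autoref{thm:diverse-fin}. Apart from this case, the argument reduces entirely to the two identities $H(y,z)=|T|$ and $H(s,r)=|S|+|R|$.
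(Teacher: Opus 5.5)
Your proof is correct and follows essentially the same route as the paper. Both assume the partition branch fires, use \autoref{lem:opt-offset} to show the strings built from $T_1,T_2$ are $(1+\epsilon)$-approximate medians, and derive the distance $|T|>|S|+|R|=H(s,r)=D^*$, contradicting the maximality of $D^*$. Your extra remark on the degenerate case $|S|+|R|=d$, where no index exists at sorted position $|S|+|R|+1$, covers an edge case the paper's proof silently skips, and either of your resolutions is sound: in the collapsed reading the returned pair still has distance $d=D^*$.
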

\begin{proof}
Assume otherwise. Then there exists two partitions of $T$, $T_1,T_2$ such that, $\sum_{i\in T_1} (\freq{i}{w}-\freq{i}{\hat{w}}) \le \epsilon \opt$ and $\sum_{i\in T_2} (\freq{i}{w}-\freq{i}{\hat{w}}) \le \epsilon \opt$. Let us construct two strings as follows:
for each $i\in [d]$,
\begin{align*}
\hat{s}_i = 
\begin{cases}
\hat{w}_i, & i \in T_1\\
w_i, & \text{otherwise}
\end{cases}
\end{align*}
and
\begin{align*}
\hat{r}_i = 
\begin{cases}
\hat{w}_i, & i \in T_2\\
w_i, & \text{otherwise}
\end{cases}
\end{align*}
It follows from the construction, 
\begin{align}\label{eq:T1-def}
T_1 = \{i: \hat{s}_i = \hat{w}_i\}=\{i: \hat{s}_i \neq w_i\}
\end{align}
\begin{align}\label{eq:T2-def}
T_2 = \{i: \hat{r}_i = \hat{w}_i\}=\{i: \hat{r}_i \neq w_i\}
\end{align}
By~\autoref{lem:opt-offset},
\begin{align*}
\sum_{x \in X} H(x,\hat{s}) &= \opt + \sum_{i:\hat{s}_i \neq w_i} (\freq{i}{w}-\freq{i}{\hat{s}})\\
&= \opt + \sum_{i\in T_1} (\freq{i}{w}-\freq{i}{\hat{s}})\\
&= \opt + \sum_{i\in T_1} (\freq{i}{w}-\freq{i}{\hat{w}}) \\
&\le \opt+\epsilon \opt = (1+\epsilon)\opt
\end{align*}
where the inequality follows since $\sum_{i\in T_1} (\freq{i}{w}-\freq{i}{\hat{w}}) \le \epsilon \opt$ by construction. Similarly, since $\sum_{i\in T_2} (\freq{i}{w}-\freq{i}{\hat{w}}) \le \epsilon \opt$,
\begin{align*}
\sum_{x \in X} H(x,\hat{r}) &\le (1+\epsilon)\opt.
\end{align*}
Thus, both $\hat{s},\hat{r}$ are $(1+\epsilon)$-approximate medians.

We now argue that $H(\hat{s},\hat{r}) > D^*$, which contradicts the maximality of $D^*$. Therefore, we can show that there are no such strings $\hat{s},\hat{r}$. Note, by construction, $T_1 \cap T_2 = \emptyset$. Thus,
\begin{align}
H(\hat{s},\hat{r}) &= \sum_{i=1}^d \ind{\hat{s}_i \neq \hat{r}_i}\nonumber\\
&=\sum_{i \in T_1} \ind{\hat{s}_i \neq \hat{r}_i}+\sum_{i \in T_2} \ind{\hat{s}_i \neq \hat{r}_i}+\sum_{i \in [d]\setminus (T_1\cup T_2)} \ind{\hat{s}_i \neq \hat{r}_i}\nonumber \\
&=\sum_{i \in T_1} \ind{\hat{s}_i \neq \hat{r}_i}+\sum_{i \in T_2} \ind{s_i\neq r_i}\nonumber\\
&=|T_1| + |T_2| \; \; \text{(By Eq.~\ref{eq:T1-def},~\ref{eq:T2-def}, and $T_1\cap T_2=\emptyset$)}
\end{align}
where the second last equality follows since for all $i \in [d]\setminus (T_1\cup T_2)$, by construction, $\hat{s}_i=\hat{r}_i=w_i$. 

By Equation~\ref{eq:Ham-sr}, $H(s,r)=|S|+|R|$. Therefore,
\begin{align*}
H(\hat{s},\hat{r}) &=|T_1| + |T_2| \\
&=|T| \;\;\text{(Since $T_1 \cup T_2 = T$ and $T_1 \cap T_2 = \emptyset$)}\\
&> |S|+|R| = H(s,r) =D^*
\end{align*}
leading to a contradiction, and therefore, we conclude that the Algorithm~\ref{alg:diverse-finite} returns $s,r$.
\end{proof}

The~\autoref{lem:correct-case1} shows that when $H(s,r)=D^*$, Algorithm~\ref{alg:diverse-finite} indeed returns $s,r$. We next argue that if $H(s,r)=D^*-1$ the Algorithm~\ref{alg:diverse-finite} returns $y,z$ such that $H(y,z)=D^*$.
\begin{lemma}
    \label{lem:correct-case2}
    Consider $s,r$ constructed in Algorithm~\ref{alg:diverse-finite}. If $H(s,r)=D^*-1$, Algorithm~\ref{alg:diverse-finite} returns $y,z$ such that
    \begin{enumerate}[(I).]
    \item Both $y,z$ are $(1+\epsilon)$-approximate medians, and
    \item $H(y,z)=D^*$.
    \end{enumerate}
\end{lemma}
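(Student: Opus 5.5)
The plan is to show two things. First, the set $T$ formed in line~\ref{lne:T-set-fin} satisfies the test $\sum_{i\in T}(\freq{i}{w}-\freq{i}{\hat{w}})\le 2\epsilon\opt$. Second, the {\mdiff} partition $T_1,T_2$ returned in line~\ref{lne:fin-sum-diff} passes the check in line~\ref{lne:if-condition}. Once both hold, the algorithm outputs the strings of lines~\ref{lne:T1-string-fin},~\ref{lne:T2-string-fin}. Exactly as in the proof of~\autoref{lem:correct-case1}, these are $(1+\epsilon)$-approximate medians by~\autoref{lem:opt-offset}, and their distance is $|T_1|+|T_2|=|T|=|S|+|R|+1=H(s,r)+1=D^*$, using Equation~\ref{eq:Ham-sr}. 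This gives (I) and (II).

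For the key step, take the structured optimal pair $\hat{y},\hat{z}$ from~\autoref{lem:optimal-string}. Their deviation sets $\hat{Y},\hat{Z}$ are disjoint, with $|\hat{Y}|+|\hat{Z}|=D^*=|S|+|R|+1=|T|$. Since $\hat{w}_i$ is the second most frequent character, $\freq{i}{w}-\freq{i}{\hat{y}}\ge \freq{i}{w}-\freq{i}{\hat{w}}$ whenever $\hat{y}_i\neq w_i$, and similarly for $\hat{z}$. Writing $c_i:=\freq{i}{w}-\freq{i}{\hat{w}}$, this gives $\sum_{i\in\hat{Y}}c_i\le\epsilon\opt$ and $\sum_{i\in\hat{Z}}c_i\le\epsilon\opt$.

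Now use that $T$ consists of the first $|T|$ indices of the sorted order, so it minimizes the $c$-sum among all sets of size $|T|$. Replace elements of $\hat{Y}\cup\hat{Z}$ outside $T$ by elements of $T$ outside $\hat{Y}\cup\hat{Z}$, one at a time. Each swap does not increase $c$, and the two sides being exchanged have equal size since $|\hat{Y}\cup\hat{Z}|=|T|$. Keep each replaced element in whichever part ($\hat{Y}$ or $\hat{Z}$) it came from. This produces a partition $T=Y'\cup Z'$ with $\sum_{Y'}c_i\le\sum_{\hat{Y}}c_i\le\epsilon\opt$ and $\sum_{Z'}c_i\le\epsilon\opt$. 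Concretely, I would match $P=(\hat{Y}\cup\hat{Z})\setminus T$ bijectively with $Q=T\setminus(\hat{Y}\cup\hat{Z})$ and use $c_q\le c_p$ for every $q\in Q$, $p\in P$, which holds by sortedness. Consequently $\sum_T c_i\le 2\epsilon\opt$, so the outer condition holds.

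It remains to show that the {\mdiff} optimum passes the check. Let $A=\sum_T c_i\le 2\epsilon\opt$. The partition $(Y',Z')$ has imbalance $|\sum_{Y'}-\sum_{Z'}|$ with $\max(\sum_{Y'},\sum_{Z'})\le\epsilon\opt$. The optimal partition $T_1,T_2$ has imbalance no larger, so its larger side is $(A+\text{imbalance})/2\le (A+|\sum_{Y'}-\sum_{Z'}|)/2=\max(\sum_{Y'},\sum_{Z'})\le\epsilon\opt$. Hence both sides are at most $\epsilon\opt$ and the condition in line~\ref{lne:if-condition} holds.

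I expect the main obstacle to be the exchange step, specifically verifying that the swap preserves both budgets simultaneously. Keeping each replacement in the part it came from handles this, because each individual swap can only lower that part's sum. I also need to check that $T$ is well defined, i.e.\ that $|S|+|R|+1\le d$. This follows from $D^*=|S|+|R|+1\le d$.
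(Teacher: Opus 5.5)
Your proposal is correct and follows essentially the same route as the paper: you build a feasible partition of $T$ from the structured pair $\hat y,\hat z$ of Lemma~\ref{lem:optimal-string} by exchanging $P=(\hat Y\cup\hat Z)\setminus T$ with $Q=T\setminus(\hat Y\cup\hat Z)$ via sortedness (the paper's $\hat T_1,\hat T_2$), and then transfer the $\epsilon\opt$ bound to the \mdiff{} optimum. Your identity that the larger side of any partition equals $(A+\text{imbalance})/2$ is a cleaner replacement for the paper's contradiction argument. You also explicitly verify two points the paper leaves implicit: the outer test $\sum_{i\in T}(\freq{i}{w}-\freq{i}{\hat{w}})\le 2\epsilon\opt$, and that $T$ is well defined.
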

To prove~\autoref{lem:correct-case2}, we show that if $H(s,r)=D^*-1$, then the set $T$ derived in line~\ref{lne:T-set-fin} of Algorithm~\ref{alg:diverse-finite} can be partitioned such that they satisfy the conditions of the if statement in line~\ref{lne:if-condition} and that the Algorithm~\ref{alg:diverse-finite} finds such a partitioning and therefore the strings $y,z$ returned by the algorithm are $(1+\epsilon)$-approximate medians and $H(y,z)=D^*$. 

\begin{proof}
To prove~\autoref{lem:correct-case2}, we will first show that if there exists $y^*,z^*$ such that, $H(y^*,z^*)=D^*$, then there is a partitioning of $T$, $T_1,T_2$ such that the $\sum_{i \in T_1} \left(\freq{i}{w}-\freq{i}{\hat{w}}\right) \le \epsilon \opt$ and $\sum_{i \in T_2} \left(\freq{i}{w}-\freq{i}{\hat{w}}\right) \le \epsilon \opt$. We will then show that if there is such a partitioning of $T$, Algorithm~\ref{alg:diverse-finite} finds two strings $y,z$ such that $y,z$ are $(1+\epsilon)$-approximate medians, and $H(y,z)=D^*$. 

Let us consider $(1+\epsilon)$-approximate medians $\hat{y},\hat{z}$ and the sets $\hat{Y},\hat{Z}$ (from~\autoref{lem:optimal-string}).  By applying~\autoref{lem:opt-offset}, we derive that,
\begin{align*}
\sum_{i:\hat{y}_i \neq w_i} \left(\freq{i}{w}-\freq{i}{\hat{y}}\right) \le \epsilon \opt
\end{align*}
and
\begin{align*}
\sum_{i:\hat{z}_i \neq w_i} \left(\freq{i}{w}-\freq{i}{\hat{z}}\right) \le \epsilon \opt
\end{align*}

By~\autoref{lem:optimal-string}, $\hat{Y},\hat{Z}$ are disjoint (thus $|\hat{Y} \cup \hat{Z}|=|\hat{Y}|+|\hat{Z}|$). Let $P_1=\hat{Y} \setminus T$, $P_2=\hat{Z} \setminus T$ and $Q=T \setminus \left(\hat{Y}\cup\hat{Z} \right)$. Since $\hat{Y}, \hat{Z}$ are disjoint, $P_1, P_2$ are also disjoint. Therefore, $|\left(\hat{Y} \cup \hat{Z}\right)\setminus T| = |\hat{Y} \setminus T|+ |\hat{Z} \setminus T|=|P_1| +|P_2|$.

Note that, $D^* = H(s,r)+1 = |S|+|R|+1= |T|$
(where the last equality comes from the fact that $T$ is constructed by adding one element to $S \cup R$). By~\autoref{lem:optimal-string}, $H(\hat{y},\hat{z})=|\hat{Y}|+|\hat{Z}|$ and since $|T|=D^*=H(\hat{y},\hat{z})$, $|T|=|\hat{Y}|+|\hat{Z}|$. Therefore we can see that, $|P_1| +|P_2|=|Q|$.

Note that $T$ consists of the first $|T|$ indices of $M$ sorted in the non-decreasing order of $\freq{i}{w}-\freq{i}{\hat{w}}$. Thus
\begin{equation}
    \label{eq:sorted-T}
    \forall i\in P_1\cup P_2, \forall j\in T,; \freq{i}{w}-\freq{i}{\hat{w}} \ge \freq{j}{w}-\freq{j}{\hat{w}}.
\end{equation}

Let $Q_1, Q_2$ be two paritions of $Q$ such that $|Q_1|=|P_1|$ and $|Q_2|=|P_2|$. Let $\hat{T}_1=Q_1 \cup \left(T \cap \hat{Y}\right)$ and $\hat{T}_2=Q_2 \cup \left(T \cap \hat{Z}\right)$. Note that $Q_1,Q_2 \subseteq T$.

Using Equation~\ref{eq:sorted-T},
\begin{align*}
&\sum_{i \in \hat{T}_1} \left(\freq{i}{w}-\freq{i}{\hat{w}}\right) \\
&= \sum_{i \in Q_1 \cup \left(T \cap \hat{Y}\right)} \left(\freq{i}{w}-\freq{i}{\hat{w}}\right)\\
&=\sum_{i \in Q_1} \left(\freq{i}{w}-\freq{i}{\hat{w}}\right)+\sum_{i \in \left(T \cap \hat{Y}\right)} \left(\freq{i}{w}-\freq{i}{\hat{w}}\right)\\
&\le \sum_{i \in P_1} \left(\freq{i}{w}-\freq{i}{\hat{w}}\right)+\sum_{i \in \left(T \cap \hat{Y}\right)} \left(\freq{i}{w}-\freq{i}{\hat{w}}\right)\;\;\text{(Since $|P_1|=|Q_1|$)}\\
&= \sum_{i \in \hat{Y}} \left(\freq{i}{w}-\freq{i}{\hat{w}}\right)\\
&\le  \sum_{i \in \hat{Y}} \left(\freq{i}{w}-\freq{i}{\hat{y}}\right)\;\;\text{(From the definitions of $\hat{Y}$ and $\hat{w}$)}\\
& \le \epsilon \opt
\end{align*}
Similarly,
\begin{align*}
\sum_{i \in \hat{T}_2} \left(\freq{i}{w}-\freq{i}{\hat{w}}\right) \le \epsilon \opt
\end{align*}

Note that since $\hat{Y},\hat{Z}$ are disjoint and $Q_1,Q_2$ are disjoint, $\hat{T}_1,\hat{T}_2$ are partitions of $T$. 

Next, given the existence of $\hat{T}_1,\hat{T}_2$, we will prove that the Algorithm~\ref{alg:diverse-finite} returns two strings $y,z$ such that $y,z$ are $(1+\epsilon)$-approximate medians and $H(z,y)=D^*$.

Let $T_1, T_2$ be the partitions given by Algorithm~\ref{alg:diverse-finite}. If there exists $\hat{T}_1$ and $\hat{T}_1$ such that $\sum_{i \in \hat{T}_1} (\freq{i}{w}-\freq{i}{\hat{w}}) \le \epsilon \opt$ and $\sum_{i \in \hat{T}_2} (\freq{i}{w}-\freq{i}{\hat{w}}) \le \epsilon \opt$, then we can show that $\sum_{i \in T_1} (\freq{i}{w}-\freq{i}{\hat{w}}) \le \epsilon \opt$ and $\sum_{i \in T_2} (\freq{i}{w}-\freq{i}{\hat{w}}) \le \epsilon \opt$.  

Assume otherwise. Without loss of generality assume for $T_1$, $\sum_{i \in T_1} (\freq{i}{w}-\freq{i}{\hat{w}}) > \epsilon \opt$. Also, without loss of generality, assume $\sum_{i \in \hat{T}_1} (\freq{i}{w}-\freq{i}{\hat{w}}) \ge \sum_{i \in \hat{T}_2} (\freq{i}{w}-\freq{i}{\hat{w}})$. Then,
\begin{align*}
&\sum_{i \in T_1} (\freq{i}{w}-\freq{i}{\hat{w}})-\sum_{i \in T_2} (\freq{i}{w}-\freq{i}{\hat{w}})\\
&= 2\sum_{i \in T_1} (\freq{i}{w}-\freq{i}{\hat{w}})-\left(\sum_{i \in T_1} (\freq{i}{w}-\freq{i}{\hat{w}})+\sum_{i \in T_2} (\freq{i}{w}-\freq{i}{\hat{w}})\right) \\
&= 2\sum_{i \in T_1} (\freq{i}{w}-\freq{i}{\hat{w}})-\sum_{i \in T} (\freq{i}{w}-\freq{i}{\hat{w}})\\
& > 2\sum_{i \in \hat{T}_1} (\freq{i}{w}-\freq{i}{\hat{w}}) -\sum_{i \in T} (\freq{i}{w}-\freq{i}{\hat{w}})\\
& = 2\sum_{i \in \hat{T}_1} (\freq{i}{w}-\freq{i}{\hat{w}}) -\left(\sum_{i \in \hat{T}_1} (\freq{i}{w}-\freq{i}{\hat{w}})+\sum_{i \in \hat{T}_2} (\freq{i}{w}-\freq{i}{\hat{w}})\right)\\
&=\sum_{i \in \hat{T}_1} (\freq{i}{w}-\freq{i}{\hat{w}})-\sum_{i \in \hat{T}_2} (\freq{i}{w}-\freq{i}{\hat{w}})
\end{align*}
However, since $T_1,T_2$ are the partitions that minimize $|\sum_{i \in T_1} (\freq{i}{w}-\freq{i}{\hat{w}})-\sum_{i \in T_2} (\freq{i}{w}-\freq{i}{\hat{w}})|$, this is contradiction and therefore, $\sum_{i \in T_1} (\freq{i}{w}-\freq{i}{\hat{w}}) \le \epsilon \opt$ and $\sum_{i \in T_2} (\freq{i}{w}-\freq{i}{\hat{w}}) \le \epsilon \opt$.

Since there exist partitions of $T$, $T_1,T_2$ such that, $\sum_{i \in T_1} (\freq{i}{w}-\freq{i}{\hat{w}}) \le \epsilon \opt$ and $\sum_{i \in T_2} (\freq{i}{w}-\freq{i}{\hat{w}}) \le \epsilon \opt$, the Algorithm~\ref{alg:diverse-finite} returns string $y,z$ such that for $i \in [d]$, 
\begin{align*}
y_i = 
\begin{cases}
\hat{w}_i, & i\in T_1\\
w_i, & \text{otherwise}
\end{cases}
\end{align*}
and 
\begin{align*}
z_i = 
\begin{cases}
\hat{w}_i, & i\in T_2\\
w_i, & \text{otherwise}
\end{cases}
\end{align*}
Therefore, 
\begin{align}\label{eq:T1-def-2}
T_1 = \{i: y_i = \hat{w}_i\}=\{i: y_i \neq w_i\}
\end{align}
\begin{align}\label{eq:T2-def-2}
T_2 = \{i: z_i = \hat{w}_i\}=\{i: z_i \neq w_i\}
\end{align}
By~\autoref{lem:opt-offset},
\begin{align*}
\sum_{x \in X} H(x,y) &= \opt + \sum_{i:y_i \neq w_i} (\freq{i}{w}-\freq{i}{y})\\
&= \opt + \sum_{i\in T_1} (\freq{i}{w}-\freq{i}{y})\\
&= \opt + \sum_{i\in T_1} (\freq{i}{w}-\freq{i}{\hat{w}}) \\
&\le \opt+\epsilon \opt = (1+\epsilon)\opt
\end{align*}
Similarly, $\sum_{x \in X} H(x,z) \le (1+\epsilon) \opt$.

We now argue that $H(z,y) = D^*$. Note, by construction, $T_1 \cap T_2 = \emptyset$. Thus,
\begin{align}
H(y,z) &= \sum_{i=1}^d \ind{y_i \neq z_i}\nonumber\\
&=\sum_{i \in T_1} \ind{y_i \neq z_i}+\sum_{i \in T_2} \ind{y_i \neq z_i} +\sum_{i \in [d]\setminus (T_1\cup T_2)} \ind{y_i \neq z_i}\nonumber \\
&=\sum_{i \in T_1} \ind{y_i \neq z_i}+\sum_{i \in T_2} \ind{y_i \neq z_i}\nonumber\\
&=|T_1| + |T_2| \; \; \text{(By Eq.~\ref{eq:T1-def-2},~\ref{eq:T2-def-2}, and $T_1\cap T_2=\emptyset$)}\\
&=|T|=D^*
\end{align}
Therefore, $y,z$ are $(1+\epsilon)$-approximate medians and $H(y,z)=D^*$ as desired.
\end{proof}

\begin{proof}[Proof of~\autoref{thm:diverse-fin}]

From~\autoref{lem:correct-case1} and~\autoref{lem:correct-case2}, we get that Algorithm~\ref{alg:diverse-finite} outputs strings $y,z$ such that $y,z$ are $(1+\epsilon)$-approximate medians and $H(y,z)=D^*$. Now, all we have to do is argue that the algorithm takes polynomial time. 

To analyze the time complexity of Algorithm~\ref{alg:diverse-finite}, we can consider the main steps in the algorithm. Starting with line~\ref{lne:g-fin}, where we find $\hat{w}$, we get that this involves checking all indices $i \in [d]$ and all characters in the $i$th index of the strings in $X$, and counting the number of occurrences. We can see that using an $O(nd)$ memory to store the number of occurrences for each character (assuming constant read and write), we can calculate the $\hat{w}$ in $O(nd)$ time. Next, in the line~\ref{lne:M-sort-fin}, we also calculate calculates $\freq{i}{w}-\freq{i}{\hat{w}}$ for all $i$ and this also takes $O(nd)$ time and then sorting $M$ takes $O(d\log d)$ time. The lines from line~\ref{lne:for-st-fin} to line~\ref{lne:for-en-fin} are two simple for loops and, therefore, only take $O(d)$ time. And the lines~\ref{lne:S-string-fin} and lines~\ref{lne:R-string-fin} also take only $O(d)$ time since they simply construct strings of size $d$. Now, all we need to argue about is the time complexity of finding $\mdiff$ (Algorithm~\ref{alg:min-sum-diff}). From~\autoref{lem:subset-closest} we get that this takes $O(d \cdot \text{target})$ where the target is $\le \left \lfloor \frac{\lfloor 2\epsilon \opt \rfloor}{2}\right \rfloor \le \left \lfloor \frac{\lfloor 2\epsilon n \rfloor}{2}\right \rfloor$ (since $\sum_{i \in T} \freq{i}{w}-\freq{i}{\hat{w}} \le 2\epsilon \opt$ and since $\opt \le n$). Therefore, we can see that line~\ref{lne:fin-sum-diff} takes $O(\epsilon n d)$. Also, the construction of the strings $y,z$ takes  $O(d)$ time. Therefore, the overall time complexity is $O \left((1+\epsilon)nd+d\log d\right)$. 
\end{proof}

\subsection{Finding a Partition with Smallest Sum Difference}\label{sec:smallest-sum-diff}

In this section, we will introduce the Algorithm~\ref{alg:min-sum-diff}, which is used in the Algorithm~\ref{alg:diverse-finite} to derive the diverse medians. In Algorithm~\ref{alg:min-sum-diff}, given an index set $T$ and a set of elements $M$, we use dynamic programming and backtracking to find a partitioning of a given index set $T$ such that the sum difference, i.e., for partitions $W,U$ of $T$, $|\sum_{i \in W} M_i-\sum_{i \in U} M_i|$, is minimized. We also keep track of how each sum in our dynamic programming table is achieved and use backtracking to find the correct partition. 

\begin{algorithm}
\caption{Minimum sum difference paritioning}\label{alg:min-sum-diff}
{\bf Input:} An array $T$ of size $t$ ($t \le d$) and set $M$ which are values corresponding to $i \in T$.\\
\begin{algorithmic}[1]
\STATE Set $\text{total} = \sum_{i\in R} M_i$ and $\text{target}= \left\lfloor \frac{\text{total}}{2}\right\rfloor$
\STATE Set $DP_{\text{tab}}$ be a table of size $(t+1) \times (\text{target}+1)$ and let all $DP_{\text{tab}}[i][j]=\text{False}$\\\label{lne:dp-st}
\STATE Set $P_{\text{tab}}$ be a table of size $(t+1) \times (\text{target}+1)$ and let all $P_{\text{tab}}[i][j]=-1$\\
\STATE Set $DP_{\text{tab}}[i][0]=\text{True}$ for all $i \in [t+1]$
\FOR{$i$ from $1$ to $n$}
    \FOR{$j$ from $1$ to $\text{target}$}
        \IF{$j \ge M_{T_i}$ and $DP_{\text{tab}}[i-1][j-M_{T_i}]=\text{True}$}
            \STATE Set $DP_{\text{tab}}[i][j]=\text{True}$ and $P_{\text{tab}}[i][j]=j-M_{T_i}$
        \ELSIF{$DP_{\text{tab}}[i-1][j]=\text{True}$}
            \STATE Set $DP_{\text{tab}}[i][j]=\text{True}$ and $P_{\text{tab}}[i][j]=j$
        \ELSE
            \STATE Set $DP_{\text{tab}}[i][j]=\text{False}$
        \ENDIF
    \ENDFOR\label{lne:dp-en}
\ENDFOR
\FOR{$j$ from $\text{target}$ to $0$}\label{lne:find-target}
    \IF{$DP_{\text{tab}}[n][j]=\text{True}$}
         \STATE Set $\text{sumTarget}=j$ and break the loop.
    \ENDIF
\ENDFOR
\STATE Set $W=[\cdot]$ and $i=n$ and $j=\text{sumTarget}$
\WHILE{$j \neq 0$}
    \IF{$P_{\text{tab}}[i][j]\neq j$}
        \STATE Append $T_i$ to $W$
    \ENDIF
    \STATE Set $j = P_{\text{tab}}[i][j]$ and $i=i-1$
\ENDWHILE
\STATE Return $W,R\setminus W$
\end{algorithmic}
\end{algorithm}

Consider the Algorithm~\ref{alg:min-sum-diff}. We will show that the algorithm runs in $O(t \cdot \text{target})$ time (where $\text{target}$ is $\sum_{i\in R} M_i/2$) and finds a paritioning $(W,U)$ of $T$ such that $|\sum_{i\in W} M_i-\sum_{i\in U} M_i|$ is minimized.

\begin{lemma}\label{lem:subset-closest}
Given an array of indices $T$ and the array $M$, where $|T|=t$, the Algorithm~\ref{alg:min-sum-diff} finds two paritions $(W,U)$ of $R$ such that $|\sum_{i\in W} M_i-\sum_{i\in U} M_i|$ is minimized. The algorithm finds this in $O(t \cdot \text{target})$ and it uses $O(t \cdot \text{target})$ memory.
\end{lemma}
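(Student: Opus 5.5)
The proof is the standard subset-sum argument. I will first reduce the \mdiff{} objective to a one-sided subset-sum question, then prove the DP invariant, then argue the backtracking, and finally count the table cost. Write $\text{total}=\sum_{i\in T} M_i$. For any partition $(W,U)$ of $T$ we have
\[
\left|\sum_{i\in W} M_i-\sum_{i\in U} M_i\right| = \left|\text{total}-2\sum_{i\in W} M_i\right|.
\]
Swapping $W$ and $U$ does not change this value, so we may assume $\sum_{i\in W} M_i\le \text{total}/2$, that is, $\sum_{i \in W} M_i \le \text{target}$. The objective is then $\text{total}-2\sum_{i\in W}M_i$, which is decreasing in $\sum_{i\in W}M_i$. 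Hence an optimal partition is obtained by taking $W$ to be a subset of $T$ whose sum is the largest attainable value not exceeding $\text{target}$. All $M_i=\freq{i}{w}-\freq{i}{\hat{w}}$ are nonnegative integers, so only the integer sums $0,\dots,\text{target}$ matter.

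Next I will prove, by induction on $i$, the invariant: $DP_{\text{tab}}[i][j]=\text{True}$ if and only if some subset of the first $i$ elements $T_1,\dots,T_i$ sums exactly to $j$. The base case is row $0$, where only $j=0$ is true (the empty set). For the inductive step, a subset of the first $i$ elements summing to $j$ either contains $T_i$, in which case its remainder sums to $j-M_{T_i}$ using the first $i-1$ elements, or it does not, in which case it sums to $j$ using the first $i-1$ elements. These are exactly the two branches of the update.

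Given the invariant, the downward scan at line~\ref{lne:find-target} returns the largest achievable sum $\text{sumTarget}\le\text{target}$, and $j=0$ always succeeds, so the scan terminates. For the backtracking, I will show by induction that $P_{\text{tab}}[i][j]$ records a valid predecessor: whenever $DP_{\text{tab}}[i][j]$ is true, $DP_{\text{tab}}[i-1][P_{\text{tab}}[i][j]]$ is true, and $j-P_{\text{tab}}[i][j]\in\{0,M_{T_i}\}$. Following these pointers from $(t,\text{sumTarget})$ therefore collects a set $W$ with $\sum_{i\in W} M_i=\text{sumTarget}$. This yields the optimal partition $(W,T\setminus W)$. (The pseudocode writes $R$ and $n$; I will read these as $T$ and $t$.)

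The main obstacle is a subtle gap in the backtracking. If some $M_{T_i}=0$, taking the element does not change $j$, so the test $P_{\text{tab}}[i][j]\neq j$ cannot tell the two branches apart. Moreover, the loop stops once $j=0$, so remaining zero-weight elements are left out of $W$. This is harmless: zero-weight elements do not affect either sum, and they land in $T\setminus W$ either way, so optimality is preserved. I would state this explicitly.

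Complexity is routine. Each of the two tables has $(t+1)(\text{target}+1)$ entries, and each entry is filled in $O(1)$ time. The scan and the backtracking each take $O(t+\text{target})$ time. This gives $O(t\cdot\text{target})$ time and memory, as claimed in \autoref{lem:subset-closest}.
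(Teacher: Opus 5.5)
Your proposal is correct and takes essentially the same route as the paper. Both arguments use the subset-sum DP recurrence, a downward scan for the largest achievable $\text{sumTarget} \le \text{target}$, and pointer-based backtracking; both then observe that the resulting difference $|\text{total} - 2\,\text{sumTarget}|$ is minimal and bound the time and memory by the table size $O(t\cdot\text{target})$. Your write-up is more careful than the paper's in three places: you prove the DP invariant by explicit induction, you give the orientation argument that reduces to $\sum_{i\in W} M_i \le \text{target}$, and you note the harmless zero-weight case in backtracking.
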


We can prove~\autoref{lem:subset-closest} as follows,
\begin{proof}
We will first establish the correctness of our algorithm. Note that the goal of the algorithm is to find a subset $W$ of $R$ such that $\sum_{i \in W} M_i$ is close to the $\text{target}$. We can first consider the recursive function of the dynamic programming table, $DP_{\text{tab}}[i][j]$. Assume we are at the index $i$ of $T_i$ and our goal is to find a subset of $\{R_k: k\le i\}$ such that the subset sums to $j$. Depending on the value of $M_{T_i}$ and the current target $j$, we have a couple of choices. If $j \ge M_{T_i}$, we can check if $T_i$ can be used to make $j$. For this to be possible we need $DP_{\text{tab}}[i-1][j-M_{T_i}] = \text{True}$, i.e., there is a way to make $j-M_{T_i}$ using a subset from $\{R_k: k\le i-1\}$. If this is the case, we can see that the previous target used to reach $j$ is $j-M_{T_i}$. We set $P_{\text{tab}}[i][j]=j-M_{T_i}$ which indicates we need to make $j-M_{T_i}$ using a subset of $\{R_k: k\le i-1\}$. If $j < M_{T_i}$ or $DP_{\text{tab}}[i-1][j-M_{T_i}] = \text{False}$, only way we can sum to $j$ using a subset of elements from $\{R_k: k\le i\}$ is to sum to $j$ using a subset of elements from $\{R_k: k\le i-1\}$. In this case, the previous target used to reach $j$ is $j$. We set $P_{\text{tab}}[i][j]=j$ which indicates we need to sum to $j$ using a subset of $\{R_k: k\le i-1\}$. If neither case is true, then we can see that there is no way to sum to $j$ using a subset of $\{R_k: k\le i\}$.

We can see that the dynmaic programing function, $\text{DP}(i,j)$ is, $\text{DP}(i,j) = \text{True}$ if $j-M_{T_i} \ge 0$ and $\text{DP}(i-1,j-M_{T_i})=\text{True}$, or $\text{DP}(i,j)=\text{True}$ and $\text{False}$ otherwise. We can see that lines~\ref{lne:dp-st} and \ref{lne:dp-en} capture this dynamic programming formulation. Since the algorithm only does lookups with the nested for loops, we only take $O(t\cdot\text{target})$ time to tabulate the dynamic programming table $DP_{\text{tab}}$. Using the for loop starting from line~\ref{lne:find-target}, we can get the largest value $\le \text{target}$ ($\text{sumTarget}$) such that there is a partition of $R$ that sums to $\text{sumTarget}$. Since there is a partition that sums to $\text{sumTarget}$, we can consider the while loop and argue that the while loop constructs the desired partition. In the while loop, we start with $i=n$ and $j=\text{sumTarget}$ and if $P_{\text{tab}}[i][j]=j$ we move onto $P_{\text{tab}}[i-1][j]$. If $P_{\text{tab}}[i][j]\neq j$ we know that $P_{\text{tab}}[i][j]= j-M_{T_i}$ which means we have used $T_i$ in the sum and thereofre we add it to $W$ and move onto $P_{\text{tab}}[i-1][j-M_{T_i}]$. 

Note that since $\text{sumTarget}$ is achievable, the while loop will terminate with $j=0$ and the set $W$ is a set such that $\sum_{i \in W} M_i = \text{sumTarget}$. We get that for $U= R\setminus W$,$\sum_{i \in U} M_i = \text{total}-\text{sumTarget}$. We can also see that $|\sum_{i \in W} M_i-\sum_{i \in U} M_i| =|\text{total}-2\text{sumTarget}|$ and since the $\text{sumTarget}$ is the closest to the $\text{target} = \lfloor \frac{\text{total}}{2} \rfloor$, we get that $W,U$ are the two partitions with minimum sum difference.

Note that the $DP$ in the Algorithm~\ref{alg:min-sum-diff} takes $O(t\cdot \text{target})$, the while loop takes $O(\text{sumTarget})$ time and the for loop staring from line~\ref{lne:find-target} takes $O(\text{target})$ time. So the overall time complexity is $O(t\cdot \text{target})$. We can also see that since $DP_{\text{tab}},P_{\text{tab}}$ are of size $O(t\cdot \text{target})$ memory complexity is also $O(t\cdot \text{target})$.
\end{proof}

\section{Exact Algorithm for Sum Dispersion: \texorpdfstring{$k$}{k} Hamming medians}
\label{app:sum-exact}

\sumdispersionkexact*

\begin{proof}

Let $T$ be the set of all indices $i \in [d]$ where $\exists \; e\neq w_i \in \Gamma$ such that $|\{x \in X : x_i=e\}|=\freq{i}{w}$. For each index $i \in T$, let $\Gamma_i = \{e \in \Gamma \text{ such that } |\{x \in X : x_i=e\}|=\freq{i}{w}\}$. We can construct $k$ strings as follows. 

\begin{algorithm}[!htbp]
\caption{Sum Dispersion Exact Median ($\sdelg$)}
\label{alg:sum-exact}
\begin{algorithmic}[1]
\REQUIRE Strings $X$ and parameter $k$
\STATE Set $Y$ to be $k$ copies of $w$. Let $Y = \{y^{(1)},y^{(2)},\dots,y^{(k)}\}$
\STATE Let $T$ be the set of all indices $i \in [d]$ where $\exists \; e\neq w_i \in \Gamma$ such that $|\{x \in X : x_i=e\}|=\freq{i}{w}$. 
\STATE For each index $i \in T$, let $\Gamma_i = \{e \in \Gamma \text{ such that } |\{x \in X : x_i=e\}|=\freq{i}{w}\}$. Let $\Gamma_i = \{a_i,a_2,\dots,a_{|\Gamma_i|}\}$
\FOR{$i \in T$}
\FOR{$j$ from $1$ to $|\Gamma_i|$}
\FOR{$\ell$ from $\left \lfloor \frac{k}{|\Gamma_i|} \right \rfloor\cdot (j-1)+1$ to $\left \lfloor \frac{k}{|\Gamma_i|} \right \rfloor \cdot j$} 
\STATE Set $y^{(\ell)}_i = a_j$
\ENDFOR
\ENDFOR
\FOR{$\ell$ from $\left \lfloor \frac{k}{|\Gamma_i|} \right \rfloor\cdot |\Gamma_i|+1$ to $k$}
\STATE Let $j = \ell-\left \lfloor \frac{k}{|\Gamma_i|} \right \rfloor\cdot |\Gamma_i|$ and set $y^{(\ell)}_i = a_{j}$
\ENDFOR
\ENDFOR
\RETURN $Y$
\end{algorithmic}
\end{algorithm}

We will show that the $\sdelg$ Algorithm (Algorithm~\ref{alg:sum-exact}) ends up giving $k$ strings that maximize sum dispersion. Before formally establishing the optimality of the procedure, we first observe the following:
\begin{remark}
Let $Y$ be a set of $k$ strings and let $\ell_{a,i} = \sum_{y \in Y} \ind{y_i = a}$ for all $i \in [d]$ and $a \in \Gamma$. Note that,
\begin{align*}
\textsc{sumDp}(Y) &= \frac{1}{2}\sum_{\hat{y},\tilde{y}\in Y} H(\hat{y},\tilde{y})\\
&= \frac{1}{2}\sum_{\hat{y},\tilde{y}\in Y} \sum_{i=1}^d \ind{{\hat{y}}_i \neq {\tilde{y}}_i}\\
&=\sum_{i=1}^d \frac{1}{2}\sum_{\hat{y},\tilde{y}\in Y} \ind{{\hat{y}}_i \neq {\tilde{y}}_i}\\
\end{align*}
Note that for any index $i$ 
\begin{align*}
\sum_{\hat{y},\tilde{y}\in Y} \ind{{\hat{y}}_i \neq {\tilde{y}}_i} &=\sum_{\hat{y},\tilde{y}\in Y} \sum_{a \in \Gamma} \ind{\hat{y}_i =a \text{ and } \tilde{y}_i \neq a} \\
&= \sum_{a \in \Gamma} \ell_{a,i} (k-\ell_{a,i})
\end{align*}
and therefore,
\begin{align*}
\textsc{sumDp}(Y) &= \sum_{i=1}^d \left(\frac{1}{2}\sum_{a \in \Gamma} \ell_{a,i} (k-\ell_{a,i})\right)
\end{align*}
\end{remark}
Note that for all $i \not \in T$, $\ell_{w_i,i} = k$ and for all $i \in T$ and $a \not \in \Gamma_i$, $\ell_{a,i}=0$ (since otherwise it would not be a median string). Therefore, we get,
\begin{align*}
\textsc{sumDp}(Y) &= \sum_{i \in T}\left(\frac{1}{2}\sum_{a \in \Gamma_i} \ell_{a,i} (k-\ell_{a,i})\right)\\
&= \sum_{i \in T}\frac{1}{2}\left(k^2-\sum_{a \in \Gamma_i} \ell^2_{a,i}\right)
\end{align*}
Note that therefore, maximizing the sum dispersion is the same as minimizing the $\sum_{i \in T}\sum_{a \in \Gamma_i} \ell^2_{a,i}$ which imvolves minimizing the $\sum_{a \in \Gamma_i} \ell^2_{a,i}$ (because $\ell_{a,i}$ are independent across $i$). Now consider any $i \in T$ and the optimal $\ell^*_{a,i}$ values that minimize the $\sum_{a \in \Gamma_i} \ell^2_{a,i}$. Let $\ell_{\min} = \min_{a \in \Gamma_i} \ell^*_{a,i}$ and $\ell_{\max} = \max_{a \in \Gamma_i} \ell^*_{a,i}$ (and $a_{\min} = \arg \min_{a \in \Gamma_i} \ell^*_{a,i}$ and $a_{\max} = \arg \max_{a \in \Gamma_i} \ell^*_{a,i}$). 

We first claim that $\ell_{\max} - \ell_{\min} \le 1$. Assume otherwise, i.e., $\ell_{\max} - \ell_{\min} \ge 2$. Now consider the solution where $\ell_{a_{\max},i} = \ell^*_{\max}-1$ and $\ell_{a_{\min},i} = \ell^*_{\min}+1$ (and all the rest are the same). Since the sum is invariant, this is also a feasible solution. However, note that,
\begin{align*}
\ell_{a_{\max},i}^2+\ell_{a_{\min},i}^2 &= \left(\ell_{\max}-1\right)^2+\left(\ell_{\min}+1\right)^2\\
&= {\ell^*_{\max}}^2+{\ell_{\min}}^2-2(\ell_{\max}-\ell_{\min}-1)\\ &<{\ell_{\max}}^2+{\ell_{\min}}^2
\end{align*} 
However, this is a contradiction since $\ell_{\min}, \ell_{\max}$ is part of the optimal solution. Therefore, $\ell_{\max}-\ell_{\min} \le 1$. 

Note that $\ell_{\max} = \ell_{\min} = \hat{\ell}$ is possible if and only if $|\Gamma_i|$ divides $k$ since this implies $\hat{\ell} |\Gamma_i| = k$ for an integer $\hat{\ell}$ and this is only possible when $|\Gamma_i|$ divides $k$. In this case, we have,
\[\sum_{a \in \Gamma_i} \ell^2_{a,i} = |\Gamma_i| \left(\frac{k}{|\Gamma_i|}\right)^2 = p_i \left(\left \lfloor \frac{k}{|\Gamma_i|} \right \rfloor \right)^2 + q_i \left(\left \lfloor \frac{k}{|\Gamma_i|} \right \rfloor+1\right)^2\] 
where $q_i = k (\text{mod } |\Gamma_i|) = 0$ and $p_i = |\Gamma_i|-q_i = |\Gamma_i|$. 

Now consider the case where $\ell_{\max} - \ell_{\min} = 1$. Assume we have $\hat{p}_i$ characters with $\ell_{a,i} = \ell_{\min}$ and $\hat{q}_i$ characters with $\ell_{a,i} = \ell_{\max} = \ell_{\min}+1$. We know that $\hat{p}_i + \hat{q}_i = |\Gamma_i|$ and $\hat{p}_i \cdot \ell_{\min} + \hat{q}_i \cdot (\ell_{\min}+1) = k$. Therefore, we get $(\hat{p}_i+\hat{q}_i) \cdot \ell_{\min} + \hat{q}_i = |\Gamma_i| \cdot \ell_{\min} + \hat{q}_i = k$. Since $\hat{q}_i < |\Gamma_i|$, we get that $\hat{q}_i = k (\text{mod } |\Gamma_i|)$ and therefore, $\hat{p}_i = |\Gamma_i|-\hat{q}_i$. Given $\hat{q}_i, \hat{p}_i$, we also get $\ell_{\min} = \left \lfloor \frac{k}{|\Gamma_i|} \right \rfloor$. Therefore, we get, 
\[\sum_{a \in \Gamma_i} \ell^2_{a,i} = p_i \left(\left \lfloor \frac{k}{|\Gamma_i|} \right \rfloor \right)^2 + q_i \left(\left \lfloor \frac{k}{|\Gamma_i|} \right \rfloor+1\right)^2\] 
where $q_i =  k (\text{mod } |\Gamma_i|)$ and $p_i = |\Gamma_i|-q_i$. 

\begin{figure}
    \centering
    \includegraphics[width=\linewidth]{sum-exact-index.png}
    \caption{Let $\Gamma_i =\{a_1,\cdots,a_r\}$ be set of most frequent characters at the index $i$. Overview of the characters at index $i$ after the modifications by $\sdelg$ Algorithm (Algorithm~\ref{alg:sum-exact})}
    \label{fig:sum-exact-index}
\end{figure}

Now consider the $\sdelg$ Algorithm (Algorithm~\ref{alg:sum-exact}). Consider any index $i \in T$. Let $q_i =  k (\text{mod } |\Gamma_i|)$ and $p_i = |\Gamma_i|-q_i$. We can see that each character in $\Gamma_i$ occurs $\left \lfloor \frac{k}{|\Gamma_i|}\right \rfloor$ times within the first $\left \lfloor \frac{k}{|\Gamma_i|}\right \rfloor |\Gamma_i|$ strings of $Y$ and in the next $k-\left \lfloor \frac{k}{|\Gamma_i|}\right \rfloor |\Gamma_i|=q_i$ strings, in each string we have a unique character from $\Gamma_i$ (refer to the Figure~\ref{fig:sum-exact-index}). So we end up with $q_i$ character each appearing in $\left \lfloor \frac{k}{|\Gamma_i|}\right \rfloor |\Gamma_i|+1$ strings and $|\Gamma_i|-q_i=p_i$ character each appearing in $\left \lfloor \frac{k}{|\Gamma_i|}\right \rfloor |\Gamma_i|$ strings. Therefore, we get the desired outcome, and from our proof, this is optimal. Therefore, Algorithm~\ref{alg:sum-exact} gives the exact solution. 

Note that using $O(nd)$ memory we can calculate all $\Gamma_i$ in $O\left(nd\log \min\left(n,|\Gamma|\right)\right)$ time in the worst case. Constructing the $k$ strings outlined in the Algorithm~\ref{alg:sum-exact} takes $O(kd)$ time. Therefore, the overall time complexity is $O\left(nd\log \min\left(n,|\Gamma|\right)+kd\right)$. 
\end{proof}

\begin{remark}
Note that while the current construction does not generate distinct strings it can be easily modified to achieve a set of distinct strings by considering an ordering of the indices and then from the starting index, at each subsequent index, for any block of strings with the same character at the previous index, assigning blocks of appropirately sized characters at the current index (a form of branching) so that we cover all characters. This would ensure that as long as the size of $T$ is sufficiently large (i.e., $T \in \Omega(\log k)$, we would end up getting distinct strings.
\end{remark}

\section{A PTAS for Sum Dispersion: \texorpdfstring{$k$}{k} Approximate Hamming medians}
\label{app:sum-approx}

Given $X \subseteq \Gamma^d$, $v^*$ be the maximum possible sum dispersion for any set of $k$ $(1+\epsilon)$-approximate medians, and let $D^*$ be the diameter for the $(1+\epsilon)$-approximate medians. Then,

\sumdispersionk*

Before proving~\autoref{thm:sum-dispersion-k}, we first establish the following result. We then show that if $D^*$ is sufficiently large, this leads to a PTAS, while if $D^*$ is small, a PTAS can be obtained using the earlier work of~\cite{cevallos2015maxsum}.

\sumdislargeD*

In order to formally prove this, we will first introduce some basic notations that will be used throughout this section. Given $X \subseteq \Gamma^d$, an index $i \in [d]$ and $a \in \Gamma \setminus \{w_i\}$, we define the index cost of any string $y$ with $y_i=a$ at the index $i$ to be,  
\begin{align*}
\costind(a,i) = f_i^w-f_i^y.
\end{align*}

Now, we describe the $\sdalg$ algorithm (Algorithm~\ref{alg:density-flip-k}).

\paragraph*{Description of the algorithm.}
Initiate a set of strings $Y$ such that $y_i = w$ for all $y \in Y$ and $|Y|=k$. For each column $i$ (based on the cost) and character $a \in \Gamma \setminus \{w_i\}$, the algorithm modifies a set of strings by changing the character at index $i$ from $w_i$ to $a$. The algorithm employs a notion of "value" from each modification to decide how many strings to modify for each index $i$ and character $a \in \Gamma \setminus \{w_i\}$. 

For any index $i$ and a character $a \in \Gamma \setminus \{w_i\}$, assume we have $r-1$ strings in $Y$ with character $a$ and $\ell_{(i,a,r)}$ strings in $Y$ with character $w_i$. Now consider the case where we change one more string with character $w_i$ to $a$ (to get $r$ strings with character $a$ at index $i$ and $\ell_{(i, a,r)}-1$ strings with $w_i$). Let $Y_{a,r-1}$ be the set of candidate strings before modifying the $r$-th string, and let $Y_{a,r}$ be the set of strings after the modification. Then, we can see that,
\begin{align*}
\ell_{(i,a,r)}-r&=\frac{1}{2} \left(\sum_{b \in \Gamma \setminus \{w_i,a\}} \ell_{b,i}(k-\ell_{b,i}) (\ell_{(i,a,r)}-1)(k-(\ell_{(i,a,r)}-1)+r(k-r)\right)\\
&-\frac{1}{2} \left(\sum_{b \in \Gamma \setminus \{w_i,a\}} \ell_{b,i}(k-\ell_{b,i})+\ell_{(i,a,r)}(k-\ell_{(i,a,r)})+(r-1)(k-(r-1))\right)\\
&= \textsc{sumDp}(Y_{a,r})-\textsc{sumDp}(Y_{a,r-1})
\end{align*}
We define the \emph{density} of modifying a string at index $i$ from $w_i$ to $a$ given that there are $\ell_{(i,a,r)}$ strings with character $w_i$ and $r-1$ strings with character $a$ at index $i$,
    \[
    \rho_{i, a, r, \ell_{(i,a,r)}} = \frac{\ell_{(i,a,r)}-r}{\costind(a,i)}
    \]
Let $L = \{(i,a,r,\ell_{(i,a,r)})|\ell_{(i,a,r)} \le k, i \in [d], a \in \Gamma \setminus \{w_i\}, r \le k\}$. The algorithm first sorts $L$ based on the non-increasing order of density, $\rho_{i,a,r,\ell_{i,a,r}}$, and preprocesses sorted $L$ to remove conflicting entries. Let $L_j$ denote the first $j$ elements of $L$ (in the sorted order). We call $L_j$ a prefix of $L$. For any prefix $L_j$, the prefix has a \texttt{feasible} solution if $\cgrd$ algorithm returns a solution $Y$ such that for each $(i,a,r,\ell_{(i,a,r)}) \in L_j$, $r \le |\{y \in Y|y_i = a\}|$. The $\sdalg$ algorithm finds the longest prefix $L_{\text{thr}}$ for which the $\cgrd$ algorithm returns a \texttt{feasible} solution and returns this solution.
\begin{algorithm}[!htbp]
\caption{Sum Dispersion Approximate Median ($\sdalg$)}
\label{alg:density-flip-k}
\begin{algorithmic}[1]
\REQUIRE Input strings $X$ and parameters $k$, $\epsilon$.
\STATE Calculate $w$ and $\opt$.
\STATE Let $L \leftarrow \{(i,a,r,\ell_{(i,a,r)})|\ell_{(i,a,r)} \le k, i \in [d], a \in \Gamma \setminus \{w_i\}, r \le k\}$
\STATE Sort $L$ in non-increasing order of density $\rho_{i,a,r,\ell_{(i,a,r)}}$ for all $(i,a,r,\ell_{(i,a,r)}) \in L$
\STATE Preprocess $L$: for any index $i$ and $\ell \in [k]$, if $\exists (i,a,r,\ell), (i,b,r',\ell) \in L$ such that $\rho_{i,a,r,\ell}\ge \rho_{i,b,r',\ell}$ then remove $\rho_{i,b,r',\ell}$ from $L$.
\STATE Let $L_{\text{thr}}$ be the longest prefix of $L$ for which the $\cgrd$ Algorithm (Algorithm~\ref{alg:cost-greedy-k}) returns a \texttt{feasible} solution (on the parameters $k,\epsilon,\opt,w$ and the prefix). 
\STATE Let the strings $Y$ be the set of strings returned by using $L_{\text{thr}}$ \label{lne:rhoStar-k}
\RETURN $Y$
\end{algorithmic}
\end{algorithm}

\begin{algorithm}[!htbp]
\caption{Cost Greedy Assignment ($\cgrd$)}
\label{alg:cost-greedy-k}
\begin{algorithmic}[1]
    \REQUIRE Parameters $k$, $\epsilon$, $\opt$, string $w$ and prefix $L_j$.
    \STATE For each index $i$ and $a \in \Gamma \setminus \{w_i\}$, let $h_{a,i} = |\{r|(i,a,r, \ell_{(i,a,r)}) \in L_j\}|$. 
    \STATE Sort each index $i$ and character $a \in \Gamma \setminus \{w_i\}$, i.e. $(a,i)$ in the non-decreasing order of $\costind(a, i)$.
    \STATE Initialize a set of $k$ strings $Y$ such that $y = w$ for all $y \in Y$.
    \STATE Set \texttt{feasible} $\gets$ \texttt{true}
    \FOR{each $(a,i)$ in the sorted order} \label{lne:assign-loop-k}
        \STATE Let $\mathcal{R}_{a,i}$ be the set of strings where for all $y \in \mathcal{R}_{a,i}$, $\sum_{x \in X} H(x,y) + \costind(a,i) \le (1+\epsilon) \opt$
        \IF{$|\mathcal{R}_{a,i}| \ge h_{a,i}$}
            \STATE Pick the $h_{a,i}$ strings in $\mathcal{R}_{a,i}$ with lowest cost and set the character at index $i$ to $a$ for all $h_{a,i}$ strings.
        \ELSE
            \STATE For all strings in $\mathcal{R}_{a,i}$, set the character at index $i$ to $a$.
            \STATE \texttt{feasible} $\gets$ \texttt{false}
        \ENDIF
    \ENDFOR
\RETURN Set of strings $Y$ and $\texttt{feasible}$.
\end{algorithmic}
\end{algorithm}

We first show that the optimal diversity has to be sufficiently large. Let $v^*$ be the optimal diversity. Then,

\begin{lemma}\label{lem:opt-l-bound-k}
Let $D^*$ be the maximum Hamming distance between any two $(1+\epsilon)$-approximate median strings. Then,
\[
v^* \ge \frac{(k-1)(k+1)}{4} \cdot D^*
\]
\end{lemma}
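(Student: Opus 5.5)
The plan is to exhibit $k$ explicit $(1+\epsilon)$-approximate medians whose sum dispersion is at least $\frac{(k-1)(k+1)}{4}D^*$. Since $v^*$ is a maximum over all such sets, this lower-bounds $v^*$.

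Start from the two diameter-realizing medians $\hat{y},\hat{z}$ of \autoref{lem:optimal-string}. They satisfy $H(\hat{y},\hat{z})=D^*$, and the sets $\hat{Y},\hat{Z}$ on which they deviate from $w$ are disjoint with $|\hat{Y}|+|\hat{Z}|=D^*$. Both are $(1+\epsilon)$-approximate medians. Take $\lceil k/2\rceil$ copies of $\hat{y}$ and $\lfloor k/2\rfloor$ copies of $\hat{z}$. The problem asks for a set of $k$ strings; if distinctness is required, the next paragraph handles it, and otherwise we read the set as a multiset, as the paper's own exact constructions (e.g.\ \autoref{alg:sum-exact}) already do.

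Every cross pair contributes exactly $D^*$, so the sum dispersion is $\lceil k/2\rceil\lfloor k/2\rfloor D^*$. For odd $k$ this equals $\frac{(k-1)(k+1)}{4}D^*$, and for even $k$ it equals $\frac{k^2}{4}D^* \ge \frac{(k-1)(k+1)}{4}D^*$, which gives the bound.

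If distinct strings are needed, I would instead use the "mixing" family. For each index in $\hat{Y}\cup\hat{Z}$, a string takes either the $\hat{y}$-character or the $\hat{z}$-character; outside $\hat{Y}\cup\hat{Z}$ both characters equal $w_i$. A string that deviates from $w$ only on a subset of $\hat{Y}$, with $\hat{y}$'s characters, costs no more than $\hat{y}$ by \autoref{lem:opt-offset}, since the per-index costs are nonnegative. So "subset strings" of $\hat{y}$ and of $\hat{z}$ are all $(1+\epsilon)$-approximate medians. However, a string that deviates on parts of both $\hat{Y}$ and $\hat{Z}$ may exceed the budget, so any mixing must stay within subsets of a single side.

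The main obstacle is therefore distinctness when $k$ is large relative to $D^*$. Two groups of near-copies, each dropping only a few coordinates from $\hat{y}$ or $\hat{z}$, lose some of the cross distance $D^*$, and that loss has to be absorbed. I expect the intended statement treats the output as a multiset, in which case the two-group construction above is already the full proof. The only routine checks are the arithmetic $\lceil k/2\rceil\lfloor k/2\rfloor\ge (k^2-1)/4$ and that same-group pairs contribute nonnegatively.
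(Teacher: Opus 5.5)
Your proposal is correct and matches the paper's proof. The paper likewise takes two $(1+\epsilon)$-approximate medians at distance $D^*$, uses $\lfloor k/2\rfloor$ copies of one and $k-\lfloor k/2\rfloor$ copies of the other (implicitly a multiset), and does the same even/odd arithmetic; your appeal to \autoref{lem:optimal-string} and the distinctness discussion are not needed, since any diameter-realizing pair suffices.
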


\begin{proof}
Assume there exist two $(1+\epsilon)$-approximate median strings such that the Hamming distance between them is $D^*$. Assume the two strings are $s^*_1, s^*_2$. Then we can create a set of $k$ strings $\{s_1,s_2,\dots,s_k\}$ such that $s_i = s^*_1$ for all $i \le \left \lfloor \frac{k}{2} \right \rfloor$ and $s_i = s^*_2$ for all other $i \in [k]$. The total pairwise distance between the strings in $\{s_1,s_2,\dots,s_k\}$ is,
\[
\left \lfloor \frac{k}{2} \right \rfloor \cdot \left(k-\left \lfloor \frac{k}{2} \right \rfloor\right) D^*
\]
Note that if $k$ is even, $\left \lfloor \frac{k}{2} \right \rfloor \cdot \left(k-\left \lfloor \frac{k}{2}\right \rfloor\right) = \frac{k^2}{4} \ge \frac{(k-1)(k+1)}{4}$ and if $k$ is odd, $\left \lfloor \frac{k}{2} \right \rfloor \cdot \left(k-\left \lfloor \frac{k}{2}\right \rfloor\right) =  \frac{k-1}{2} \cdot \left(k-\frac{k-1}{2}\right) = \frac{(k-1)(k+1)}{4}$. Therefore, $v^* \ge \frac{(k-1)(k+1)}{4} D^*$
\end{proof}

Next, we will show that the prefix considered by the $\sdalg$ algorithm (Algorithm~\ref{alg:density-flip-k}) returns a solution with sufficiently large sum dispersion.

\begin{lemma}\label{lem:opt-v-bound-k}
Let $v$ be the value achieved by the $\sdalg$ algorithm (Algorithm~\ref{alg:density-flip-k}). Then:

\[
v^* \le v + (k - 1)(k + 1)
\]

\end{lemma}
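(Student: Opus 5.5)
The plan is to compare the optimal solution with the output of $\sdalg$ through a fractional-knapsack style exchange argument, carried out one coordinate at a time. First I will reuse the per-index identity from the remark in the proof of \autoref{thm:sum-dispersion-k-exact}:
\[
\sDp(Y)=\sum_{i=1}^d \tfrac12\sum_{a\in\Gamma}\ell_{a,i}(k-\ell_{a,i}).
\]
Any configuration of column $i$ can be reached from the all-$w_i$ column by a sequence of single modifications $w_i\to a$. By the computation preceding the definition of $\rho_{i,a,r,\ell}$, each such step contributes exactly $\ell_{(i,a,r)}-r$ to $\sDp$ and costs $\costind(a,i)$. Hence both $v$ and $v^*$ decompose as sums of marginal gains of operations from $L$.

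In addition, every operation has gain at most $k-1$, since $\ell\le k$ and $r\ge 1$. Therefore $\rho_{i,a,r,\ell}\le (k-1)/\costind(a,i)$. The per-string median costs are additive by \autoref{lem:opt-offset}, so the total budget available to any solution is $k\epsilon\opt$.

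\textbf{Step 1: $v^*-v\le U\rho$.} Let $\rho$ be the density of the first operation after $L_{\text{thr}}$, and let $U=k\epsilon\opt-\sum_{y\in Y}\bigl(\sum_x H(x,y)-\opt\bigr)$ be the unused budget of our output. I will write the optimal solution's operations as those lying in $L_{\text{thr}}$ plus the rest. The preprocessing step and the non-increasing density order ensure that, for each index, the marginal gains of the optimum are dominated term by term by the prefix operations (concavity of $\ell(k-\ell)$ makes the marginal gains decrease). Consequently, the operations of the optimum that $\cgrd$ realizes account for at most $v$. Every remaining operation has density at most $\rho$, and its total cost is at most the budget left over. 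This left-over budget is at most $U$, because $\cgrd$ has already spent at least the cost the optimum spends on the shared prefix operations. Summing gain $\le\rho\cdot$cost then yields $v^*-v\le \rho U$. The delicate point is making ``the optimum's prefix operations cost at least as much for us'' rigorous. I expect to handle it by the sorted-by-cost assignment order in $\cgrd$ together with choosing the lowest-cost strings.

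\textbf{Step 2: $U\rho\le (k-1)(k+1)$.} Let $(i,a,r,\ell)$ be the first operation that makes $\cgrd$ infeasible, and let $c=\costind(a,i)$. Infeasibility means $|\mathcal R_{a,i}|<h_{a,i}$. Thus, apart from at most the strings in $\mathcal R_{a,i}$ (fewer than $k$ of them, and each of them can absorb cost) every string has remaining budget less than $c$. I will bound $U<(k+1)c$: at most $k$ strings each with residual below $c$, plus one extra $c$ to account for the strings of $\mathcal R_{a,i}$ once they receive the operation. Combining this with $\rho\le (k-1)/c$ gives $U\rho\le (k-1)(k+1)$, and Step 1 then finishes the lemma.

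The main obstacle is Step 1, specifically the exchange argument under $k$ separate hard budgets rather than one pooled budget. The first thing I would try is to show that greedy-by-cost assignment loses no capacity relative to any feasible packing of the same multiset of prefix operations. If that fails, I would instead charge any lost capacity to $U$ and absorb the slack into the extra $k-1$ term in the $(k-1)(k+1)$ bound.
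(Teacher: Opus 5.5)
Your two-step architecture ($v^*-v\le U\rho$, then $U\rho\le(k-1)(k+1)$) is the same as the paper's. However, Step~1 as you argue it does not go through.

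You bound $v^*-v$ by the gain of the optimum's operations outside the prefix, and then claim their cost is at most $U$. The inequality you invoke points the wrong way. The optimum may spend, on its non-shared operations, up to $k\epsilon\opt$ minus what it spends on the shared ones. Since $\cgrd$ spent \emph{at least} that shared amount, you get $U\le$ (the optimum's leftover), not $\ge$. Whenever the algorithm performs operations the optimum does not, the optimum's non-shared cost can exceed $U$ by the full cost of those operations. For example, with many identical columns and density ties, the optimum may use columns disjoint from those touched by the prefix. Its non-shared cost is then essentially $k\epsilon\opt$ while $U$ is tiny, and your chain yields only $v^*-v\le\rho\,k\epsilon\opt$.

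The missing idea is that you must not discard the gain of the algorithm's extra (``deficit'') operations. They lie in $L_{\text{thr}}$, so their density is at least $\rho$. Subtracting them gives
\[
v^*-v\;\le\;\rho\,(\text{surplus cost}-\text{deficit cost})\;=\;\rho\,\bigl(\mathrm{cost}(Y^*)-\mathrm{cost}(Y)\bigr)\;\le\;\rho\,U,
\]
using only that the optimum's total extra cost is at most $k\epsilon\opt$. This is exactly the paper's Claim. It writes $v^*_i$ per index as the algorithm's terms plus surplus minus deficit. It then pairs surplus and deficit terms sharing an $\ell$ value (via $M_i,M_i'$), and uses the ordering remark and the monotonicity of $\rho$. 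In particular, the ``delicate point'' you flag (greedy-by-cost packing losing no capacity relative to the optimum's packing) is not needed: the optimum's per-string budgets enter only through its aggregate cost. Your fallback of absorbing lost capacity into the extra $k-1$ also has no room, since Step~2 already uses the whole $(k-1)(k+1)$.

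Step~2 follows the paper's idea but needs its bookkeeping fixed. $U$ is the residual of the feasible run on $L_{\text{thr}}$, whereas infeasibility is witnessed in the run on $L_{\text{thr}+1}$. It may also be witnessed at a pair $(\tilde a,\tilde i)$ other than the newly added operation $(\hat a,\hat i)$ (any pair processed no earlier in the cost order can fail). The paper obtains its extra $+c$ from passing back to the feasible run, $U\le U'+\costind(\hat a,\hat i)$. It bounds $U'\le k\,\costind(\tilde a,\tilde i)$, and uses that $\rho$, the smallest density in $L_{\text{thr}+1}$, is at most the density of the unfulfilled operation at $(\tilde a,\tilde i)$; it then compares the two costs case by case. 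You instead spend your ``one extra $c$'' on the strings of $\mathcal{R}_{a,i}$, but it does not control them. Such a string can keep residual far above $c$ after receiving one operation. Bounding $U'$ therefore requires a separate argument that these strings also end with residual below $c$; that is the content of the paper's bound $U'\le k\,\costind(\tilde a,\tilde i)$.
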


\begin{proof}

Let $L_{\text{thr}}$ be the prefix for which the algorithm returns a \texttt{feasible} solution $Y$. For all $i \in [d]$ and $a \in \Gamma$, let $r_{a,i} = \sum_{y \in Y} \ind{y_i=a}$ and let $S = \{(a,i,r_{a,i}) \forall i \in [d] \text{ and } a \in \Gamma \setminus \{w_i\}\}$. 

Now consider $L_{\text{thr}+1}$ which is the prefix of $L$ such that $|L_{\text{thr}+1}| = 1+|L_{\text{thr}}|$. Run the $\cgrd$ algorithm with $L_{\text{thr}+1}$ as the input. Let $Y'$ be the output of the $\cgrd$ algorithm. For all $i \in [d]$ and $a \in \Gamma$, let $r'_{a,i} = \sum_{y' \in Y'} \ind{y'_i=a}$ and let $S' = \{(a,i,r'_{a,i}) \forall i \in [d] \text{ and } a \in \Gamma \setminus \{w_i\}\}$. 

Note that by~\autoref{lem:opt-offset}, for any $y \in Y$
\[\sum_{x \in X} H(x,y) = \opt + \sum_{i: y_i \neq w_i} (f_i^w-f_i^y)\]
and therefore,
\begin{align*}
\sum_{y \in Y} \sum_{x \in X} H(x,y) &= k \cdot \opt+\sum_{y \in Y} \sum_{i: y_i \neq w_i} (f_i^w-f_i^y)\\
&=k \cdot \opt+\sum_{i \in [d]}\sum_{y \in Y| y_i \neq w_i} (f_i^w-f_i^y)
\end{align*}
Note that for any $i$ such that $y_i = a$ (i.e., $y_i \neq w_i$), the value of $(f_i^w-f_i^y) =  \costind(a,i)$. Also note that for each $i$, there are $r_{a,i}$ strings in $Y$ such that $y_i = a$. Therefore,
\begin{align*}
\sum_{y \in Y} \sum_{x \in X} H(x,y) =k \cdot \opt+\sum_{i \in [d]}\sum_{a \in \Gamma \setminus \{w_i\}} r_{a,i} \costind(a,i)
\end{align*}
A similar argument shows that,
\begin{align*}
\sum_{y' \in Y'} \sum_{x \in X} H(x,y') =k \cdot \opt+\sum_{i \in [d]} \sum_{a \in \Gamma \setminus \{w_i\}} r'_{a,i} \costind(a,i)
\end{align*}

Since the total maximum allowed median cost budget over the $k$ strings is $k \cdot (1+\epsilon) \opt$, we will first define the total residual capacity of the outputs $Y$ and $Y'$ as follows. Let $U$ be the total residual capacity of $Y$. Then,
\begin{align*}
U &= (1+\epsilon)k\cdot \opt - \left(k\cdot \opt+\sum_{i \in [d]} \sum_{a \in \Gamma \setminus \{w_i\}} r_{a,i} \costind(a,i)\right)\\
& = \epsilon k \opt-\sum_{i \in [d]} \sum_{a \in \Gamma \setminus \{w_i\}} r_{a,i} \costind(a,i)
\end{align*}
Similarly, let $U'$ be the total residual capacity of $Y'$. Then,
\begin{align*}
U' &= (1+\epsilon)k\cdot \opt - \left(k\cdot \opt+\sum_{i \in [d]} \sum_{a \in \Gamma \setminus \{w_i\}} r'_{a,i} \costind(a,i)\right)\\
& = \epsilon k \opt-\sum_{i \in [d]} \sum_{a \in \Gamma \setminus \{w_i\}} r'_{a,i} \costind(a,i)
\end{align*}

We will now first relate the two residual capacities $U$ and $U'$ and use that to show that the sum dispersion of the solution $Y$ is sufficiently large. 

Consider $(\hat{i},\hat{a},\hat{r},\ell_{(\hat{i},\hat{a},\hat{r})}) \in L_{\text{thr}+1} \setminus L_{\text{thr}}$ and $(\hat{a}, \hat{i}, r'_{\hat{a}, \hat{i}}) \in S'$. Then, there are only two posibile scenarios for $r'_{\hat{a}, \hat{i}}$ and $\hat{r}$, i.e. $r'_{\hat{a}, \hat{i}} = \hat{r}$ or $r'_{\hat{a}, \hat{i}} = \hat{r}-1$. Note that if $r'_{\hat{a}, \hat{i}} = \hat{r}-1$, then $(\hat{a}, \hat{i}, r'_{\hat{a}, \hat{i}}) \in S$ and therefore, $U'=U$. Otherwise,
\begin{align*}
U' &= k \cdot \epsilon \opt - \sum_{i \in [d]} \sum_{a \in \Gamma \setminus \{w_i\}} r'_{a,i} \costind(a,i)\\
&= k \cdot \epsilon \opt - \sum_{i \in [d] \setminus \{\hat{i}\}} \sum_{a \in \Gamma \setminus \{w_i\}} r'_{a,i} \costind(a,i)\\ 
&- (r'_{\hat{a},\hat{i}}-1) \costind(\hat{a},\hat{i})-\costind(\hat{a},\hat{i})\\
\end{align*}
Note that since $L_{\text{thr}+1}$ gives an \texttt{infeasible} solution, $r'_{a,i} \le r_{a,i}$ for $i \in [d] \setminus \{\hat{i}\}, a \in \Gamma \setminus \{w_i\}$ and $r_{\hat{a},\hat{i}} = (r'_{\hat{a},\hat{i}}-1)$. Therefore, $k \cdot \epsilon \opt - \sum_{i \in [d] \setminus \{\hat{i}\}} r'_{a,i} \costind(a,i) - (r'_{\hat{a},\hat{i}}-1) \costind(a,\hat{i}) \ge k \cdot \epsilon \opt -\sum_{i \in [d]} r_{a,i} \costind(a,i) = U$. Therefore $U' \ge U-\costind(a,\hat{i})$. This implies that $U \le U'+\costind(\hat{a},\hat{i})$ always holds true.

Note that since $Y'$ is an \texttt{infeasible} solution, $\exists (i,a,r,\ell_{(i,a,r)}) \in L_{\text{thr}+1}$ such that $r'_{a,i} < r$ and let $\tilde{i},\tilde{a}$ be the index and character combination corresponding to the highest $\costind(\tilde{a},\tilde{i})$ of such density values. Note that since we cannot increase the $r'_{\tilde{a},\tilde{i}}$, this implies $U' \le k\cdot \costind(\tilde{a},\tilde{i})$. Therefore, 
\begin{align*}
U \le k \cdot \costind(\tilde{a},\tilde{i}) + \costind(\hat{a},\hat{i})
\end{align*}
Note that by the definition of the prefix, $\rho_{\hat{i},\hat{a},\hat{r},\ell_{(\hat{i},\hat{a},\hat{r})}} \le \rho_{\tilde{i},\tilde{a},\tilde{r},\ell_{(\tilde{i},\tilde{a},\tilde{r})}}$. 

Therefore, if the $\costind(\tilde{a},\tilde{i}) > \costind(\hat{a},\hat{i})$, 
\begin{align*}
U \rho_{\hat{i},\hat{a},\hat{r},\ell_{(\hat{i},\hat{a},\hat{r})}} &\le U \rho_{\tilde{i},\tilde{a},\tilde{r},\ell_{(\tilde{i},\tilde{a},\tilde{r})}}\\
&\le (k+1) \cdot \costind(\tilde{a},\tilde{i}) \rho_{\tilde{i},\tilde{a},\tilde{r},\ell_{(\tilde{i},\tilde{a},\tilde{r})}}\\
& \le (k+1)(\ell_{(\tilde{i},\tilde{a},\tilde{r})}-\tilde{r})) \le (k+1)(k-1)
\end{align*}
and if $\costind(\tilde{a},\tilde{i}) \le \costind(\hat{a},\hat{i})$,
\begin{align*}
U \rho_{\hat{i},\hat{a},\hat{r},\ell_{(\hat{i},\hat{a},\hat{r})}} &\le (k+1) \cdot \costind(\hat{a},\hat{i}) \rho_{\hat{i},\hat{a},\hat{r},\ell_{(\hat{i},\hat{a},\hat{r})}}\\
& \le (k+1)(\ell_{(\hat{i},\hat{a},\hat{r})}-\hat{r}) \le (k+1)(k-1)
\end{align*}
Therefore, $U \rho_{\hat{i},\hat{a},\hat{r},\ell_{(\hat{i},\hat{a},\hat{r})}} \le (k+1)(k-1)$.

\begin{remark}\label{rem:all-ell}
Consider the $P_i=\{\ell|\exists \rho_{i,a,r,\ell} \in L_{\text{thr}}\}$. Let $r_{w_i,i}=\sum_{y \in Y} \ind{y_i=w_i}$. Then we can see that for any $\hat{\ell} \in \{r_{w_i,i}+1, \dots, k-1,k\}$, $\hat{\ell} \in P_i$ and $|P_i| = k-r_{w_i,i}$. This is because for any $\ell \le k$, we always keep the occurence with higher $\rho_{i,a,r,\ell}$ in the preprocessing and also for any $\tilde{\ell} < \ell$, if $\rho_{i,\tilde{a},\tilde{r},\tilde{\ell}} \in L_{\text{thr}}$ then $\exists \hat{a},\hat{r} \text{ such that} \rho_{i,\hat{a},\hat{r},\ell} \in L_{\text{thr}}$ (otherwise since $\rho_{i,\tilde{a},\tilde{r},\ell} > \rho_{i,\tilde{a},\tilde{r},\tilde{\ell}}$ preprocessing would remove $\rho_{i,\tilde{a},\tilde{r},\tilde{\ell}}$).
\end{remark}

Given Remark~\ref{rem:all-ell} and the bounds on $U \rho_{\hat{i},\hat{a},\hat{r},\ell_{(\hat{i},\hat{a},\hat{r})}}$, we can now establish the final bounds on the sum dispersion of $Y$. Let $v$ be the sum dispersion value of $Y$ and let $v^*$ be the sum dispersion value of the optimal solution (denoted by $Y^*)$. Let $r^*_{a,i} = |\{y^* \in Y^*| y^*_i = a\}|$. Then, we can see that by the definition of density,
\begin{align*}
&\sum_{i \in [d]} \sum_{a\in \Gamma \setminus \{w_i\}}\sum_{r=1}^{r_{a,i}} \costind(a,i) \rho_{i,a,r,\ell_{(i,a,r)}} \\
&=\sum_{i \in [d]} \sum_{a\in \Gamma \setminus \{w_i\}}\sum_{r=1}^{r_{a,i}} (\ell_{(i,a,r)}-r) \\
&=\sum_{i \in [d]} \sum_{a\in \Gamma \setminus \{w_i\}}\sum_{r=1}^{r_{a,i}} \frac{1}{2} \left((\ell_{(i,a,r)}-1)(k-(\ell_{(i,a,r)}-1)+r(k-r)\right) \\
&-\sum_{i \in [d]} \sum_{a\in \Gamma \setminus \{w_i\}}\sum_{r=1}^{r_{a,i}} \frac{1}{2} \left(\ell_{(i,a,r)}(k-\ell_{(i,a,r)})+(r-1)(k-(r-1))\right) \\
&= \sum_{i \in [d]} \sum_{a\in \Gamma \setminus \{w_i\}}\frac{1}{2} r_{a,i}\left(k-r_{a,i}\right)\\
&+ \sum_{i \in [d]} \sum_{a\in \Gamma \setminus \{w_i\}} \frac{1}{2}\left((\ell_{(i,a,r)}-1)(k-(\ell_{(i,a,r)}-1)-\ell_{(i,a,r)}(k-\ell_{(i,a,r)})\right)\\
&= \sum_{i \in [d]} \sum_{a\in \Gamma}\frac{1}{2} r_{a,i}\left(k-r_{a,i}\right)\\
& = v
\end{align*}
where $\sum_{a\in \Gamma \setminus \{w_i\}} \frac{1}{2}\left((\ell_{(i,a,r)}-1)(k-(\ell_{(i,a,r)}-1)-\ell_{(i,a,r)}(k-\ell_{(i,a,r)})\right) = \frac{1}{2}r_{w_i,i}(k-r_{w_i,i})$ (where $r_{w_i,i} = \{y\in Y|y_i = w_i\}$) since for any $k \ge \ell_{(i,a,r)} \ge r_{w_i,i}$ there is a character $a$ that has $\ell_{(i,a,r)}$ value in the sum (Remark~\ref{rem:all-ell}) and the sum is telescopic.

\begin{claim}\label{clm:sum-opt-rho}
Let $v^*$ be the optimal sum dispersion. For any $a \in \Gamma$ $r^*_{a,i}$ be the number of strings with $a$ at index $i$ in the optimal solution and let $r_{a,i}$ be the number of strings with $a$ at index $i$ in the solution $Y$ from $L_{\text{thr}}$. Then, we have an 
\begin{align*}
v^* &\le v+\left(\sum_{i \in [d]}\left(\sum_{\substack{a \in \Gamma \setminus \{w_i\}\\ r^*_{a,i}>r_{a,i}}} \sum_{r=r_{a,i}+1}^{r^*_{a,i}} \costind(a,i) -\sum_{\substack{a \in \Gamma \setminus \{w_i\}\\ r^*_{a,i}<r_{a,i}}} \sum_{r=r^*_{a,i}+1}^{r_{a,i}} \costind(a,i)\right)\right)\rho_{\hat{i},\hat{a},\hat{r},\ell_{(\hat{i},\hat{a},\hat{r})}}
\end{align*}
\end{claim}
\begin{proof}

Let $v^*_i = \sum_{a \in \Gamma} \frac{1}{2}r^*_{a,i}(k-r^*_{a,i})$. We can see that,
\[v^* = \sum_{i \in [d]} v^*_i\]
and 
\begin{align*}
v^*_i &= \sum_{a \in \Gamma} \frac{1}{2} r^*_{a,i}(k-r^*_{a,i})\\
& = \sum_{a \in \Gamma \setminus \{w_i\}} \sum_{r=1}^{r^*_{a,i}} \frac{1}{2} \left(r(k-r)-(r-1)(k-(r-1))\right)+\frac{1}{2} \left(r^*_{w_i,i}(k-r^*_{w_i,i})\right)\\
& = \sum_{a \in \Gamma \setminus \{w_i\}} \sum_{r=1}^{r^*_{a,i}} \frac{1}{2} \left(k-2r+1\right)+\frac{1}{2} \left(r^*_{w_i,i}(k-r^*_{w_i,i})\right)\\
\end{align*}
We can reorder the sums to get the following, 
\begin{align*}
v^*_i & = \sum_{a \in \Gamma \setminus \{w_i\}} \sum_{r=1}^{r_{a,i}} \frac{1}{2} \left(k-2r+1\right)+\sum_{\substack{a \in \Gamma \setminus \{w_i\}\\r_{a,i}<r^*_{a,i}}} \sum_{r=r_{a,i}+1}^{r^*_{a,i}} \frac{1}{2} \left(k-2r+1\right)\\
&-\sum_{\substack{a \in \Gamma \setminus \{w_i\}\\r_{a,i}>r^*_{a,i}}} \sum_{r=r^*_{a,i}+1}^{r_{a,i}} \frac{1}{2} \left(k-2r+1\right)+\frac{1}{2} \left(r^*_{w_i,i}(k-r^*_{w_i,i})\right)\\
\end{align*}

\begin{remark}\label{rem:ordering}
Let $a,b \in \Gamma \setminus \{w_i\}$ be two characters such that $\costind(a,i) \le \costind(b,i)$, then we can assume there exists an optimal solution with $r^*_{a,i} \ge r^*_{b,i}$ for all such pairs. Assume otherwise, i.e. $r_1=r^*_{b,i} > r^*_{a,i}=r_2$. Then take any $r_1-r_2$ strings with $b$ in the index $i$ and change them to $a$. Since $\costind(a,i) \le \costind(b,i)$ it still gives a feasible solution, and this gives a new solution with $r^*_{a,i} = r_1$ and $r^*_{b,i} = r_2$, which gives the exact same sum dispersion value as before. Therefore, our ordering assumption is valid. 
\end{remark}
Let $a\in \Gamma \setminus \{w_i\}$ be such that $r^*_{a,i} < r_{a,i}$ and let $M_i=\{\ell_{(i,a,r)}\in L_{\text{thr}}| r^*_{a,i} < r\le  r_{a,i}\}$. Note that since $\sum_{a \in \Gamma \setminus \{w_i\}} r_{a,i}+r_{w_i,i} = k =\sum_{a \in \Gamma \setminus \{w_i\}} r^*_{a,i}+r^*_{w_i,i}$, we can see that, 
\[(r_{w_i,i}-r^*_{w_i,i})+\sum_{\substack{a \in \Gamma \setminus \{w_i\}\\r^*_{a,i} < r_{a,i}}} (r_{a,i}-r^*_{a,i})=\sum_{\substack{a \in \Gamma \setminus \{w_i\}\\r^*_{a,i} > r_{a,i}}} (r^*_{a,i}-r_{a,i})\]
Assuming $r^*_{w_i,i} < r_{w_i,i}$, we can create the set $M'_i = M_i \cup \{\ell| r_{w_i,i} \ge \ell > r^*_{w_i,i}\}$. Note that while we assume $r^*_{w_i,i} < r_{w_i,i}$ for ease of argument this can be easily removed by setting $M'_i$ to be a subset of $M_i$ of size $\sum_{\substack{a \in \Gamma \setminus \{w_i\}\\r^*_{a,i} > r_{a,i}}} (r^*_{a,i}-r_{a,i})$ when $r^*_{w_i,i} \ge r_{w_i,i}$. Now, consider $a\in \Gamma \setminus \{w_i\}$ where $r^*_{a,i}> r_{a,i}$ and order them in the non-increasing order of $\costind(a,i)$, and for all $r \in \{r_{a,i}+1,\dots,r^*_{a,i}\}$ assign the values in $M'_i$ in the sorted order. Let these be indicated by $\ell^*_{(i,a,r)}$. Then, we can see that,
\begin{align*}
v^*_i & = \sum_{a \in \Gamma \setminus \{w_i\}} \sum_{r=1}^{r_{a,i}} \frac{1}{2} \left(k-2r+1-(k-2\ell_{(i,a,r)}+1)\right)\\
&+\sum_{\substack{a \in \Gamma \setminus \{w_i\}\\r_{a,i}<r^*_{a,i}}} \sum_{r=r_{a,i}+1}^{r^*_{a,i}} \frac{1}{2} \left(k-2r+1-(k-2\ell^*_{(i,a,r)}+1)\right)\\
&-\sum_{\substack{a \in \Gamma \setminus \{w_i\}\\r_{a,i}>r^*_{a,i}}} \sum_{r=r^*_{a,i}+1}^{r_{a,i}} \frac{1}{2} \left(k-2r+1-(k-2\ell_{(i,a,r)}+1)\right)+\frac{1}{2} \left(r^*_{w_i,i}(k-r^*_{w_i,i})\right)\\
& + \sum_{a \in \Gamma \setminus \{w_i\}} \sum_{r=1}^{r_{a,i}} \frac{1}{2} (k-2\ell_{(i,a,r)}+1)+\sum_{\substack{a \in \Gamma \setminus \{w_i\}\\r_{a,i}<r^*_{a,i}}} \sum_{r=r_{a,i}+1}^{r^*_{a,i}} \frac{1}{2} (k-2\ell^*_{(i,a,r)}+1)\\
&-\sum_{\substack{a \in \Gamma \setminus \{w_i\}\\r_{a,i}>r^*_{a,i}}} \sum_{r=r^*_{a,i}+1}^{r_{a,i}} \frac{1}{2} (k-2\ell_{(i,a,r)}+1)
\end{align*}
Note that,
\begin{align*}
&\sum_{a \in \Gamma \setminus \{w_i\}} \sum_{r=1}^{r_{a,i}} \frac{1}{2} (k-2\ell_{(i,a,r)}+1)+\sum_{\substack{a \in \Gamma \setminus \{w_i\}\\r_{a,i}<r^*_{a,i}}} \sum_{r=r_{a,i}+1}^{r^*_{a,i}} \frac{1}{2} (k-2\ell^*_{(i,a,r)}+1)\\
&-\sum_{\substack{a \in \Gamma \setminus \{w_i\}\\r_{a,i}>r^*_{a,i}}} \sum_{r=r^*_{a,i}+1}^{r_{a,i}} \frac{1}{2} (k-2\ell_{(i,a,r)}+1)\\
&=\sum_{a \in \Gamma \setminus \{w_i\}} \sum_{r=1}^{r^*_{a,i}} \frac{1}{2} (k-2\ell_{(i,a,r)}+1)=\sum_{r=r^*_{w_i,i}+1}^k \frac{1}{2} (k-2r+1)
\end{align*}
where the last equality comes from the construction of $M'_i$. Therefore,
\begin{align*}
v^*_i & = \sum_{a \in \Gamma \setminus \{w_i\}} \sum_{r=1}^{r_{a,i}} \left(\ell_{(i,a,r)}-r\right)+\sum_{\substack{a \in \Gamma \setminus \{w_i\}\\r_{a,i}<r^*_{a,i}}} \sum_{r=r_{a,i}+1}^{r^*_{a,i}} \left(\ell_{(i,a,r)}-r\right)\\
&-\sum_{\substack{a \in \Gamma \setminus \{w_i\}\\r_{a,i}>r^*_{a,i}}} \sum_{r=r^*_{a,i}+1}^{r_{a,i}} \left(\ell_{(i,a,r)}-r\right)+\frac{1}{2} \left(r^*_{w_i,i}(k-r^*_{w_i,i})\right)+\sum_{r = r^*_{w_i,i}+1}^k \frac{1}{2}(k-2r+1)\\
& = \sum_{a \in \Gamma \setminus \{w_i\}} \sum_{r=1}^{r_{a,i}} \left(\ell_{(i,a,r)}-r\right)+\sum_{\substack{a \in \Gamma \setminus \{w_i\}\\r_{a,i}<r^*_{a,i}}} \sum_{r=r_{a,i}+1}^{r^*_{a,i}} \left(\ell^*_{(i,a,r)}-r\right)\\
&-\sum_{\substack{a \in \Gamma \setminus \{w_i\}\\r_{a,i}>r^*_{a,i}}} \sum_{r=r^*_{a,i}+1}^{r_{a,i}} \left(\ell_{(i,a,r)}-r\right)
\end{align*}
which gives,
\begin{align*}
v^* &= \sum_{i \in [d]}\sum_{a \in \Gamma \setminus \{w_i\}} \sum_{r=1}^{r_{a,i}} \costind(a,i) \rho_{i,a,r,\ell_{(i,a,r)}}\\
&+\sum_{i \in [d]}\sum_{\substack{a \in \Gamma \setminus \{w_i\}\\ r^*_{a,i}>r_{a,i}}} \sum_{r=r_{a,i}+1}^{r^*_{a,i}} \costind(a,i) \rho_{i,a,r,\ell^*_{(i,a,r)}} \\
&-\sum_{i \in [d]}\sum_{\substack{a \in \Gamma \setminus \{w_i\}\\ r^*_{a,i}<r_{a,i}}} \sum_{r=r^*_{a,i}+1}^{r_{a,i}} \costind(a,i) \rho_{i,a,r,\ell_{(i,a,r)}}\\
&= v+\sum_{i \in [d]}\sum_{\substack{a \in \Gamma \setminus \{w_i\}\\ r^*_{a,i}>r_{a,i}}} \sum_{r=r_{a,i}+1}^{r^*_{a,i}} \costind(a,i) \rho_{i,a,r,\ell^*_{(i,a,r)}} \\
&-\sum_{i \in [d]}\sum_{\substack{a \in \Gamma \setminus \{w_i\}\\ r^*_{a,i}<r_{a,i}}} \sum_{r=r^*_{a,i}+1}^{r_{a,i}} \costind(a,i) \rho_{i,a,r,\ell_{(i,a,r)}}
\end{align*}

We observe the following about the given sum.   
\begin{enumerate}
\item \textbf{Pairing (via $M_i,M'_i$).} Every assignment 
where $r_{d,i} \ge \tilde r>r^*_{d,i}$ (which we will call deficit) can be matched to an assignment 
$r_{c,i}<\hat r \le r^*_{c,i}$ (which we call surplus) through the construction of $M_i,M'_i$ where they have $\ell_{(i,d,\tilde r)} = \ell^*_{(i,c,\hat r)}$, so that surplus and deficit terms appear in disjoint pairs.  

\item \textbf{Ordering (Remark~\ref{rem:ordering}).} If $(c,\hat r)$ is a surplus
and $(d,\tilde r)$ is a deficit in the same pair, then $\costind(c,i)\le \costind(d,i)$.  

\item \textbf{Monotonicity of $\rho$.} For a fixed $\ell$, if $(c,\hat r)$ is a surplus
and $(d,\tilde r)$ is a deficit in the same pair, since the algorithm
uses $\rho$ values in non-increasing order, so $\rho_{i,d,\tilde r,\ell}\ge \rho_{i,c,\hat r,\ell}$.  
\end{enumerate}

Consider now the contribution of one such pair with common index $\ell$:  
\begin{align*}
&\costind(c,i)\,\rho_{i,c,\hat r,\ell}
-\costind(d,i)\,\rho_{i,d,\tilde r,\ell} \\
&\le \costind(c,i)\,\rho_{i,d,\tilde r,\ell}
-\costind(d,i)\,\rho_{i,d,\tilde r,\ell} \\
&= (\costind(c,i)-\costind(d,i))\,
\rho_{i,d,\tilde r,\ell}.
\end{align*}
Here, the inequality uses $\rho_{i,d,\tilde r,\ell}\ge \rho_{i,c,\hat r,\ell}$, 
and the equality is just from factoring. Since 
$\costind(c,i)-\costind(d,i)\le 0$, this contribution is non-positive. Also note that since $\rho_{i,d,\tilde r,\ell} \in L_{\text{thr}}$, we get, $\rho_{i,d,\tilde r,\ell} \ge \rho_{\hat{i},\hat{a},\hat{r},\ell_{(\hat{i},\hat{a},\hat{r})}}$ which gives,
\begin{align*}
&\costind(c,i)\,\rho_{i,c,\hat r,\ell}
-\costind(d,i)\,\rho_{i,d,\tilde r,\ell} \\
&\le (\costind(c,i)-\costind(d,i))\,
\rho_{\hat{i},\hat{a},\hat{r},\ell_{(\hat{i},\hat{a},\hat{r})}}.
\end{align*}

For the unpaired terms in $M'_i\setminus M_i$ note that the corresponding $\rho$ values are less than or equal to $\rho_{\hat{i},\hat{a},\hat{r},\ell_{(\hat{i},\hat{a},\hat{r})}}$ by definiton (since they come from surplus terms) and therefore, 
\[\costind(c,i)\,\rho_{i,c,\hat r,\ell} \le \costind(c,i)\,\rho_{\hat{i},\hat{a},\hat{r},\ell_{(\hat{i},\hat{a},\hat{r})}}.\] 
Therefore, the sum,
\[\sum_{\substack{a \in \Gamma \setminus \{w_i\}\\ r^*_{a,i}>r_{a,i}}} \sum_{r=r_{a,i}+1}^{r^*_{a,i}} \costind(a,i) \rho_{i,a,r,\ell^*_{(i,a,r)}} -\sum_{\substack{a \in \Gamma \setminus \{w_i\}\\ r^*_{a,i}<r_{a,i}}} \sum_{r=r^*_{a,i}+1}^{r_{a,i}} \costind(a,i) \rho_{i,a,r,\ell_{(i,a,r)}}\] 
is at most sum of $(\costind(c,i)-\costind(d,i))\,
\rho_{\hat{i},\hat{a},\hat{r},\ell_{(\hat{i},\hat{a},\hat{r})}}$ over the paired indices and the sum of $\costind(c,i)\,
\rho_{\hat{i},\hat{a},\hat{r},\ell_{(\hat{i},\hat{a},\hat{r})}}$ over the additional surplus and since $\rho_{\hat{i},\hat{a},\hat{r},\ell_{(\hat{i},\hat{a},\hat{r})}}$ is a common factor this gives us,
\begin{align*}
&\sum_{\substack{a \in \Gamma \setminus \{w_i\}\\ r^*_{a,i}>r_{a,i}}} \sum_{r=r_{a,i}+1}^{r^*_{a,i}} \costind(a,i) \rho_{i,a,r,\ell^*_{(i,a,r)}} -\sum_{\substack{a \in \Gamma \setminus \{w_i\}\\ r^*_{a,i}<r_{a,i}}} \sum_{r=r^*_{a,i}+1}^{r_{a,i}} \costind(a,i) \rho_{i,a,r,\ell_{(i,a,r)}}\\
& \le \left(\sum_{\substack{a \in \Gamma \setminus \{w_i\}\\ r^*_{a,i}>r_{a,i}}} \sum_{r=r_{a,i}+1}^{r^*_{a,i}} \costind(a,i)  -\sum_{\substack{a \in \Gamma \setminus \{w_i\}\\ r^*_{a,i}<r_{a,i}}} \sum_{r=r^*_{a,i}+1}^{r_{a,i}} \costind(a,i) \right)\rho_{\hat{i},\hat{a},\hat{r},\ell_{(\hat{i},\hat{a},\hat{r})}}
\end{align*}
{\bf Note:} If $r^*_{w_i,i} > r_{w_i,i}$, we would have more deficit than surplus, but we can still use the pairing argument and then use the fact that for all unpaired deficit entries $(d,\tilde r)$, 
\[-\costind(d,i)\,\rho_{i,d,\tilde r,\ell} \le -\costind(d,i)\,\rho_{\hat{i},\hat{a},\hat{r},\ell_{(\hat{i},\hat{a},\hat{r})}}.\] 
to show that the bound is still satisfied.

Summing over all indices, we therefore obtain
\[
v^*\le v +
\left(\sum_{i \in [d]}\left(\sum_{\substack{a \in \Gamma \setminus \{w_i\}\\ r^*_{a,i}>r_{a,i}}} \sum_{r=r_{a,i}+1}^{r^*_{a,i}} \costind(a,i)  -\sum_{\substack{a \in \Gamma \setminus \{w_i\}\\ r^*_{a,i}<r_{a,i}}} \sum_{r=r^*_{a,i}+1}^{r_{a,i}} \costind(a,i) \right)\right)\rho_{\hat{i},\hat{a},\hat{r},\ell_{(\hat{i},\hat{a},\hat{r})}},
\]
which is precisely the desired inequality.
\end{proof}

Note that the Claim~\ref{clm:sum-opt-rho}, gives,
\begin{align*}
v^* \le v+\left(\sum_{i \in [d]}\left(\sum_{\substack{a \in \Gamma \setminus \{w_i\}\\ r^*_{a,i}>r_{a,i}}} \sum_{r=r_{a,i}+1}^{r^*_{a,i}} \costind(a,i) -\sum_{\substack{a \in \Gamma \setminus \{w_i\}\\ r^*_{a,i}<r_{a,i}}} \sum_{r=r^*_{a,i}+1}^{r_{a,i}} \costind(a,i)\right)\right)\rho_{\hat{i},\hat{a},\hat{r},\ell_{(\hat{i},\hat{a},\hat{r})}}
\end{align*}
Note that $\sum_{i \in [d]} \sum_{a \in \Gamma \setminus \{w_i\}} r^*_{a,i} \costind(a,i) \le (1+\epsilon)k \opt = U+\sum_{i \in [d]} \sum_{a \in \Gamma \setminus \{w_i\}} r_{a,i} \costind(a,i)$. Therefore,
\begin{align*}
&\sum_{i \in [d]}\sum_{\substack{a \in \Gamma \setminus \{w_i\}\\ r^*_{a,i}>r_{a,i}}} \sum_{r=r_{a,i}+1}^{r^*_{a,i}} \costind(a,i) -\sum_{i \in [d]}\sum_{\substack{a \in \Gamma \setminus \{w_i\}\\ r^*_{a,i}<r_{a,i}}} \sum_{r=r^*_{a,i}+1}^{r_{a,i}} \costind(a,i) \\
&=  \sum_{i \in [d]} \sum_{a \in \Gamma \setminus \{w_i\}} r^*_{a,i} \costind(a,i)-\sum_{i \in [d]} \sum_{a \in \Gamma \setminus \{w_i\}} r_{a,i} \costind(a,i)\\
& \le U
\end{align*}
Therefore,
\begin{align*}
v^* & \le v+U \cdot \rho_{\hat{i},\hat{a},\hat{r},\ell_{(\hat{i},\hat{a},\hat{r})}}\\
& \le v+(k+1)(k-1)
\end{align*}
as desired.
\end{proof}

Now, we prove~\autoref{thm:sum-dispersion-largeD}. 

\begin{proof}[Proof of~\autoref{thm:sum-dispersion-largeD}]
\autoref{lem:opt-v-bound-k} implies
\[
\frac{v}{v^*} \ge \frac{v^*-(k-1)(k+1)}{v^*} = 1 - \frac{(k - 1)(k + 1)}{v^*}
\]
and~\autoref{lem:opt-l-bound-k} implies
\[
v^* \ge \frac{(k-1)(k+1)}{4} D^* \Rightarrow \frac{(k - 1)(k + 1)}{v^*} \le \frac{4(k-1)(k + 1)}{(k-1)(k+1) D^*} 
\]
and therefore,
\[
\frac{(k - 1)(k + 1)}{v^*} \le \frac{4}{D^*} 
\]
Combining these, we get,
\[
v \ge \left(1-\frac{4}{D^*}\right)v^*
\]

We can see that in this case, the following runtime guarantees hold. Note that using $O(nd)$ memory, we can calculate $w$ and also calculate all $\costind(a,i)$ for $i \in [d]$ and $a \in \Gamma$ in $O(nd|\Gamma|)$ time in the worst case. Since there are $O(dk^2 |\Gamma|)$ entries in $L$, sorting $L$ takes $O(dk^2 |\Gamma|\log \left(dk^2 |\Gamma|\right))$ time and preprocessing takes $O(d^2k^4 |\Gamma|^2)$ and finding $L_{\text{thr}}$ in $\sdalg$ algorithm takes $O(\log dk^2 |\Gamma|)$ searches (using binary search) and for each search, the loop takes $O(dk^3 |\Gamma|)$ time. Finding the final solution takes $O(dk^3 |\Gamma|)$ time. Therefore, overall time complexity is $O(nd|\Gamma|+d^2k^4 |\Gamma|^2)$.
\end{proof}

With~\autoref{thm:sum-dispersion-largeD} established, we are ready to prove~\autoref{thm:sum-dispersion-k}.

\begin{proof}[Proof of~\autoref{thm:sum-dispersion-k}]
Let $\delta>0$ be some parameter. We will first consider the case when $D^*$ is large, i.e., $D^* \ge \frac{4}{\delta}$. Then as a direct implication of~\autoref{thm:sum-dispersion-largeD} we get, $v \ge (1-\delta) v^*$.

Now consider the case when $D^* < \frac{4}{\delta}$. We will show that there exists a PTAS that gives $v \ge (1-\delta)v^*$. We can see that in this case, for any $(1+\epsilon)$ approximate median string $s$, it can differ from $w$ in at most $D^*$ indices. Therefore, we can see that there can be at most $N = |\Gamma|^{\frac{4}{\delta}}d^{\frac{4}{\delta}}$ such strings. We claim that in this case, since we have a bound (that is, polynomial in $d,|\Gamma|$) on the number of strings, there is a PTAS for this instance. To see this, we will utilize the work of~\cite{cevallos2015maxsum}, which shows that if we consider the max sum dispersion of a set of points under a ``negative-type'' metric and matroid constraints, the problem admits a PTAS through a rounding of a quadratic program. We will first show that Hamming distance is a ``negative-type'' metric. A ``negative-type'' metric is defined as follows: Let $S = \{s^{(1)},s^{(2)},\dots,s^{(n)}\}$ be a given set of points and $d(.,.)$ be a metric. $d(.,.)$ is called a ``negative-type'' metric if for any $\{b_1,b_2,\dots,b_n\}$ where $\sum_{i=1}^n b_i = 0$, $\sum_{i,j} b_ib_j d(s^{(i)},s^{(j)}) \le 0$.

In order to show that Hamming distance is a ``negative-type'' metric, we can consider the following mapping of the strings in $\Gamma^d$ to vectors in $\mathbf{R}^{|\Gamma|d}$. We will use this mapping to show that the Hamming distance in the strings is equivalent to the $\ell_2$ distance in the mapped vectors. Let $s$ be any string. Assuming $\Gamma$ has some lexicographical ordering (if not, we can assign some ordering), for each $a \in \Gamma$, we can define a unique basis vector $e_a$ where $e_a$ has $1$ in the index of $a$ and $0$ everywhere else. Then, we define the mapping for the string $s$ as,
\[\phi(s) = \frac{1}{\sqrt{2}} \begin{bmatrix}e_{s_1} & e_{s_2} & \dots & e_{s_i} & \dots & e_{s_d}\end{bmatrix}\]
We will now show that for any two strings $s,\hat{s}$, $H(s,\hat{s}) = \|\phi(s)-\phi(\hat{s})\|^2$. Note that, 
\begin{align*}
\|\phi(s)-\phi(\hat{s})\|^2 = \frac{1}{2} \sum_{i=1}^d \|e_{s_i}-e_{\hat{s}_i}\|^2 =  \sum_{i=1}^d \frac{1}{2}\|e_{s_i}-e_{\hat{s}_i}\|^2
\end{align*}
It is easy to see that if $s_i = \hat{s}_i$ then $\frac{1}{2}\|e_{s_i}-e_{\hat{s}_i}\|^2 = 0$ and otherwise it is $\frac{1}{2}\|e_{s_i}-e_{\hat{s}_i}\|^2 = \frac{1}{2}\cdot 2=1$. Therefore, 
\begin{align*}
\|\phi(s)-\phi(\hat{s})\|^2 =  \sum_{i=1}^d \ind{s_i \neq \hat{s}_i} = H(s,\hat{s})
\end{align*}
Therefore,
\begin{align*}
&\sum_{i=1}^n \sum_{j=1}^n b_i b_j H(s^{(i)},s^{(j)}) \\
&= \sum_{i=1}^n \sum_{j=1}^n b_i b_j \|\phi(s^{(i)})-\phi(s^{(j)})\|^2 \\
&= \sum_{i=1}^n \sum_{j=1}^n b_i b_j \left(\langle \phi(s^{(i)}),\phi(s^{(i)})\rangle+\langle \phi(s^{(j)}),\phi(s^{(j)})\rangle-2\langle \phi(s^{(i)}),\phi(s^{(j)})\rangle\right) \\
&= \sum_{i=1}^n \sum_{j=1}^n b_i b_j \langle \phi(s^{(i)}),\phi(s^{(i)})\rangle+ \sum_{i=1}^n \sum_{j=1}^n b_i b_j \langle \phi(s^{(j)}),\phi(s^{(j)})\rangle-2\sum_{i=1}^n \sum_{j=1}^n b_i b_j \langle \phi(s^{(i)}),\phi(s^{(j)})\rangle \\
&= \sum_{j=1}^n b_j \left(\sum_{i=1}^n b_i  \langle \phi(s^{(i)}),\phi(s^{(i)})\rangle\right)+ \sum_{i=1}^n b_i \left(\sum_{j=1}^n b_j \langle \phi(s^{(j)}),\phi(s^{(j)})\rangle\right)-2\sum_{i=1}^n \sum_{j=1}^n b_i b_j \langle \phi(s^{(i)}),\phi(s^{(j)})\rangle \\
&= -2\sum_{i=1}^n \sum_{j=1}^n b_i b_j \langle \phi(s^{(i)}),\phi(s^{(j)})\rangle \quad \text{(Since $\sum_{i=1}^n b_i = 0$)}\\
&=-2\langle\sum_{i=1}^n \phi(s^{(i)}),\sum_{j=1}^n \phi(s^{(j)})\rangle = -2\|\sum_{i=1}^n \phi(s^{(i)})\|^2 \le 0
\end{align*}
Therefore, Hamming distance is a ``negative-type'' metric. Given this, we can consider the quadratic programming framework of~\cite{cevallos2015maxsum}. Let $\hat{X}$ be the potential $(1+\epsilon)$ approximate medians (we know that $|\hat{X}| \le N$). Let $M \in \mathbf{R}^{|\hat{X}| \times |\hat{X}|}$ be the distance matrix for $\hat{X}$. Then, we consider the quadratic program,
\begin{align*}
&\text{maximize } x^\top M x\\
&\text{subject to }\\
& \sum_{i=1}^{|\hat{X}|} x_i = k\\
& x_i \in \{0,1\} \text{ for all } i \in [|\hat{X}|]
\end{align*}
where $x_i$ acts as an indicator to indicate whether the $i$th string in $\hat{X}$ has been selected. Note that the only constraint we have is $\sum_{i=1}^{|\hat{X}|} x_i = k$ and we can represent this using a matroid $(E,I)$ where $E = \hat{X}$ and $I = \{S \subseteq \hat{X}|\;|S| \le k\}$. Therefore, we get a matroid constraint, and our metric is a ``negative-type'' metric. Therefore, we can apply the PTAS of~\cite{cevallos2015maxsum} to find a solution such that the sum dispersion $v \ge (1-\delta)v^*$. Since this algorithm is polynomial on $N,d,k$, we get that it is polynomial on $n,d,k,|\Gamma|$ as well (since $N$ is polynomial in $d,|\Gamma|$). This completes the proof of~\autoref{thm:sum-dispersion-k}.
\end{proof}

\section{Approximation Algorithm for Min Dispersion: \texorpdfstring{$k$}{k} Hamming Medians} 
\label{app:min-dispersion-exact}

\exactkon*

In the rest of this section, we will establish~\autoref{thm:exact-kon}. First, we derive a dynamic programming–based algorithm that exactly solves the minimum dispersion problem. Formally,

\begin{lemma}\label{lem:min-dispersion-DP}
Given a set of strings $X$ and a parameter $k$, there exists an algorithm that finds the min dispersion in $O(nd\log \min\left(n,|\Gamma|\right)+|\Gamma|^k d^{{\frac{k(k-1)}{2}}+1})$ time.
\end{lemma}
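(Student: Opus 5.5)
We reduce the problem to the per-index structure of exact medians and then run a dynamic program over the pattern of pairwise differences. By \autoref{lem:MCC-opt}, a string $s$ is an exact median if and only if $s_i \in \Gamma_i$ for every $i$, where $\Gamma_i$ is the set of most frequent characters at index $i$. We first compute all $\Gamma_i$. As in the proof of \autoref{thm:sum-dispersion-k-exact}, this takes $O(nd\log\min(n,|\Gamma|))$ time: count the characters per column and keep the maximizers. Only indices in $T=\{i : |\Gamma_i|\ge 2\}$ can contribute to any pairwise distance. The task is therefore to choose $k$ strings, each with $i$-th character in $\Gamma_i$, that maximize $\min_{p<q} H(s^{(p)},s^{(q)})$.

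The plan is to fill the $k$ strings column by column while tracking the vector of pairwise distances accumulated so far. Define the state after processing the first $j$ indices to be the $\binom{k}{2}$-tuple $(h_{pq})_{p<q}$, where $h_{pq}=|\{i\le j : s^{(p)}_i\neq s^{(q)}_i\}|\in\{0,\dots,d\}$. Let $\mathrm{DP}[j]$ be the set of tuples that are reachable after $j$ indices. The transition from $j-1$ to $j$ enumerates all assignments $(c_1,\dots,c_k)\in\Gamma_j^k$ of characters to the $k$ strings; there are at most $|\Gamma|^k$ of them. Each assignment adds the indicator vector $(\ind{c_p\neq c_q})_{p<q}$ to every reachable tuple. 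We store back-pointers so that the $k$ strings can be reconstructed. At the end, the answer is $\max_{h\in \mathrm{DP}[d]}\min_{p<q}h_{pq}$, and backtracking recovers the strings. Correctness is immediate, since any $k$ exact medians induce exactly such a sequence of per-column assignments and their distance tuple.

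For the running time, the number of possible tuples is at most $(d+1)^{k(k-1)/2}$. There are $d$ columns, and each reachable tuple is combined with at most $|\Gamma|^k$ column assignments. This gives $O(|\Gamma|^k d^{k(k-1)/2+1})$, up to the $\mathrm{poly}(k)$ cost of updating a tuple, which we absorb. Adding the preprocessing cost yields the stated bound.

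The main obstacle is making the state space small enough to match the claimed exponent. A naive state would track the actual strings, which is exponential in $d$. Tracking only the pairwise counts works because the objective depends solely on these counts and each column's contribution is independent of the earlier columns. The remaining care is in the bookkeeping: bounding the table size by $d^{k(k-1)/2}$ (the counts are bounded by $j\le d$), and storing back-pointers per column so that reconstruction stays within the same time and space.
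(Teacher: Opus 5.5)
Your proposal is correct and follows essentially the same route as the paper. The paper also uses a reachability DP indexed by the column and by the $\binom{k}{2}$-tuple of pairwise distance counts. Its transitions enumerate the at most $|\Gamma|^k$ assignments in $\Gamma_{\ell+1}^k$, after an $O(nd\log\min(n,|\Gamma|))$ preprocessing step that computes the sets $\Gamma_i$.

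Your forward propagation with back-pointers is the paper's backward formulation read in the other direction, and it makes the reconstruction of the $k$ medians explicit. The paper glosses over the $(d+1)^{k(k-1)/2}$ versus $d^{k(k-1)/2}$ state count and the $\mathrm{poly}(k)$ per-update cost in the same way you do. This is harmless for constant $k$, which is the regime where the lemma is used.
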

\begin{proof}
We first define the following dynamic program: 

For each index $i \in [d]$, let $\Gamma_i = \{e \in \Gamma : |{x \in X : x_i = e}| = \freq{i}{w}\}$. Note that the sets $\Gamma_i$ can be computed in $O(nd\log \min\left(n,|\Gamma|\right))$ time in the worst case by using $O(nd)$ memory. Define a dynamic program $\textsc{MinDispersion-DP}[d_{1,2}, d_{1,3}, \dots, d_{k-1,k}, \ell]$ such that,
\[
\textsc{MinDispersion-DP}[d_{1,2}, d_{1,3}, \dots, d_{k-1,k}, \ell] = \text{True}
\]
if there exist $k$ strings $s_1,\dots,s_k \in \Gamma_1 \times \cdots \times \Gamma_\ell$ such that $H(s_i,s_j)=d_{i,j}$ for all pairs. The DP state space has size $d^{\frac{k(k-1)}{2}} \cdot d$ (since each $d_{i,j}$ can take values up to $d$ and since $\ell$ is bounded by $d$). Suppose we want to compute 
\[
\textsc{MinDispersion-DP}(d_{1,2},\dots,d_{k-1,k}, \ell+1).
\]
For this to be true, there must exist a set of $k$ strings $\{s_1,\ldots,s_k\}$ in $\Gamma_1 \times \cdots \times \Gamma_\ell$ and an assignment of characters $\{a_1,\dots,a_k\}$ ($a_i \in \Gamma_{\ell+1}$ for all $i\in [k]$) such that
\[
\textsc{MinDispersion-DP}(d_{1,2} - \ind{a_1 \neq a_2}, \dots, d_{i,j} - \ind{a_i \neq a_j}, \dots, d_{k-1,k} - \ind{a_{k-1} \neq a_k}, \ell)
\]
is marked True. Since there are $|\Gamma_{\ell+1}|^k \le |\Gamma|^k$ possible assignments $\{a_1,\dots,a_k\}$, each state update requires at most $|\Gamma|^k$ operations giving overall runtime $O(|\Gamma|^k d^{{\frac{k(k-1)}{2}}+1})$. To extract the solution, we simply check all states of the form $(d_{1,2},\ldots,d_{k-1,k},d)$, which takes $O(d^{\frac{k(k-1)}{2}})$ time. 
\end{proof}

Next, we consider the case where $D^*$ is large enough, and show that one can obtain a $(1-\delta)$-approximation to the minimum dispersion.
\begin{lemma}\label{lem:uniform-exact}
Let $D^* \ge \frac{4}{\delta^2} \left(2\log k+1\right)$ where $\delta>0$. Then there exists an algorithm that, with probability at least $1-\eta$, outputs a set of Hamming median strings $S$ such that $\textsc{minDp}(S) \ge (1-\delta)\sum_{i \in [d]} \frac{|\Gamma_i|-1}{|\Gamma_i|}$ in $O(nd\log \min \left(n,|\Gamma|\right)+\left(kd|\Gamma|+k^2d\right) \log \frac{1}{\eta})$ time, where for all $i \in [d]$, $\Gamma_i = \{e \in \Gamma : |{x \in X : x_i = e}| = \freq{i}{w}\}$.
\end{lemma}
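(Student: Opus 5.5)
The plan is to analyze the randomized construction sketched in the technical overview: compute each $\Gamma_i$, then draw $k$ strings independently, setting each coordinate $i$ to a uniformly random element of $\Gamma_i$. By \autoref{lem:MCC-opt} and \autoref{lem:opt-offset}, any string whose $i$-th character lies in $\Gamma_i$ for every $i$ has cost exactly $\opt$. So every sampled string is an exact Hamming median, and only the dispersion bound needs proof.

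Let $\mu := \sum_{i\in[d]} \frac{|\Gamma_i|-1}{|\Gamma_i|}$. Fix a pair of sampled strings $s_p,s_q$. Then $H(s_p,s_q)=\sum_i Z_i$, where the $Z_i=\ind{(s_p)_i\neq (s_q)_i}$ are independent Bernoulli variables with $\Pr[Z_i=1]=1-1/|\Gamma_i|$. Hence $\exv[H(s_p,s_q)]=\mu$. The multiplicative Chernoff bound gives
\[
\Pr\bigl[H(s_p,s_q) < (1-\delta)\mu\bigr] \le e^{-\delta^2\mu/2}.
\]
Taking a union bound over the $\binom{k}{2}<k^2$ pairs, the probability that $\mDp(S)<(1-\delta)\mu$ is at most $k^2 e^{-\delta^2\mu/2}$.

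The next step is to relate $\mu$ to $D^*$. By \autoref{thm:diverse-fin-exact} and its proof, $D^*$ equals the number of indices with $|\Gamma_i|\ge 2$. Each such index contributes at least $1/2$ to $\mu$, so $\mu\ge D^*/2$. Plugging in the hypothesis $D^*\ge \frac{4}{\delta^2}(2\log k+1)$ gives $\delta^2\mu/2 \ge 2\log k+1$. The failure probability of a single trial is therefore at most $k^2 e^{-2\log k -1}\le e^{-1}$ when $\log$ is the natural logarithm; with base-2 logarithms the bound is only better. Each trial thus succeeds with constant probability.

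To reach confidence $1-\eta$, repeat the sampling $O(\log\frac1\eta)$ times independently. For each trial, compute all pairwise distances and check the exact condition $\mDp(S)\ge(1-\delta)\mu$; $\mu$ itself is computable in $O(d)$ time. Output the first trial that passes. The probability that all trials fail is at most $e^{-\Theta(\log(1/\eta))}\le\eta$.

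The running time has three parts:
\begin{itemize}
\item computing the $\Gamma_i$ takes $O(nd\log\min(n,|\Gamma|))$, as in \autoref{lem:min-dispersion-DP};
\item each trial samples $k$ strings in $O(kd|\Gamma|)$, or less with precomputed lists;
\item each trial checks all pairs in $O(k^2d)$.
\end{itemize}
Together these match the stated bound.

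The main obstacle is the constant bookkeeping. The Chernoff exponent must be matched with the $\frac{4}{\delta^2}(2\log k+1)$ threshold through the inequality $\mu\ge D^*/2$. I would first check that $e^{-\delta^2\mu/2}$, rather than $e^{-\delta^2\mu/3}$, is the valid lower-tail form, which it is. Should the constants be off by a small factor, a per-trial constant success probability is still enough, because the repetition absorbs it.
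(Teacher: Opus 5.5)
Your proposal is correct and follows the paper's proof essentially step for step: uniform sampling from each $\Gamma_i$, the bound $\mu \ge D^*/2$ from the $D^*$ indices with $|\Gamma_i|\ge 2$, the lower-tail Chernoff bound $e^{-\delta^2\mu/2}$, a union bound over the fewer than $k^2$ pairs for constant per-trial success, and $O(\log\frac{1}{\eta})$ independent repetitions. The only cosmetic difference is that you output the first trial passing an explicit check of $\mDp(S)\ge(1-\delta)\mu$, whereas the paper keeps the trial with the largest min dispersion; both give the same guarantee and running time.
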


\begin{proof}
Consider the following set of strings $S$ such that $|S| = k$: For any $s \in S$, for any $i \in [d]$, $s_i$ is uniformly sampled from $\Gamma_i$.  Note that for any $s \in S$, $s$ is a Hamming median. This comes from a direct application of~\autoref{lem:MCC-opt} since for any $i \in [d]$, $f_i^w = f_i^s$ by definition. 

Let $T \subseteq [d]$ be the set of indices where $|\Gamma_i| \ge 2$. Note that for all $i \not \in T$, $\Gamma_i = \{w_i\}$ and $|T| = D^*$. 

Consider any two strings $\hat{s},\tilde{s} \in S$. For all $i \in [d]$, let $z_i=1$ if $\hat{s}_i \neq \tilde{s}_i$ and $0$ otherwise. Then, $Pr(z_i = 1) = 1-\frac{1}{|\Gamma_i|}$. Furthermore, $H\left(\hat{s},\tilde{s}\right) = \sum_{i\in [d]} z_i$. Let $\mu = \exv\left(H\left(\hat{s},\tilde{s}\right)\right)$. Note that, $\mu = \exv \left(  \sum_{i\in [d]} z_i \right) = \sum_{i \in [d]} \exv \left(z_i\right) = \sum_{i \in [d]} \left(1-\frac{1}{|\Gamma_i|}\right)$. Also note that since for all $i \not \in T$, $|\Gamma_i| = 1$  and  for all $i \in T$, $|\Gamma_i|\ge 2$, we get $\mu = \sum_{i \in [d]} \left(1-\frac{1}{|\Gamma_i|}\right) = \sum_{i \in T} \left(1-\frac{1}{|\Gamma_i|}\right) \ge \sum_{i \in T} \frac{1}{2} = \frac{|T|}{2} = \frac{D^*}{2}$. Then, 
\begin{align*}
Pr\left(H\left(\hat{s},\tilde{s}\right) \le (1-\delta)\mu\right) &\le e^{-\frac{\delta^2 \mu}{2}} \quad \text{(by Chernoff bounds)}\\
& \le e^{-\frac{\delta^2 D^*}{4}} \quad \text{(Since $\mu \ge \frac{D^*}{2}$)} \\
& = \frac{1}{2k^2}
\end{align*}
Note that since $|S|=k$, there are at most $k^2$ pairs of strings $\hat{s},\tilde{s}$ in $S$. Therefore, by union bound,
\begin{align*}
Pr\left(\exists \hat{s},\tilde{s} \in S \text{ s.t. } H\left(\hat{s},\tilde{s}\right) \le (1-\delta)\mu\right) \le k^2 \frac{1}{2k^2} = \frac{1}{2}
\end{align*}

We can execute this process $N=\log \frac{1}{\eta}$ times and then select the output that maximizes the min dispersion. Let $\{S^{(1)},S^{(2)},\dots,S^{(N)}\}$ represent the $N$ outputs obtained by repeating the sampling process $N$ times, and let $\hat{S} \in \{S^{(1)},S^{(2)},\dots,S^{(N)}\}$ denote the output with the maximum min dispersion. Then, we observe that
\begin{align*}
Pr\left(\forall i \in [N] \exists \hat{s},\tilde{s} \in S^{(i)} \text{ s.t. } H\left(\hat{s},\tilde{s}\right) \le (1-\delta)\mu\right) \le \frac{1}{2^N} = \eta
\end{align*}
and consequently, with probability at least $1-\eta$, there exists at least one $S^{(i)}$ such that $\forall \hat{s},\tilde{s} \in S^{(i)}$, $H\left(\hat{s},\tilde{s}\right) \ge (1-\delta)\sum_{i \in [d]} \frac{|\Gamma_i|-1}{|\Gamma_i|}$. By definition, since $\hat{S}$ is the output achieving the maximum min dispersion, this implies $\forall \hat{s},\tilde{s} \in \hat{S}$, $H\left(\hat{s},\tilde{s}\right) \ge (1-\delta)\sum_{i \in [d]} \frac{|\Gamma_i|-1}{|\Gamma_i|}$, and therefore $\hat{S}$ is a solution.

Note that we can use $O(nd)$ memory and find all the $\Gamma_i$ in $O(nd\log \min \left(n,|\Gamma|\right))$ time. Since the algorithm involves generating $k$ randomly sampled strings, the time complexity for generating $k$ strings is bounded by $O(kd|\Gamma|)$. Therefore, overall time complexity for sampling and generation is $O\left(nd\log \min \left(n,|\Gamma|\right)+kd|\Gamma|\right)$. Since we repeat this process for $\log \frac{1}{\eta}$ steps, we get a total runtime of $O\left(nd\log \min \left(n,|\Gamma|\right)+kd|\Gamma| \log \frac{1}{\eta} \right)$. Since for each $S^{(i)}$ we need to calculate the min dispersion and we need to select the maximum, we get additional $O(k^2d \log\frac{1}{\eta})$. Therefore, the overall time complexity is $O(nd\log \min \left(n,|\Gamma|\right)+\left(kd|\Gamma|+k^2d\right) \log \frac{1}{\eta})$. 
\end{proof}

Next, we consider the scenario in which $D^*$ is small, and demonstrate that a solution achieving a $1/2$-approximation to the minimum dispersion can be obtained.
\begin{lemma}\label{lem:exact-small}
Let $D^* < \frac{4}{\delta^2} \left(2\log k+1\right)$ where $\delta>0$. Then there exists an algorithm that outputs a set of $k$ Hamming median strings $S$ such that $\textsc{minDp}(S) \ge \frac{1}{2}t^*$ in $O(nd\log \min\left(n,|\Gamma|\right)+|\Gamma|^{\frac{4}{\delta^2}}\cdot k^{2+\frac{8}{\delta^2} \log |\Gamma|})$ time.
\end{lemma}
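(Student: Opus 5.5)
The plan is to enumerate the whole candidate space explicitly and then run the classical greedy (farthest-point) heuristic for max-min dispersion of~\cite{Ravi1994Heuristic} on it.

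\textbf{Candidate space.} First compute the sets $\Gamma_i$ of most frequent characters for every $i\in[d]$; this takes $O(nd\log\min(n,|\Gamma|))$ time, as in the proof of~\autoref{lem:min-dispersion-DP}. Let $T=\{i:|\Gamma_i|\ge 2\}$. By~\autoref{lem:MCC-opt}, every exact median $s$ has $s_i\in\Gamma_i$ for all $i$, and conversely every such string is a median. Hence the set of exact medians is exactly $\mathcal{C}=\prod_{i\in T}\Gamma_i$, with the coordinates outside $T$ fixed to $w_i$. As in the proof of~\autoref{lem:uniform-exact}, we have $|T|=D^*$: the construction in~\autoref{thm:diverse-fin-exact} shows that two medians can differ exactly on $T$. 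Therefore
\[
|\mathcal{C}|\le|\Gamma|^{D^*}<|\Gamma|^{\frac{4}{\delta^2}(2\log k+1)}=|\Gamma|^{\frac{4}{\delta^2}}\cdot k^{\frac{8}{\delta^2}\log|\Gamma|},
\]
using $|\Gamma|^{\log k}=k^{\log|\Gamma|}$. We enumerate $\mathcal{C}$ explicitly. If $|\mathcal{C}|<k$, repeated strings must be allowed, and then $t^*=0$ and the claim is trivial; I would note this edge case briefly.

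\textbf{Greedy.} On $\mathcal{C}$, run the greedy: pick an arbitrary first string, then repeatedly add the candidate that maximizes the minimum Hamming distance to the strings chosen so far, until $k$ strings are chosen. Maintaining, for each candidate, its current distance to the chosen set costs $O(|\mathcal{C}|)$ distance updates per round, giving $O(k|\mathcal{C}|)$ updates in total. Each distance only needs to be evaluated on the $D^*=O(\log k)$ coordinates of $T$, and that polylogarithmic factor can be absorbed. This yields the claimed $k^{2+\dots}$ bound, with one spare factor of $k$ to spare.

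\textbf{The 1/2 guarantee.} This is the standard argument. Let $O$ be an optimal set of $k$ medians with pairwise distances at least $t^*$. Suppose the greedy set $G$, at some step $j<k$, has all remaining candidates within distance less than $t^*/2$ of $G_j$. Then each of the $k$ points of $O$ lies within distance less than $t^*/2$ of some point of $G_j$. By pigeonhole ($j<k$), two points of $O$ share the same nearest greedy point, so by the triangle inequality their distance is less than $t^*$, a contradiction. Hence every greedy step adds a point at distance at least $t^*/2$, and so $\textsc{minDp}(G)\ge t^*/2$.

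\textbf{Main obstacle.} The main obstacle is only bookkeeping: justifying $|T|=D^*$ and matching the stated running time exponent. The approximation argument itself is classical.
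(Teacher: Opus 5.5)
Your proposal is correct and follows the paper's own proof: bound the set of exact medians by $|\Gamma|^{D^*}$, enumerate it, and run the greedy max-min dispersion heuristic of~\cite{Ravi1994Heuristic}, which gives the $1/2$ guarantee via the triangle inequality. You additionally spell out details the paper only asserts or cites: that $|T| = D^*$, the pigeonhole proof of the greedy's guarantee, the $|\mathcal{C}|<k$ edge case, and an incremental implementation of the greedy whose extra $O(D^*)$ cost per distance evaluation fits within the spare factor of $k$ when $\delta$ is treated as a constant.
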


\begin{proof}
Note that $D^* < \frac{4}{\delta^2} \left(2\log k+1\right)$ implies that the total number of possible median strings is bounded by 
\begin{align*}
|\Gamma|^{D^*} \le |\Gamma|^{\left(\frac{4}{\delta^2} \left(2\log k+1\right)\right)} = |\Gamma|^{\frac{4}{\delta^2}}\cdot k^{\frac{8}{\delta^2} \log |\Gamma|}
\end{align*} 
Note that the sets $\Gamma_i$ can be computed in $O(nd\log \min\left(n,|\Gamma|\right))$ time in the worst case by using $O(nd)$ memory and we can enumerate all candidate median strings in $O\left(|\Gamma|^{\frac{4}{\delta^2}}\cdot k^{\frac{8}{\delta^2} \log |\Gamma|} \right)$ time. Since Hamming distance satisfies triangle inequality, the greedy algorithm of~\cite{Ravi1994Heuristic} guarantees a set of $k$ strings with minimum dispersion at least $\frac{1}{2}t^*$, and its runtime is $O\left(k \cdot k \cdot |\Gamma|^{\frac{4}{\delta^2}}\cdot k^{\frac{8}{\delta^2} \log |\Gamma|}\right) = O \left(|\Gamma|^{\frac{4}{\delta^2}}\cdot k^{2+\frac{8}{\delta^2} \log |\Gamma|}\right)$. Thus, the overall time complexity in this case is $O\left(nd\log \min\left(n,|\Gamma|\right)+|\Gamma|^{\frac{4}{\delta^2}}\cdot k^{2+\frac{8}{\delta^2} \log |\Gamma|}\right)$.
\end{proof}

Finally, given~\autoref{lem:min-dispersion-DP},~\autoref{lem:uniform-exact}, and~\autoref{lem:exact-small}, we will now prove the~\autoref{thm:exact-kon}.

\begin{proof}[Proof of~\autoref{thm:exact-kon}] 

Let $k \le \frac{1}{\delta}$. Then,~\autoref{lem:min-dispersion-DP} directly implies we can calculate the exact solution to the min dispersion problem in $O(|\Gamma|^{\frac{1}{\delta}} d^{\frac{1}{2\delta^2}})$ time.

Now consider the case when $D^* < \frac{4}{\delta^2} \left(2\log k+1\right)$ and $k > \frac{1}{\delta}$. Then~\autoref{lem:exact-small} results in an algorithm with the desired guarantees.

Finally, when $D^* \ge \frac{4}{\delta^2} \left(2\log k+1\right)$ and $k > \frac{1}{\delta}$, from~\autoref{lem:uniform-exact}, we get a set of $k$ Hamming median strings $S$ such that $\textsc{minDp}(S) \ge (1-\delta) \sum_{i\in [d]} \frac{|\Gamma_{i}|-1}{|\Gamma_{i}|}$. Let $t^*$ be the optimal min dispersion. If $t^* \le \sum_{i\in [d]} \frac{|\Gamma_{i}|-1}{|\Gamma_{i}|}$ we are done. Now assume $t^* > \sum_{i\in [d]} \frac{|\Gamma_{i}|-1}{|\Gamma_{i}|}$. Then, the Generalized Plotkin Bound (\autoref{lem:plotkin-thm}) ensures that 
\[k \le \frac{t^*}{t^*-\sum_{i\in [d]} \frac{|\Gamma_{i}|-1}{|\Gamma_{i}|}}.\] 
which implies, 
\[t^* \le \left(1+\frac{1}{k-1}\right) \sum_{i\in [d]} \frac{|\Gamma_{i}|-1}{|\Gamma_{i}|}.\] 
Therefore,
\[\textsc{minDp}(S) \ge \frac{1-\delta}{1+\frac{1}{k-1}} t^* = \left(1-\frac{1}{k}\right)(1-\delta)t^* \ge (1-2\delta)t^*.\] 
Hence this yields a $(1-2\delta)$-approximation in $O\left(nd\log \min\left(n,|\Gamma|\right)+kd|\Gamma|+k^2d\log \frac{1}{\eta}\right)$ time.
\end{proof}

\section{Bi-criteria Approximation for Min Dispersion: \texorpdfstring{$k$}{k} Approximate Hamming Medians} 
\label{app:min-dispersion-approx}

\mindisapproxk*

In the rest of this section, we will establish~\autoref{thm:mindis-approx-k}. First, we derive a dynamic programming–based algorithm that exactly solves the minimum dispersion problem for approximate medians. Formally,

\begin{lemma}\label{lem:min-dispersion-approx-DP}
Given a set of strings $X$ and a parameter $k$, there exists an algorithm that finds the min dispersion for $k$ $(1+\epsilon)$-approximate medians in $O((1+\epsilon)^k|\Gamma|^k n^k d^{\frac{k(k+1)}{2}+1})$ time.
\end{lemma}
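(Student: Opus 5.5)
We will mirror the dynamic program of \autoref{lem:min-dispersion-DP}, but now enlarge the state so that it also tracks how much of the median budget each of the $k$ strings has used. By \autoref{lem:opt-offset}, a string $s$ is a $(1+\epsilon)$-approximate median iff $\sum_{i: s_i\neq w_i} (\freq{i}{w}-\freq{i}{s}) \le \epsilon\opt$. Each summand is a nonnegative integer and $\opt \le nd$, so the accumulated cost of every candidate string is an integer in $\{0,1,\dots,\lfloor \epsilon \opt\rfloor\}$.

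We first compute $w=\mcc(X)$, $\opt$, and all frequencies $\freq{i}{a}$ for $i\in[d]$, $a\in\Gamma$ in $O(nd+d|\Gamma|)$ time. We then define a Boolean table
\[
\textsc{DP}[d_{1,2},\dots,d_{k-1,k},\, c_1,\dots,c_k,\, \ell],
\]
which is True iff there exist $s_1,\dots,s_k\in\Gamma^\ell$ with $H(s_i,s_j)=d_{i,j}$ for all pairs and $\sum_{t\le \ell,\, (s_j)_t\neq w_t}(\freq{t}{w}-\freq{t}{s_j}) = c_j$ for every $j$. The base case is $\ell=0$, with all coordinates zero.

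For the transition from $\ell$ to $\ell+1$, we enumerate all $|\Gamma|^k$ tuples $(a_1,\dots,a_k)\in\Gamma^k$ for column $\ell+1$. For each tuple we subtract $\ind{a_i\neq a_j}$ from each $d_{i,j}$ and subtract the index cost $\freq{\ell+1}{w}-\freq{\ell+1}{a_j}$ from each $c_j$. The state is True iff, for some tuple, all resulting values are nonnegative and the predecessor state is True. Correctness follows by induction on $\ell$, exactly as before, because Hamming distances and median costs are both additive over coordinates.

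At the end, we scan all states with $\ell=d$ and $c_j\le \epsilon\opt$ for all $j$, and output the maximum of $\min_{i<j} d_{i,j}$ over True states. Storing back-pointers lets us recover the strings themselves.

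The step that needs the most care is the state-space count, because the claimed bound $O((1+\epsilon)^k|\Gamma|^k n^k d^{\frac{k(k+1)}{2}+1})$ must come out of it.
\begin{itemize}
\item There are $\binom{k}{2}$ distance coordinates, each ranging over $d+1$ values.
\item There are $k$ cost coordinates, each ranging over $\lfloor\epsilon\opt\rfloor+1 \le (1+\epsilon)nd$ values. Here we use $\opt\le nd$, so $\epsilon\opt+1\le (1+\epsilon)nd$ once $nd\ge1$.
\item The layer index $\ell$ contributes one more factor of $d$.
\end{itemize}
Multiplying these, the number of states is $O\big(d^{k(k-1)/2}\cdot ((1+\epsilon)nd)^k\cdot d\big) = O\big((1+\epsilon)^k n^k d^{\frac{k(k+1)}{2}+1}\big)$, since $k(k-1)/2+k=k(k+1)/2$. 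Each state does $|\Gamma|^k$ work, and the final scan is dominated by this. Together these give the stated running time.

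Two small points need checking. First, costs exceeding $\epsilon\opt$ can be pruned immediately, so the table never has to index beyond that range. Second, the case $\opt=0$ is trivial: then only $w$ qualifies, all $c_j$ are zero, and the bound still holds.
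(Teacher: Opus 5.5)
Your proposal is correct and uses essentially the same dynamic program as the paper. The state consists of the $\binom{k}{2}$ pairwise distances, $k$ cost coordinates and the layer index $\ell$, with $|\Gamma|^k$ transitions per state, which gives the same count $d^{k(k-1)/2}\cdot d\cdot((1+\epsilon)nd)^k\cdot|\Gamma|^k$. The only difference is that you track the excess cost $\sum(\freq{t}{w}-\freq{t}{s_j})\le\epsilon\opt$ via \autoref{lem:opt-offset}, while the paper tracks the full median cost $\sum_{x\in X}H(x,s_j)\le(1+\epsilon)\opt$; this slightly shrinks your cost range but does not change the stated bound.
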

\begin{proof}
We first define the following dynamic program: 

Define a dynamic program $\textsc{MinDispersion-DP}[d_{1,2}, d_{1,3}, \dots, d_{k-1,k}, c_1,\dots,c_k, \ell]$ such that,
\[
\textsc{MinDispersion-DP}[d_{1,2}, d_{1,3}, \dots, d_{k-1,k}, c_1,\dots,c_k, \ell] = \text{True}
\]
if there exist $k$ strings $s_1,\dots,s_k \in \Gamma^{\ell}$ such that $H(s_i,s_j)=d_{i,j}$ for all pairs and $\sum_{x \in X} H(x,s_i) = c_i$. Note that all $c_i$ values are non-negative integers and bounded by $(1+\epsilon)\opt$. The DP state space has size $d^{\frac{k(k-1)}{2}} \cdot d ((1+\epsilon)nd)^k$ (since each $d_{i,j}$ can take values up to $d$ and since $\ell$ is bounded by $d$ and since $c_i \le (1+\epsilon)\opt \le (1+\epsilon)nd$). Suppose we want to compute 
\[
\textsc{MinDispersion-DP}(d_{1,2},\dots,d_{i,j},\dots,d_{k-1,k},c_1,\dots,c_k, \ell+1).
\]
For this to be true, there must exist a set of $k$ strings $\{s_1,\ldots,s_k\}$ in $\Gamma^{\ell}$ and an assignment of characters $\{a_1,\dots,a_k\}$ ($a_i \in \Gamma$ for all $i\in [k]$) such that
\[
\textsc{MinDispersion-DP}(d'_{1,2} , \dots, d'_{i,j}, \dots, d'_{k-1,k},c'_1,\dots,c'_k, \ell)
\]
is marked True, where $d'_{i,j} = d_{i,j}- \ind{a_i \neq a_j}$ and $c'_i = c_i-(n-|\{x\in X|x_{\ell+1}=a_i\}|)$. Since there are $|\Gamma|^k$ possible assignments $\{a_1,\dots,a_k\}$, each state update requires at most $|\Gamma|^k$ operations giving overall runtime $O(|\Gamma|^k d^{\frac{k(k-1)}{2}} \cdot d ((1+\epsilon)nd)^k)$. To extract the solution, we simply check all states of the form $(d_{1,2},\ldots,d_{k-1,k},c_1,\dots,c_k,d)$, which takes $O(d^{\frac{k(k-1)}{2}} \cdot ((1+\epsilon)nd)^k)$ time. 
\end{proof}

Next, we consider the scenario in which $D^*$ is small (constant), and demonstrate that a solution achieving a $1/2$-approximation to the minimum dispersion can be obtained.

\begin{lemma}\label{lem:approx-small}
Let $D^* \le \frac{4}{\delta^2}$. Then there exists an algorithm that outputs a set of $k$ $(1+\epsilon)$ approximate Hamming median strings $S$ such that $\textsc{minDp}(S) \ge \frac{1}{2}t^*$ in $O \left(k^2 |\Gamma|^{\frac{4}{\delta^2}}d^{\frac{4}{\delta^2}}+nd \cdot |\Gamma|^{\frac{4}{\delta^2}}d^{\frac{4}{\delta^2}}\right)$ time.
\end{lemma}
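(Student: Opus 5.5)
The plan mirrors the proof of \autoref{lem:exact-small}: reduce to a polynomial-size explicit candidate set, then run the classical greedy max-min dispersion heuristic of~\cite{Ravi1994Heuristic}.

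\textbf{Step 1: every approximate median is close to $w$.} Let $s$ be any $(1+\epsilon)$-approximate median. Then $w$ and $s$ are both $(1+\epsilon)$-approximate medians, since $w$ is optimal by \autoref{lem:MCC-opt}. Hence $H(w,s) \le D^*$ by the definition of the diameter. So every $(1+\epsilon)$-approximate median lies in the Hamming ball of radius $D^* \le 4/\delta^2$ around $w$.

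\textbf{Step 2: the candidate set is small.} That ball has at most $\sum_{j\le D^*}\binom{d}{j}(|\Gamma|-1)^j = O(|\Gamma|^{4/\delta^2} d^{4/\delta^2})$ strings, treating $\delta$ as a constant. I would enumerate all of them. For each enumerated string $s$, I compute its cost via \autoref{lem:opt-offset}, namely $\opt+\sum_{i:s_i\ne w_i}(\freq{i}{w}-\freq{i}{s})$, keeping only those with cost at most $(1+\epsilon)\opt$. Computing $w$, $\opt$ and the frequency tables takes $O(nd)$. A naive cost check per candidate takes $O(nd)$, which gives the $nd\cdot|\Gamma|^{4/\delta^2}d^{4/\delta^2}$ term of the stated bound. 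Call the filtered set $\hat X$. By Step 1, $\hat X$ is exactly the set of all $(1+\epsilon)$-approximate medians, so the optimum $t^*$ is attained by some $k$-subset of $\hat X$.

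\textbf{Step 3: greedy selection.} Run the greedy (farthest-point) algorithm on the finite metric space $(\hat X, H)$. Start from an arbitrary point, and repeatedly add the point maximizing its minimum distance to the points already chosen, until $k$ points are selected. Since $H$ is a metric, the standard guarantee of~\cite{Ravi1994Heuristic} (equivalently~\cite{HASSIN1997133}) gives min dispersion at least $t^*/2$. The proof is the usual one: if the greedy's $j$-th pick has distance $< t^*/2$ to the current set, then the balls of radius $<t^*/2$ around the current set cover $\hat X$. Some ball would then contain two optimal points, contradicting the triangle inequality. Maintaining nearest distances incrementally, greedy costs $O(k\,|\hat X|)$ distance updates, which is within the stated $k^2|\hat X|$ term. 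If $|\hat X|<k$, no feasible distinct set exists and the claim is vacuous, or one can allow repeats.

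The only mildly delicate point is Step 1, i.e.\ justifying that restricting to the radius-$D^*$ ball around $w$ loses no approximate median; this follows immediately from $w$ itself being feasible. The rest is bookkeeping of the running time.
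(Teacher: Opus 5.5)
Your proposal is correct and follows the paper's proof. Like the paper, you restrict to the Hamming ball of radius $D^*\le 4/\delta^2$ around $w$, filter candidates by cost in $O(nd)$ time each, and run the greedy max-min dispersion heuristic of~\cite{Ravi1994Heuristic} via the triangle inequality. You also supply two justifications the paper only asserts: that every $(1+\epsilon)$-approximate median lies within distance $D^*$ of $w$ (because $w$ itself is feasible), and a sketch of why greedy achieves $t^*/2$.
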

\begin{proof}
Now consider the case when $D^* \le \frac{4}{\delta^2}$. We can see that in this case, for any $(1+\epsilon)$ approximate median string $s$, it can differ from $w$ in at most $D^*$ indices. Therefore, we can see that there can be at most $N = |\Gamma|^{\frac{4}{\delta^2}}d^{\frac{4}{\delta^2}}$ such strings. Let $\hat{X}$ be the potential $(1+\epsilon)$ approximate medians (we know that $|\hat{X}| \le N$). We can calculate $\hat{X}$ in $O(nd \cdot |\Gamma|^{\frac{4}{\delta^2}}d^{\frac{4}{\delta^2}})$ time.  Since Hamming distance satisfies triangle inequality, the greedy algorithm of~\cite{Ravi1994Heuristic} guarantees a set of $k$ strings with minimum dispersion at least $\frac{1}{2}t^*$, and its runtime is $O\left(k \cdot k \cdot |\Gamma|^{\frac{4}{\delta^2}}d^{\frac{4}{\delta^2}}+nd \cdot |\Gamma|^{\frac{4}{\delta^2}}d^{\frac{4}{\delta^2}}\right) = O \left(k^2 |\Gamma|^{\frac{4}{\delta^2}}d^{\frac{4}{\delta^2}}+nd \cdot |\Gamma|^{\frac{4}{\delta^2}}d^{\frac{4}{\delta^2}}\right)$ time.
\end{proof}

Next, we consider the case where $D^*$ is large enough, and show that one can obtain a set of $k$ $(1+2\epsilon)$-approximate medians with $\frac{(1-\delta)}{2}t^*$ minimum dispersion.

\begin{lemma}\label{lem:uniform-approx}
Let $D^* \ge \frac{4}{\delta^2} \left(2\log k+1\right)$ where $\delta,\eta>0$. Then there exists an algorithm that, with probability at least $1-\eta$, outputs a set of $k$ $(1+2\epsilon)$-approximate Hamming median strings $S$ such that $\textsc{minDp}(S) \ge (1-\delta)\frac{D^*}{2}$.
\end{lemma}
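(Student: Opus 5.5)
We first run the algorithm of \autoref{thm:diverse-fin} to obtain two $(1+\epsilon)$-approximate medians $y,z$ with $H(y,z)=D^*$. By the structure used there (and \autoref{lem:optimal-string}), we may take $y$ and $z$ to deviate from $w$ on disjoint index sets $Y:=\{i: y_i\neq w_i\}$ and $Z:=\{i:z_i\neq w_i\}$, with $|Y|+|Z|=D^*$. Indeed, the strings returned by \autoref{alg:diverse-finite} already have this form, with deviations $S,R$ or $T_1,T_2$. Let $T=Y\cup Z$ and $\Gamma_i=\{y_i,z_i\}$; then $|\Gamma_i|=2$ exactly for $i\in T$. We sample $k$ strings independently: for each string and each $i$, pick $s_i$ uniformly from $\Gamma_i$. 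We repeat this $\log\frac1\eta$ times and keep the sample with the largest min dispersion.

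\textbf{Approximation guarantee of every sample (deterministic).} Fix a sampled string $s$. By construction $s_i\neq w_i$ only if $i\in Y$ and $s_i=y_i$, or $i\in Z$ and $s_i=z_i$. Hence $\{i:s_i\neq w_i\}\subseteq Y\cup Z$, and on each such index $\freq{i}{s}$ equals $\freq{i}{y}$ or $\freq{i}{z}$. By \autoref{lem:opt-offset},
\[
\sum_{x\in X}H(x,s)\le \opt+\sum_{i\in Y}(\freq{i}{w}-\freq{i}{y})+\sum_{i\in Z}(\freq{i}{w}-\freq{i}{z})\le \opt+\epsilon\opt+\epsilon\opt ,
\]
since each of $y,z$ has excess cost at most $\epsilon\opt$. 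So every sampled string is a $(1+2\epsilon)$-approximate median regardless of randomness. This is where disjointness of $Y,Z$ matters: it makes the two excess sums separate and each bounded by $\epsilon\opt$.

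\textbf{Dispersion (probabilistic).} For two sampled strings $\hat s,\tilde s$, the indicators $\ind{\hat s_i\neq\tilde s_i}$ are independent Bernoulli$(1/2)$ for $i\in T$ and $0$ otherwise. So $\mu=\exv[H(\hat s,\tilde s)]=D^*/2$. The multiplicative Chernoff bound gives
\[
\Pr[H\le(1-\delta)\mu]\le e^{-\delta^2\mu/2}=e^{-\delta^2D^*/4}\le e^{-(2\log k+1)}\le \frac{1}{2k^2}
\]
using the hypothesis $D^*\ge\frac{4}{\delta^2}(2\log k+1)$. A union bound over at most $k^2$ pairs gives failure probability at most $1/2$ per trial. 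Amplifying over $\log\frac1\eta$ independent trials and keeping the best one, exactly as in \autoref{lem:uniform-exact}, gives success probability at least $1-\eta$ with $\mDp(S)\ge(1-\delta)D^*/2$.

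The only delicate point is the first step: ensuring the diameter pair has disjoint deviation sets from $w$. If the output of \autoref{thm:diverse-fin} did not, we would apply the transformation in the proof of \autoref{lem:optimal-string} to it, which preserves the approximation guarantee and the distance $D^*$. The rest, including the runtime accounting of $O((1+\epsilon)nd+d\log d+k^2d\log\frac1\eta)$, is routine.
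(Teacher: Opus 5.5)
Your proposal is correct and follows the paper's proof essentially step for step. Like the paper, you take the disjoint-deviation diameter pair produced by \autoref{alg:diverse-finite} and sample each coordinate of $T$ uniformly between the deviating character and $w_i$. You then get the $(1+2\epsilon)$ guarantee deterministically from \autoref{lem:opt-offset}, and the bound $\mDp(S)\ge(1-\delta)D^*/2$ from a Chernoff bound with $\mu=D^*/2$, a union bound over pairs, and $\log\frac{1}{\eta}$-fold repetition.
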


\begin{proof}
Let $x^*,y^*$ be two $(1+\epsilon)$-approximate Hamming medians such that $D^* = H(x^*,y^*)$ such that $T_{x^*}=\{i \in [d]|x^*_i \neq w_i\}$ and $T_{y^*}=\{i \in [d]|y^*_i \neq w_i\}$ are disjoint (which can be calculated using the algorithm introduced in~\autoref{app:diam-approx}). Let $T=T_{x^*} \cup T_{y^*}$. Note that $|T|=D^*$.

Consider the following set of strings $S$ such that $|S| \le k$: For any $s \in S$, for any $i \in T_{x^*}$, $s_i = x^*_i$ with probability $\frac{1}{2}$ and $s_i=w_i$ with probability $\frac{1}{2}$, for any $i \in T_{y^*}$, $s_i = y^*_i$ with probability $\frac{1}{2}$ and $s_i=w_i$ with probability $\frac{1}{2}$ and for any $i \not \in T$, $s_i = w_i$. 

Note that for any $s \in S$, $s$ is an $(1+2\epsilon)$-approximate Hamming median. This comes from a direct application of~\autoref{lem:opt-offset} since $\sum_{x \in X} H(x,s)=\opt+\sum_{i \in T} (f_i^w-f_i^s) = \opt+\sum_{i \in T_{x^*}} (f_i^w-f_i^s)+\sum_{i \in T{y^*}} (f_i^w-f_i^s) \le \opt+\epsilon \opt+\epsilon \opt = (1+2\epsilon)\opt$. 

Consider any two strings $\hat{s},\tilde{s} \in S$. Note that,
\begin{align*}
H\left(\hat{s},\tilde{s}\right) &= \sum_{i \in [d]} \ind{\hat{s}_i \neq \tilde{s}_i}\\
&= \sum_{i \in T} \ind{\hat{s}_i \neq \tilde{s}_i}+ \sum_{i \not \in T} \ind{\hat{s}_i \neq \tilde{s}_i}\\
&= \sum_{i \in T} \ind{\hat{s}_i \neq \tilde{s}_i} \quad \text{(Since $\hat{s}_i=\tilde{s}_i=w_i$ for all $i\not\in T$)}
\end{align*}
For all $i \in T$, let $z_i=1$ if $\hat{s}_i \neq \tilde{s}_i$ and $0$ otherwise. Then, $Pr(z_i = 1) = \frac{1}{2}$. Furthermore, $H\left(\hat{s},\tilde{s}\right) = \sum_{i\in T} z_i$. Let $\mu = \exv\left(H\left(\hat{s},\tilde{s}\right)\right)$. Then, $\mu = \exv \left(  \sum_{i\in T} z_i \right) = \sum_{i \in T} \exv \left(z_i\right) = \frac{|T|}{2} = \frac{D^*}{2}$. Then, 
\begin{align*}
Pr\left(H\left(\hat{s},\tilde{s}\right) \le (1-\delta)\mu\right) &\le e^{-\frac{\delta^2 \mu}{2}} \quad \text{(by Chernoff bounds)}\\
& \le e^{-\frac{\delta^2 |T|}{4}} \quad \text{(Since $\mu \ge \frac{|T|}{2}$)} \\
& = \frac{1}{2k^2}
\end{align*}
Note that since $|S|=k$, there are at most $k^2$ pairs of strings $\hat{s},\tilde{s}$ in $S$. Therefore, by union bound,
\begin{align*}
Pr\left(\exists \hat{s},\tilde{s} \in S \text{ s.t. } H\left(\hat{s},\tilde{s}\right) \le (1-\delta)\mu\right) \le k^2 \frac{1}{2k^2} = \frac{1}{2}
\end{align*}

We can perform this process $N=\log \frac{1}{\eta}$ times and then select the output that maximizes the min dispersion. Let $\{S^{(1)},S^{(2)},\dots,S^{(N)}\}$ denote the $N$ outputs obtained by repeating the sampling process $N$ times, and let $\hat{S} \in \{S^{(1)},S^{(2)},\dots,S^{(N)}\}$ be the output achieving the maximum min dispersion. Then, we have
\begin{align*}
Pr\left(\forall i \in [N] \exists \hat{s},\tilde{s} \in S^{(i)} \text{ s.t. } H\left(\hat{s},\tilde{s}\right) \le (1-\delta)\mu\right) \le \frac{1}{2^N} = \eta
\end{align*}
and thus, with probability at least $1-\eta$, there exists at least one $S^{(i)}$ such that $\forall \hat{s},\tilde{s} \in S^{(i)}$, $H\left(\hat{s},\tilde{s}\right) \ge (1-\delta)\frac{D^*}{2}$. By definition, since $\hat{S}$ is the output with the maximum min dispersion, it follows that $\forall \hat{s},\tilde{s} \in \hat{S}$, $H\left(\hat{s},\tilde{s}\right) \ge (1-\delta)\frac{D^*}{2}$, and therefore $\hat{S}$ achieves the desired bound (with probability at least $1-\eta$).

From~\autoref{thm:diverse-fin}, we get that finding $x^*, y^*$ takes $O \left((1+\epsilon)nd+d\log d\right)$ time.  Since the algorithm simply involves generating $k$ randomly sampled strings, generating $k$ strings takes $O(kd)$ time. And running the sampling process $\log \frac{1}{\eta}$ times and calculating min dispersion for each solution takes $O(kd \log \frac{1}{\eta}+k^2d \log \frac{1}{\eta})$. Therefore, overall time complexity is $O \left((1+\epsilon)nd+d\log d+k^2d\log \frac{1}{\eta}\right)$. 
\end{proof}

Uniform sampling yields a bi-criteria approximation for the min dispersion problem, but incurs an additional cost of $\epsilon \opt$. A more desirable outcome is for this overhead to depend only on $\delta$, not on $\epsilon$. We show that when $t^*$ is sufficiently large, a bi-criteria approximation exists whose additional cost depends solely on $\delta$, while achieving a $\frac{1-\delta}{2}$-approximation to the minimum dispersion.
\begin{lemma}\label{thm:diverse-fin-k}
If $t^* \ge \frac{8+4\delta}{\delta} \sqrt{d} \left(2\log k+2\right)$, where $\delta>0$, there exists a randomized algorithm that, given any $X \subseteq \Gamma^d$ and parameter $\epsilon \ge 0$, with probability at least $1-\eta$, outputs a set of $k$ distinct $(1+\epsilon+\delta)$-median strings with min dispersion at least $ \frac{1-\delta}{2} t^*$, and runs in $O\left(nd\log \min\left(n,|\Gamma|\right)+(k^9 \cdot d^3+nkd+k^2 d)\log  \frac{1}{\eta} \right)$ time.
\end{lemma}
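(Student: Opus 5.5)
We will formulate the problem as an integer linear program, solve its LP relaxation, and round it with dependent rounding. The relaxation is set up as in the technical overview. For each candidate $j\in[k]$, index $i\in[d]$ and character $a\in\Gamma$ we introduce a variable $x_{j,i,a}\in[0,1]$ with $\sum_a x_{j,i,a}=1$. The budget constraint $\sum_{i}\sum_{a\neq w_i} x_{j,i,a}\,\costind(a,i)\le \epsilon\opt$ encodes, via \autoref{lem:opt-offset}, that $s_j$ is a $(1+\epsilon)$-approximate median. Dispersion is linearized with pair variables $z_{j,j',i}\le 2-x_{j,i,a}-x_{j',i,a}$ for every $a$, together with $\sum_i z_{j,j',i}\ge t$, and we maximize $t$. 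Any optimal integral solution (the $k$ optimal approximate medians) is feasible with value $t^*$, so the LP optimum satisfies $t_{LP}\ge t^*$. The LP has $O(k^2 d)$ pair variables and polynomially many constraints, which matches the $k^9d^3$ term of an interior-point solve.

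\textbf{Rounding.} Each string $j$ is rounded independently; within a string we use dependent rounding of the assignment polytope across indices. Pipage or swap rounding over the bipartite structure (index $\to$ character) preserves each marginal $\Pr[s_{j,i}=a]=x_{j,i,a}$ and also satisfies the budget constraint. Since every $\costind\le n$, the budget is exceeded by at most one fractional item, and $\opt\ge$ (that item)$\cdot$... A simpler alternative, if that fails, is independent rounding per index: the cost is then a sum of $d$ independent terms, each at most $\max_i\costind$. A Hoeffding/Bernstein bound gives deviation $O(\sqrt{d\log(k/\eta)})$ in units of the per-index cost, which we must show is at most $\delta\opt$ (after rescaling the instance). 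Costs of this form should explain the $1+\epsilon+\delta$ factor.

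\textbf{Dispersion.} For $j\neq j'$ rounded independently, $\Pr[s_{j,i}\neq s_{j',i}]=1-\sum_a x_{j,i,a}x_{j',i,a}$. This is at least $\tfrac12 z_{j,j',i}$, because $z\le 2-x_a-x'_a$ with $\sum_a x_a x'_a\le \max_a\min(x_a,x'_a)$ gives $1-\sum_a x_ax'_a\ge \tfrac12(2-x_a-x'_a)$ at the maximizing $a$ (this needs checking; it is the source of the factor $1/2$). Hence $\exv[H(s_j,s_{j'})]\ge t^*/2$.

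\textbf{Concentration and the main obstacle.} A bounded-differences (McDiarmid) argument over the $d$ independent index choices, whose marginals are preserved under the dependent rounding, gives deviation at most $\sqrt{d\cdot 2\log(2k^2)}$ with probability $1-1/(2k^2)$. The hypothesis $t^*\ge \frac{8+4\delta}{\delta}\sqrt d(2\log k+2)$ makes this deviation at most $\frac{\delta}{2}\cdot\frac{t^*}{2}$, up to the slack absorbed by the constant. A union bound over the $k^2$ pairs then succeeds with probability $\ge 1/2$, and we boost by $\log\frac1\eta$ repetitions, keeping the best run. Distinctness follows because min dispersion is positive. We expect the main obstacle to be making the cost guarantee and the concentration compatible under the same rounding: concentration needs enough independence across indices, while the budget needs negative correlation or a rounding that is nearly exact. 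We would first try swap rounding per string, which guarantees negative-correlation Chernoff bounds for linear functions and hence handles both.
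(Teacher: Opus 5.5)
Your architecture (an LP relaxation with optimum at least $t^*$, marginal-preserving rounding of each string independently, a factor-$\frac12$ loss per coordinate, concentration with a union bound over pairs, and $\log\frac1\eta$ repetitions) matches the paper's. The dispersion half goes through under independent per-index rounding. The gap is in showing that the rounded strings are $(1+\epsilon+\delta)$-approximate medians.

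Your claim that pipage or swap rounding preserving all marginals ``also satisfies the budget constraint'' is false in general. At best it keeps the excess cost within one item (of cost up to $n$) of $\epsilon\opt$, and your sentence ``$\opt\ge$ (that item)$\cdot$\dots'' is never completed. Your fallback, independent rounding per index plus Hoeffding, is what the paper does. But it gives an additive deviation of order $n\sqrt{d\log k}$, and you never show this is at most $\delta\opt$. ``Rescaling the instance'' cannot help, since only the ratio $\opt/n$ matters.

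The missing idea is a lower bound on $\opt$ in terms of $t^*$ (the paper's Claim~\ref{clm:opt-hamming}). If $s_i\neq w_i$ then $f_i^s\le n/2$, because $f_i^s\le f_i^w$ and $f_i^s+f_i^w\le n$. So that index alone contributes $n-f_i^s\ge n/2$ to $\sum_{x\in X}H(x,s)$. Hence a $(1+\epsilon)$-approximate median differs from $w$ in at most $2(1+\epsilon)\opt/n$ positions. Since two strings can differ only where one of them differs from $w$, you get $t^*\le 4(1+\epsilon)\opt/n$. This inequality $\opt\ge nt^*/(4(1+\epsilon))$ is what turns the hypothesis $t^*\ge\frac{8+4\delta}{\delta}\sqrt d(2\log k+2)$ into $n\sqrt{d\log k}\lesssim\delta\opt$. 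It is also what you would need to absorb the one-item overshoot of your first plan, which requires $n\le\delta\opt$. Without it, the hypothesis on $t^*$ is used only for dispersion, and the cost side of the bi-criteria guarantee is unproved.

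There are also some smaller points.

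\emph{The overlap inequality.} The intermediate inequality $\sum_a x_{j,i,a}x_{j',i,a}\le\max_a\min(x_{j,i,a},x_{j',i,a})$ is false: $x=(0.6,0.4)$ and $x'=(0.4,0.6)$ give $0.48>0.4$. Use instead $\sum_a x_ax'_a\le\frac14\sum_a(x_a+x'_a)^2\le\frac12\max_a(x_a+x'_a)$. This gives $1-\sum_a x_ax'_a\ge\frac12(2-x_{a^*}-x'_{a^*})\ge\frac12 z_{j,j',i}$ for $a^*$ maximizing $x_a+x'_a$, so your conclusion survives.

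\emph{Independence.} McDiarmid needs independence across indices, not merely preserved marginals, so it is unavailable if you round dependently across indices to protect the budget. With independent per-index rounding, which is how the paper's analysis uses its rounding (it treats the per-index choices as independent and applies Hoeffding to both quantities), the cost and $H(s_j,s_{j'})$ are both sums of independent bounded terms. So the tension you anticipate does not arise; the real obstacle is the bound on $\opt$ above.

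\emph{Running time.} Your LP ranges over all of $\Gamma$, so its size depends on $|\Gamma|$ and does not give the stated $k^9d^3$ term. Restrict index $i$ to its $k$ most frequent characters, as the paper does. This is without loss of generality: $k$ strings use at most $k$ characters per index, and relabeling a character by an unused, more frequent one never raises cost or changes distances.
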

\begin{proof} 
We start by defining an integer linear program for the $\mdehm$ problem. Then, we consider its linear programming relaxation and derive a fractional solution. Using dependent rounding ideas from~\cite{Gandhi2006Dependent}, we round this fractional solution to an integral solution that gives a $(1-\delta)/2$-approximation for the min dispersion.

Let us first describe the ILP for the $\mdehm$ problem. Given $X \subseteq \Gamma^d$ and a string $w = \mcc(X)$, we define auxiliary strings ${\hat{w}^{(j)} \mid j \in [k]}$. For each index $i$, $\hat{w}_i^{(j)}$ is the $j$th most frequent character at position $i$ among the strings in $X$. More precisely: if $\freq{i}{w} < n$, then set $\hat{w}^{(j)}_i = \arg\max_{e \in \Gamma \setminus \{w_i,\hat{w}^{(2)}_i,\dots,\hat{w}^{(j-1)}_i\}} |{x \in X : x_i = e}|$, breaking ties arbitrarily. Otherwise, set $\hat{w}^{(j)}_i = e$ for some arbitrary $e \in \Gamma \setminus {w_i}$. We also define a weight parameter $c_{ij} = f_i^w - f_i^{\hat{w}^{(j)}}$.

Let $r, \hat{r} \in [k]$ be the indices of strings $s_r$ and $s_{\hat{r}}$. Define $u_{rij}$ as a variable such that $u_{rij} = 1$ if and only if string $s_r$ has $\hat{w}^{(j)}_i$ at position $i$. Similarly, define $z_{r\hat{r}ij}$ as a variable such that $z_{r\hat{r}ij} = 1$ if and only if exactly one of $s_r$ or $s_{\hat{r}}$ has $\hat{w}^{(j)}_i$ at position $i$.

The ILP constraints are as follows: Constraints~\ref{eq:opt-ub} and~\ref{eq:opt-lb} capture the cost of the solution and ensure that any solution string remains an approximate median. Constraint~\ref{eq:x-sum} ensures that for each string $r$ and index $i$, only one character is assigned at position $i$. Constraints~\ref{eq:z-sum}, \ref{eq:z-lb-r-rp}, \ref{eq:z-lb-rp-r}, and \ref{eq:z-minus} capture the contribution to the min dispersion from the selected strings, while Constraint~\ref{eq:z-div} ensures that the ILP maximizes the min dispersion.

\begin{align}
&\text{Maximize } t \nonumber\\
&\text{subject to the constraints} \nonumber\\
&\sum_{(i,j)} u_{rij} c_{ij} \le (1+\epsilon)\opt \; \forall r\ \in [k]\label{eq:opt-ub}\\
&\sum_{(i,j)} u_{rij} c_{ij} \ge \opt \; \forall r\ \in [k]\label{eq:opt-lb}\\
& \sum_{j=1}^k u_{rij} = 1 \; \forall r \in [k],i \in [d]\label{eq:x-sum}\\
& z_{r\hat{r}ij} \le u_{rij}+ u_{\hat{r}ij} \; \forall r,\hat{r},j \in [k],i \in [d]\label{eq:z-sum}\\
& z_{r\hat{r}ij} \ge u_{\hat{r}ij} - u_{rij} \; \forall r,\hat{r},j \in [k],i \in [d] \label{eq:z-lb-r-rp}\\
& z_{r\hat{r}ij} \ge u_{rij} - u_{\hat{r}ij} \; \forall r,\hat{r},j \in [k],i \in [d] \label{eq:z-lb-rp-r}\\
& z_{r\hat{r}ij} \le 2-u_{rij} - u_{\hat{r}ij} \; \forall r,\hat{r},j \in [k],i \in [d] \label{eq:z-minus}\\
& \sum_{(i,j)} z_{r\hat{r}ij} \ge 2t \; \forall  r,\hat{r} \in [k] \nonumber\\
& u_{rij}, z_{r\hat{r}ij} \in \{0,1\} \; \forall r,\hat{r},j \in [k],i \in [d] \label{eq:z-div}
\end{align}
where $t$ is a variable that captures the min dispersion. Let us call the above ILP as $\ilpApp$. 

Consider a solution to $\ilpApp$ is denoted by $u_{rij}$, where $r,j \in [k], i \in [d]$.
Let $\{s_1,s_2,\dots,s_k\}$ be the corresponding $k$ strings defined as follows. 
\[{s_r}_i = \hat{w}^{(j)}_i \text{ where } u_{rij}=1 \; \forall r,j \in [k], i \in [d]\] where ${s_r}_i$ is the character at the index $i$ of the string $s_r$. 
Recall, by constraint~\ref{eq:x-sum}, we have $\sum_{j=1}^k u_{rij} = 1$ implies that there is exactly one non-zero  $u_{rij}$. Given $\ilpApp$ and the strings corresponding to the $\{u_{rij}|i \in [d], j\in [k]\}$ for $r \in [k]$,i.e. $\{s_1,s_2,\dots,s_k\}$, we can see that each string $s_r$ is a $(1+\epsilon)$-approximation median string and the solution maximizes the min dispersion. Formally,
\begin{claim}\label{clm:ilp-opt}
Let $\{s_1,s_2,\dots,s_k\}$ be the $k$-strings derived using the $\ilpApp$, then,
\begin{enumerate}[(I).]
\item $\opt \le \sum_{x \in X} H(x,s_r) \le (1+\epsilon) \opt$ for all $r \in [k]$, and 
\item $\min_{r,\hat{r}} H(s_r,s_{\hat{r}}) = t^*$
\end{enumerate}
\end{claim}

We now consider the relaxation of $\ilpApp$ by letting $u_{rij} \in [0,1]$ and $z_{r\hat{r}ij} \in [0,1]$. We can show that using the solution to the relaxed linear program, and rounding the linear programming solution gives us a simple $(1-\delta)/2$-approximation to the solution of the $\ilpApp$ and therefore a $(1-\delta)/2$-approximation to the min dispersion.

Let $\{\tilde u_{rij}| r,j \in [k], i \in [d]\}$ be the solution to the relaxed LP. For each $r \in [k]$, we can see that $\{\tilde u_{rij}| j \in [k], i \in [d]\}$ forms a set of weights for a bipartite graph $(A,B,E)$ where $A = [d], B= [k]$ and $E = \{(i,j) \in [d] \times [k]\}$. Note that $\forall i\in [d]$, $\sum_{j=1}^k \tilde u_{rij} = 1$. 

Given that $\{\tilde u_{rij}| j \in [k], i \in [d]\}$ forms a bipartite graph $(A,B,E)$, the work of~\cite{Gandhi2006Dependent} gives a dependent rounding framework, that returns a randomized rounded solution $\{\hat{u}_{rij}| j \in [k], i \in [d]\}$ such that $\hat{u}_{rij} \in \{0,1\} \; \forall i \in [d],j\in [k]$,$\sum_{j=1}^k \hat{u}_{rij}=1 \; \forall i \in [d]$ and $Pr\left(\hat{u}_{rij}\right) = \tilde u_{rij}$ and runs in time $O\left((|A|+|B|)|E|\right)=O\left((d+k)dk\right)$. Therefore for each $r \in [k]$ we can get a rounded integer solution $\{\hat{u}_{rij}| j \in [k], i \in [d]\}$ such that $\sum_{j=1}^k \hat{u}_{rij}=1 \; \forall i \in [d]$ and $Pr\left(\hat{u}_{rij}\right) = \tilde u_{rij}$ in $O\left(d^2k^2+dk^3\right)$ time. 

We will show that the rounded solution $\{\hat{u}_{rij}| r,j \in [k], i \in [d]\}$ gives a $(1-\delta)/2$-approximation to the maximum diversity. Given $\{\hat{u}_{rij}| r,j \in [k], i \in [d]\}$ we first calculate the corresponding $\hat{z}_{r\hat{r}ij}$ values using the constraints~\ref{eq:z-sum}, \ref{eq:z-lb-r-rp}, \ref{eq:z-lb-rp-r} and~\ref{eq:z-minus}.

$\forall r,\hat{r},i,j$, $Pr(\hat{z}_{r\hat{r}ij} = 1)$ after rounding is the probability of $\left(\hat{u}_{rij}=1 \text{ and } \hat{u}_{\hat{r}ij}=0\right)$ or $\left(\hat{u}_{rij}=0 \text{ and } \hat{u}_{\hat{r}ij}=1\right)$ and the probability of this is $(1-\tilde u_{rij})\tilde{u}_{\hat{r}ij}+(1-\tilde{u}_{\hat{r}ij})\tilde{u}_{rij} = \tilde{u}_{rij}+\tilde{u}_{\hat{r}ij}-2\tilde{u}_{rij}\tilde{u}_{\hat{r}ij}$. Therefore,
\[E\left(\hat{z}_{r\hat{r}ij}\right) = \tilde{u}_{rij}+\tilde{u}_{\hat{r}ij}-2\tilde{u}_{rij}\tilde{u}_{\hat{r}ij}\]
Using this, we can show that $E\left(\hat{z}_{r\hat{r}ij}\right) \ge \frac{1}{2}\tilde{z}_{r\hat{r}ij}$ where $\tilde{z}_{r\hat{r}ij}$ is the corresponding value from the relaxed solution. In order to do this, we will consider two cases.

\noindent {\bf Case 1:} When $\tilde{u}_{rij}+\tilde{u}_{\hat{r}ij} \le 1$.

Note that $\tilde{u}_{rij}+\tilde{u}_{\hat{r}ij} \le 1 \implies \left(\tilde{u}_{rij}+\tilde{u}_{\hat{r}ij}\right)^2 \le \tilde{u}_{rij}+\tilde{u}_{\hat{r}ij}$. A simple calculation shows us,
\begin{align*}
E\left(\hat{z}_{r\hat{r}ij}\right) &= \tilde{u}_{rij}+\tilde{u}_{\hat{r}ij}-2\tilde{u}_{rij}\tilde{u}_{\hat{r}ij}\\
& = \frac{1}{2} \left(\tilde{u}_{rij}+\tilde{u}_{\hat{r}ij}\right)+\frac{1}{2} \left(\tilde{u}_{rij}+\tilde{u}_{\hat{r}ij}-4\tilde{u}_{rij}\tilde{u}_{\hat{r}ij}\right) \\
& \ge \frac{1}{2} \left(\tilde{u}_{rij}+\tilde{u}_{\hat{r}ij}\right)+\frac{1}{2} \left(\left(\tilde{u}_{rij}+\tilde{u}_{\hat{r}ij}\right)^2-4\tilde{u}_{rij}\tilde{u}_{\hat{r}ij}\right) \\
& \ge \frac{1}{2} \left(\tilde{u}_{rij}+\tilde{u}_{\hat{r}ij}\right)+\frac{1}{2} \left(\tilde{u}_{rij}-\tilde{u}_{\hat{r}ij}\right)^2 \\
& \ge \frac{1}{2} \left(\tilde{u}_{rij}+\tilde{u}_{\hat{r}ij}\right) \ge \frac{1}{2}\tilde{z}_{r\hat{r}ij.}
\end{align*}
where the last inequality follows from the constraint~\ref{eq:z-sum}. Note that this is tight when $\tilde{u}_{rij}=\tilde{u}_{\hat{r}ij}=\frac{1}{2}$.

\noindent {\bf Case 2:} When $\tilde{u}_{rij}+\tilde{u}_{\hat{r}ij} > 1$.

We can see that by a direct application of AM-GM inequality, $\left(u_{rij}+u_{\hat{r}ij}\right)^2 \ge 4u_{rij}\cdot u_{\hat{r}ij}$. Using this, we get that,
\begin{align*}
E\left(\hat{z}_{r\hat{r}ij}\right) &= \tilde{u}_{rij}+\tilde{u}_{\hat{r}ij}-2\tilde{u}_{rij}\tilde{u}_{\hat{r}ij}\\
& = \frac{1}{2} \left(2\tilde{u}_{rij}+2\tilde{u}_{\hat{r}ij}-4\tilde{u}_{rij}\tilde{u}_{\hat{r}ij}\right) \\
& \ge \frac{1}{2} \left(2\left(\tilde{u}_{rij}+\tilde{u}_{\hat{r}ij}\right)-\left(\tilde{u}_{rij}+\tilde{u}_{\hat{r}ij}\right)^2\right) \\
& \ge \frac{1}{2} \left(\tilde{u}_{rij}+\tilde{u}_{\hat{r}ij}\right)\left(2-\left(\tilde{u}_{rij}+\tilde{u}_{\hat{r}ij}\right)\right) \\
& \ge \frac{1}{2} \left(2-\left(\tilde{u}_{rij}+\tilde{u}_{\hat{r}ij}\right)\right) \ge \frac{1}{2}\tilde{z}_{r\hat{r}ij}\\
\end{align*}

Therefore, we establish that in all cases we end up with $E\left(\hat{z}_{r\hat{r}ij}\right) \ge \frac{1}{2}\tilde{z}_{r\hat{r}ij.}$ Therefore, using linearity of expectation, for any $r,\hat{r} \in [k]$, $E \left(\sum_{i,j}\hat{z}_{r\hat{r}ij}\right) \ge \frac{1}{2} \left(\sum_{i,j}\tilde{z}_{r\hat{r}ij}\right)$.

Let $t^*$ be the solution to $\ilpApp$. Note that since $\left(\sum_{i,j}\tilde{z}_{r\hat{r}ij}\right) \ge 2t^*$ we have, $E \left(\sum_{i,j}\hat{z}_{r\hat{r}ij}\right) \ge \frac{1}{2} \left(\sum_{i,j}\tilde{z}_{r\hat{r}ij}\right) \ge \frac{1}{2} 2t^*=t^*$. Let $\hat{Z}_{r\hat{r}i} = \sum_{j=1}^k \hat{z}_{r\hat{r}ij}$ and $\hat{Z}_{r\hat{r}} = \sum_{i=1}^d \sum_{j=1}^k \hat{z}_{r\hat{r}ij} = \sum_{i=1}^d \hat{Z}_{r\hat{r}i}$. We can see that by definition $\hat{z}_{r\hat{r}ij} \le \hat{u}_{rij}+\hat{u}_{\hat{r}ij}$ and therefore, $\hat{Z}_{r\hat{r}i} \le \sum_{j=1}^k \left( \hat{u}_{rij}+\hat{u}_{\hat{r}ij}\right)$. Since from dependent rounding, $\sum_{j=1}^k \hat{u}_{rij} = 1$ and $\sum_{j=1}^k \hat{u}_{\hat{r}ij} = 1$, we get $0 \le \hat{Z}_{r\hat{r}i} \le 2$. Note that $\{\hat{Z}_{r\hat{r}i}|i\in [d]\}$ are $d$ independent variables and therefore $\hat{Z}_{r\hat{r}}$ is a sum of independent variables. Therefore, using Hoeffding's inequality, 
\begin{align*}
Pr \left(\hat{Z}_{r\hat{r}} \le \frac{2}{(2+\delta)} t^*\right) &= Pr \left(\hat{Z}_{r\hat{r}} \le t^*-\frac{\delta}{(2+\delta)} t^*\right)\\ 
& \le Pr \left(\hat{Z}_{r\hat{r}} \le E(\hat{Z}_{r\hat{r}})-\frac{\delta}{(2+\delta)} t^*\right)\\
& \le 2e^{-\frac{\frac{\delta^2}{(2+\delta)^2} {t^*}^2}{\sum_{i=1}^d 2^2}}\\
& \le 2e^{-\frac{\frac{\delta^2}{(2+\delta)^2} {t^*}^2}{4d}}.
\end{align*}

Let $t^* \ge \frac{8+4\delta}{\delta}\sqrt{d} (2 \log k +2) > \frac{4+2\delta}{\delta}\sqrt{d} (2 \log k +2)$. Then, $Pr \left(\hat{Z}_{r\hat{r}} \le \frac{2}{(2+\delta)} t^*\right) < \frac{2}{k^2} \frac{1}{4}$. Therefore,
\begin{align*}
Pr\left(\exists r,\hat{r} \text{ s.t. } \hat{Z}_{r\hat{r}} \le \frac{2}{(2+\delta)} t^*\right) &\le \sum_{r,\hat{r}}  Pr \left(\hat{Z}_{r\hat{r}} \le \frac{2}{(2+\delta)} t^*\right)\\
& < \sum_{r,\hat{r}} \frac{2}{k^2} \frac{1}{4} = \frac{k(k-1)}{2} \frac{2}{k^2} \frac{1}{4} \\
& < \frac{1}{4}.
\end{align*}
Therefore, with probability at least $3/4$ we get that given $t^* \ge \frac{8+4\delta}{\delta}\sqrt{d} (2 \log k +2)$ the rounding gives us a solution such that $\hat{Z}_{r\hat{r}} \ge \frac{2}{(2+\delta)} t^*$ for any $r,\hat r \in [k]$. Let $t$ be the objective value for the rounded solution. Then, we can see that $t = \frac{1}{2}\min_{r,\hat{r} \in [k]}\hat{Z}_{r\hat{r}}\ge \frac{1}{(2+\delta)} t^*$.

Before moving on to establishing the approximation ratio for the median objective, we will first establish the following relationship between the optimal value of the optimization objective and the min dispersion value.
\begin{claim}\label{clm:opt-hamming}
Let $t^*$ be the optimal min dispersion and let $x^*,y^*$ be two strings such that, $\sum_{x \in X} H(x,x^*) \le (1+\epsilon)\opt$ and $\sum_{x \in X} H(x,y^*) \le (1+\epsilon)\opt$ and $H(x^*,y^*)=t^*$. Then,
\[t^* \le \frac{4(1+\epsilon)\opt}{n}.\]
\end{claim}

Next, we will show that the strings output using the rounded solution are $(1+\epsilon+\delta)$-approximate medians. Let $X_{ri} = \sum_{j=1}^k \hat{u}_{rij} c_{ij} \le n \sum_{j=1}^k \hat{u}_{rij} = n$ (since $c_{ij} \le n$) and $X_r = \sum_{i=1}^d \sum_{j=1}^k \hat{u}_{rij} c_{ij} = \sum_{i=1}^d X_{ri}$. Note that $(1+\epsilon) \opt \ge E(X_r) \ge \opt$. Since $\{X_{ri}|i\in[d]\}$ are independent, $X_r = \sum_{i=1}^d X_{ri}$ is a sum of independent variables. Therefore, using Hoeffding's inequality, we can see that,
\begin{align*}
Pr \left(X_r \ge (1+\epsilon+\delta) \opt\right) & \le Pr \left(X_r \ge E(X_r)+\delta \opt\right) \\
& \le e^{-\frac{\delta^2 \opt^2}{d n^2}}\\
& \le e^{-\frac{\delta^2 \frac{n^2 {t^*}^2}{16(1+\epsilon)^2}}{d n^2}} \quad\text{(By Claim~\ref{clm:opt-hamming})}\\
&= e^{-\frac{\delta^2 {t^*}^2}{d {16(1+\epsilon)^2}}}\\
& \le \frac{1}{4k^2}.
\end{align*}
Therefore, we can see that,
\begin{align*}
Pr \left(\left(X_r \le (1+\epsilon+\delta) \opt\right)\right) \ge 1-\frac{1}{4k^2}.
\end{align*}
Therefore,
\begin{align*}
& Pr \left(\forall r \in [k] \;\left(X_r \le (1+\epsilon+\delta) \opt\right)\right) \\
& = \prod_{r=1}^k Pr \left( \left(X_r \le (1+\epsilon+\delta) \opt\right)\right) \\ 
& \ge \left(1-\frac{1}{4k^2}\right)^k \\
& \ge 1-\frac{1}{4k}\\
& \ge \frac{3}{4}.
\end{align*}

Therefore, with probability $\ge \frac{1}{2}$, we have $k$ strings $\{s_1,s_2,\dots,s_k\}$ such that $\sum_{x \in X} H(x,s_r) \le (1+\epsilon+\delta) \opt$ for all $r \in [k]$ and $\min_{r,\hat{r}} H(s_r,s_{\hat{r}}) \ge \frac{1}{(2+\delta)} t^* \ge \frac{1-\delta}{2} t^*$. 

We can perform this procedure $N = \log \frac{1}{\eta}$ times and select the output that both yields $(1+\epsilon+\delta)$-approximate median strings and maximizes the minimum dispersion. Denote the $N$ outputs from these repeated samplings as $\{S^{(1)}, S^{(2)}, \dots, S^{(N)}\}$, and let $\hat{S} \in \{S^{(1)}, S^{(2)}, \dots, S^{(N)}\}$ be the set of $(1+\epsilon+\delta)$-approximate median strings with the largest minimum dispersion. Then we have
\begin{align*}
&Pr\left(\forall i \in [N], \exists \hat{s} \in S^{(i)} \text{ s.t. } \sum_{x \in X} H(x,\hat{s}) > (1+\epsilon+\delta)\opt \text{ or } \hat{s}, \tilde{s} \in S^{(i)} \text{ s.t. } H(\hat{s}, \tilde{s}) < \frac{t^*}{2+\delta}\right) \\
&\le \frac{1}{2^N} = \eta,
\end{align*}
which means that with probability at least $1-\eta$, there exists some $S^{(i)}$ such that for all $\hat{s}, \tilde{s} \in S^{(i)}$, we have $H(\hat{s}, \tilde{s}) \ge \frac{t^*}{2+\delta}$, and for all $\hat{s} \in S^{(i)}$, $\sum_{x \in X} H(x,\hat{s}) \le (1+\epsilon+\delta)\opt$. By definition, since $\hat{S}$ is the chosen set where every $\hat{s} \in \hat{S}$ satisfies $\sum_{x \in X} H(x,\hat{s}) \le (1+\epsilon+\delta)\opt$ and the minimum dispersion is maximized, it follows that for all $\hat{s}, \tilde{s} \in \hat{S}$, $H(\hat{s}, \tilde{s}) \ge \frac{t^*}{2+\delta} \ge \frac{1-\delta}{2}t^*$. Hence, $\hat{S}$ is indeed a valid solution.

Note that the $\ilpApp$ has $O(k^3d)$ variables and $O(k^3d)$ constraints. Therefore, the relaxed linear program can be solved in $O(k^9d^3)$ time. Once we have the relaxed solution, the rounding takes $O(k^3d+k^2d^2)$ time. Furthermore, calculating $c_{ij}$ values in defining the $\ilpApp$ takes $O\left(nd\log \min\left(n,|\Gamma|\right)+kd\right)$ time (to find the frequency of characters in $X$ and then to calculate $c_{ij}$ in the sorted order and find the cost for top $k$ most frequent characters). Therefore, overall time complexity for solving the $\ilpApp$ is $O(nd\log \min\left(n,|\Gamma|\right)+k^9 \cdot d^3)$. Note that since we run this process $\log \frac{1}{\eta}$ times we get $O\left(nd\log \min\left(n,|\Gamma|\right)+k^9 \cdot d^3\log  \frac{1}{\eta} \right)$ and since for each solution, we need to check the cost and the dispersion, we also have additional $O((nkd+k^2d)\log \frac{1}{\eta})$ time on finding the solution. Therefore, overall time complexity is $O\left(nd\log \min\left(n,|\Gamma|\right)+(k^9 \cdot d^3+nkd+k^2 d)\log  \frac{1}{\eta} \right)$.
\end{proof}

We include the proofs of Claim~\ref{clm:ilp-opt} and Claim~\ref{clm:opt-hamming} below for completeness.

\begin{proof}[Proof of Claim~\ref{clm:ilp-opt}]
We will first show that the solution to the $\ilpApp$ leads to a set of $(1+\epsilon)$-approximate medians. Consider a string $s_r$ where $r \in [k]$. Let $j(r,i) = j$ where $u_{rij}=1$, for some $i,j,r$. Note that 
\begin{align*}
\sum_{i=1}^d \sum_{j=1}^k u_{rij} c_{ij} &= \sum_{i=1}^d c_{ij(r,i)} \\
& = \sum_{i=1}^d (f_i^w-f_i^{\hat{w}^{j(r,i)}}) \quad \text{(Since ${s_r}_i = \hat{w}^{j(r,i)}_i$)}\\
& = \sum_{i=1}^d (f_i^w-f_i^{s_r})\\
& = \sum_{x \in X} H(x,s_r).
\end{align*}
Therefore, by constraints ~\ref{eq:opt-ub} and ~\ref{eq:opt-lb}, we have
\[\opt \le \sum_{x \in X} H(x,s_r) \le (1+\epsilon)\opt.\]

Next, we show that $\min_{r,\hat{r}} H(s_r,s_{\hat{r}}) = t^*$.  In order to do this, we first show that for any $r,\hat{r} \in [k]$, $\sum_{i=1}^d \sum_{j=1}^k z_{r\hat{r}ij} = 2H(s_r,s_{\hat{r}})$.
Consider $z_{r\hat{r}ij}$, $u_{rij}$ and $u_{\hat{r}ij}$. Constraints~\ref{eq:z-sum},\ref{eq:z-lb-r-rp},\ref{eq:z-lb-rp-r} and~\ref{eq:z-minus} imply $z_{r\hat{r}ij} = 1$ if and only if $u_{rij} = 1$ and $u_{\hat{r}ij}=0$ (or $u_{rij} = 0$ and $u_{\hat{r}ij}=1$). 

Let $i \in [d]$ be an index such that there exists $j \in [k]$ such that $z_{r\hat{r}ij} = 1$.
We claim that there exists exactly one other $\hat{j} \neq j$ such that $z_{r\hat{r}i\hat{j}} = 1$. Without loss of generality, assume $u_{rij} = 1$ and $u_{\hat{r}ij}=0$. 
Since $\sum_{j=1}^k u_{\hat{r}ij} = 1$ and $u_{\hat{r}ij}=0$ there exists exactly one $\hat{j} \neq j$ such that $u_{\hat{r}i\hat{j}}=1$.
Also, $u_{ri\hat{j}} = 0$ as $\sum_{j=1}^k u_{rij} = 1$. Hence, we obtain $z_{r\hat{r}i\hat{j}} = 1$. 
Note that $u_{rij'} = 0$ and $u_{\hat{r}ij'} = 0$, for any $j' \in [k]\setminus \{j,\hat{j}\}$, which implies that $z_{r\hat{r}ij'} = 0$. 
Therefore, $\sum_{j=1}^k z_{r\hat{r}ij}=2$. Let $T = \{i\in [d]|\exists j \text{ s.t. } z_{r\hat{r}ij} = 1\}$. Then,
\begin{align*}
\sum_{i=1}^d \sum_{j=1}^k z_{r\hat{r}ij} & = \sum_{i \in T} \sum_{j=1}^k z_{r\hat{r}ij}+ \sum_{i \in [d]\setminus T} \sum_{j=1}^k z_{r\hat{r}ij}\\
& = \sum_{i \in T} \sum_{j=1}^k z_{r\hat{r}ij} \\
& = \sum_{i \in T} 2.
\end{align*}
Let $\hat{T} = \{i \in [d]|{s_r}_i \neq {s_{\hat{r}}}_i\}$.
We claim that $T = \hat{T}$.
Let $i \in T$, then there exists $j,\hat{j}$ ($j \neq \hat{j}$) such that $u_{rij}=1$ and $u_{\hat{r}i\hat{j}}=1$.
That is, ${s_r}_i =\hat{w}^{(j)}_i \neq {s_{\hat{r}}}_i=\hat{w}^{(j)}_i$ thus $i \in \hat{T}$. 
Let $i \in \hat{T}$.
By the definition of $\hat{T}$, ${s_r}_i \neq {s_{\hat{r}}}_i$.
If ${s_r}_i = \hat{w}_i^{(j)}$ then $u_{rij} = 1$ and $u_{\hat{r}ij}=0$.
Therefore $z_{r\hat{r}ij}=1$ which implies $i \in T$. Therefore, $T = \hat{T}$.
\begin{align*}
\sum_{i=1}^d \sum_{j=1}^k z_{r\hat{r}ij} &= \sum_{i \in T} 2\\
&= \sum_{i \in \hat{T}} 2 \\
&= 2H(s_r,s_{\hat{r}}).
\end{align*}
Therefore, for all $r ,\hat{r} \in [k]$ $H(s_r,s_{\hat{r}}) = \frac{1}{2}\sum_{i=1}^d \sum_{j=1}^k z_{r\hat{r}ij} \ge t$ and since the objective is to maximize $t$, this implies $t = \min_{r,\hat{r} \in [k]} H(s_r,s_{\hat{r}})$. 

Assume there exist a set of $k$ strings $\{\hat{s}_1,\hat{s}_2,\dots,\hat{s}_k\}$ such that they are are $(1+\epsilon)$-approximation medians and $\min_{r,\hat{r} \in [k]} H(\hat{s}_r,\hat{s}_{\hat{r}}) = t^* > t$. Let $\{\hat{u}_{rij}|r,j \in [k], i \in [d]\}$ be a set of variables such that $\hat{u}_{rij} = 1$ if $\hat{s}_r$ has character $w^{(j)}_i$ in its $i$th index (and $0$ otherwise). Note that since the strings are $(1+\epsilon)$-approximation medians, $\sum_{(i,j)} \hat{u}_{rij} c_{ij} \le (1+\epsilon)\opt \; \forall r\ \in [k]$
and $\sum_{(i,j)} \hat{u}_{rij} c_{ij} \ge \opt \; \forall r\ \in [k]$. Thus, there exists is a feasible solution using $\{\hat{u}_{rij}|r,j \in [k], i \in [d]\}$, such that the corresponding assignmnet of $\hat{z}_{r\hat{r}ij}$ gives a min dispersion $t^*$. This contradicts the optimality of the solution for $\ilpApp$. Hence, $\min_{r,\hat{r} \in [k]} H(s_r,s_{\hat{r}}) \ge t^*$. Since $t^*$ is the maximum possible min dispersion, $\min_{r,\hat{r} \in [k]} H(s_r,s_{\hat{r}}) = t^*$ 
\end{proof}

\begin{proof}[Proof of Claim~\ref{clm:opt-hamming}]
Consider any string $s$ such that, $\sum_{x \in X} H(x,s) \le (1+\epsilon)\opt$. Let $\ell$ be the number of indices where $s_i \neq w_i$ ($i \in [d]$) and let $S_{\ell}$ be the set of such indices. We can see that by definition,
\[\sum_{x \in X} H(x,s) = \sum_{i=1}^d \left(n-f^s_i\right).\]
Note that for any $i \in [d]$ where $s_i \neq w_i$, $f^s_i \le \frac{n}{2}$ (becuase $w_i$ is the most frequent character so $f^w_i \ge f^s_i$ and also $f^w_i+f^s_i \le n$). Therefore, we can see that,
\begin{align*}
\sum_{x \in X} H(x,s) &= \sum_{i=1}^d \left(n-f^s_i\right)\\
& \ge  \sum_{i \in S_{\ell}} \left(n-f^s_i\right)\\
& \ge  \sum_{i \in S_{\ell}} \frac{n}{2} = \frac{n \cdot \ell}{2}.
\end{align*}
Therefore, we get that,
\[(1+\epsilon)\opt \ge \frac{n \cdot \ell}{2} \implies \frac{2(1+\epsilon)\opt}{n} \ge \ell.\]

Consider any two strings $x^*,y^*$ such that, $\sum_{x \in X} H(x,x^*) \le (1+\epsilon)\opt$ and $\sum_{x \in X} H(x,y^*) \le (1+\epsilon)\opt$ and $H(x^*,y^*)=t^*$. Let $S_{x^*} = \{x^*_i \neq w_i|i \in [d]\}$ and $S_{y^*} = \{y^*_i \neq w_i|i \in [d]\}$. We can see that,
\begin{align*}
H(x^*,y^*) &= \sum_{i=1}^d \ind{x^*_i \neq y^*_i} \\
& = \sum_{i \in S_{x^*}\setminus S_{y^*}} 1 + \sum_{i \in S_{y^*}\setminus S_{x^*}} 1 +\sum_{i \in S_{x^*}\cap S_{y^*}} \ind{x^*_i \neq y^*_i} \\
& \le \sum_{i \in S_{x^*}} 1 + \sum_{i \in S_{y^*}} 1= |S_{x^*}|+|S_{y^*}|\\
& \le 2\ell \\
& \le \frac{4(1+\epsilon)\opt}{n}.
\end{align*}
\end{proof}

Finally, combining Lemmas~\ref{lem:min-dispersion-approx-DP},~\ref{lem:approx-small}, \ref{lem:uniform-approx}, and \ref{thm:diverse-fin-k}, we prove~\autoref{thm:mindis-approx-k}.

\begin{proof}[Proof of~\autoref{thm:mindis-approx-k}]
Let $k \le \frac{1}{\delta}$. Then,~\autoref{lem:min-dispersion-approx-DP} directly implies we can calculate the exact solution to the min dispersion problem in $O((1+\epsilon)^{\frac{1}{\delta}}|\Gamma|^\frac{1}{\delta} n^{\frac{1}{\delta}} d^{\frac{2}{\delta^2}})$ time. For the rest of the proof, we consider the case when $k > \frac{1}{\delta}$.

Note that when $D^* \le \frac{4}{\delta^2}$ as a direct implication of~\autoref{lem:approx-small}, we get a solution $S$ such that $\textsc{minDp}(S) \ge \frac{1}{2}t^*$.

Let $D^* \ge \frac{4}{\delta^2} \left(2\log k+1\right)$. ~\autoref{lem:uniform-approx} gives a set $S$ of $k$ median strings such that $\textsc{minDp}(S) \ge \left(1-\delta\right) \frac{D^*}{2}$. Note that $D^* \ge t^*$ (otherwise there exist two strings such that the distance is $t^*$ and therefore $D^*$ would not be the diameter). Therefore, this implies,  $\textsc{minDp}(S) \ge \left(1-\delta\right) \frac{t^*}{2}$. 

The case for $t^* \ge \frac{8+4\delta}{\delta} \sqrt{d} \left(2\log k + 2\right)$ follows directly from~\autoref{thm:diverse-fin-k}.
\end{proof}

\section{Generalized Plotkin Bound}
\label{sec:plotkin}

In this section, we prove a generalized version of the Plotkin bound~\cite{guruswami2012essential}. 

Consider the following setting: Let 
$\Gamma_1,\Gamma_2,\dots,\Gamma_d$ be a set of $d$ alphabets and let $C \subseteq \left[|\Gamma_1|\right] \times \left[|\Gamma_2|\right] \times \dots \times \left[|\Gamma_d|\right]$. In the following lemma, we extend the standard Plotkin bound to the new setting.

\begin{lemma}[Generalized Plotkin Bound]\label{lem:plotkin-thm}
Let $\Gamma_1,\Gamma_2,\dots,\Gamma_d$ be a set of $d$ alphabets and let $C \subseteq \left[|\Gamma_1|\right] \times \left[|\Gamma_2|\right] \times \dots \times \left[|\Gamma_d|\right]$. Let $t = \min_{c_1,c_2 \in C} H(c_1,c_2)$, then:
\begin{enumerate}
\item If $t = \sum_{\ell=1}^d \frac{|\Gamma_{\ell}|-1}{|\Gamma_{\ell}|}$, then $|C| \le 2\sum_{\ell=1}^d |\Gamma_{\ell}|$;
\item If $t > \sum_{\ell=1}^d \frac{|\Gamma_{\ell}|-1}{|\Gamma_{\ell}|}$, then $|C| \le \frac{t}{t-\sum_{\ell=1}^d \frac{|\Gamma_{\ell}|-1}{|\Gamma_{\ell}|}}$.
\end{enumerate}
\end{lemma}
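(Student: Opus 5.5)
The plan is to run the standard double-counting argument behind the classical Plotkin bound, applied coordinate by coordinate with a per-coordinate alphabet size. Let $m = |C|$ and $\theta := \sum_{\ell=1}^d \frac{|\Gamma_\ell|-1}{|\Gamma_\ell|}$. We bound
\[
\Sigma := \sum_{\{c,c'\} \subseteq C,\, c\neq c'} H(c,c')
\]
from both sides.

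\textbf{Lower bound on $\Sigma$.} Every unordered pair is at distance at least $t$, so $\Sigma \ge \binom{m}{2} t$.

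\textbf{Upper bound on $\Sigma$.} Write $\Sigma = \sum_{\ell=1}^d \Sigma_\ell$, where $\Sigma_\ell$ counts the pairs that differ in coordinate $\ell$. For a fixed $\ell$, let $m_{\ell,a}$ be the number of codewords with symbol $a \in [|\Gamma_\ell|]$ there. Exactly as in the Remark of \autoref{app:sum-exact},
\[
\Sigma_\ell = \tfrac{1}{2}\Bigl(m^2 - \sum_a m_{\ell,a}^2\Bigr).
\]
Cauchy--Schwarz gives $\sum_a m_{\ell,a}^2 \ge m^2/|\Gamma_\ell|$, so $\Sigma_\ell \le \frac{m^2}{2}\cdot\frac{|\Gamma_\ell|-1}{|\Gamma_\ell|}$. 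Summing over $\ell$ yields $\Sigma \le \frac{m^2}{2}\theta$.

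\textbf{Item 2.} Combining the two bounds gives $\frac{m(m-1)}{2} t \le \frac{m^2}{2}\theta$, i.e.\ $(m-1)t \le m\theta$. When $t > \theta$ this rearranges to $m \le \frac{t}{t-\theta}$, which is the second claim. This step is routine.

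\textbf{Item 1 (the main obstacle).} When $t = \theta$ the inequality above degenerates, and a different argument is needed. The classical proof shortens the code at one coordinate: fix $\ell$ and a most popular symbol $a$ there, and keep the subcode $C'$ of codewords with symbol $a$ at $\ell$, deleting coordinate $\ell$. This subcode satisfies $|C'| \ge m/|\Gamma_\ell|$. Its minimum distance is still $t$, while the new $\theta' = \theta - \frac{|\Gamma_\ell|-1}{|\Gamma_\ell|}$ is strictly smaller than $t$, provided $|\Gamma_\ell| \ge 2$. Coordinates with $|\Gamma_\ell| = 1$ can be discarded beforehand, since they contribute nothing to either side. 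Item 2 applied to $C'$ then gives
\[
\frac{m}{|\Gamma_\ell|} \le |C'| \le \frac{t}{t-\theta'} = \frac{t\,|\Gamma_\ell|}{|\Gamma_\ell|-1} \le 2t.
\]
Hence $m \le 2t|\Gamma_\ell| \le 2\sum_\ell |\Gamma_\ell|\cdot\frac{|\Gamma_\ell|-1}{|\Gamma_\ell|}$, and summing properly gives $m \le 2\sum_\ell |\Gamma_\ell|$.

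The place to be careful is that last bookkeeping. Choosing $\ell$ to minimize $|\Gamma_\ell|$ and using $t = \theta \le d$ gives $m \le 2d\,|\Gamma_\ell| \le 2\sum_\ell |\Gamma_\ell|$. I would verify this chain (and the edge case where every coordinate is trivial, so $m \le 1$) rather than rely on a sharper form.
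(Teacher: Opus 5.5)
Your argument is correct, and for Item~1 it takes a genuinely different route from the paper. The paper embeds codewords as unit vectors $f(c)\in\mathbb{R}^{\sum_\ell|\Gamma_\ell|}$ with $\langle f(c_1),f(c_2)\rangle = 1-H(c_1,c_2)/\theta$ (\autoref{lem:map-lem}). It then reads off both items from the two parts of the Geometric Lemma (\autoref{lem:geometric}).

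Your double counting for Item~2 is the same computation in different clothing. Expanding $\|\sum_{c\in C} f(c)\|^2\ge 0$ block by block gives exactly your inequality $\Sigma\le \frac{m^2}{2}\theta$.

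The real difference is Item~1. The paper uses the fact that $\mathbb{R}^N$ contains at most $2N$ nonzero vectors with pairwise non-positive inner products. You instead shorten on a single coordinate and bootstrap from Item~2. Your route is elementary and self-contained, and it gives the slightly stronger bound $|C|\le 2\sum_{\ell:|\Gamma_\ell|\ge 2}(|\Gamma_\ell|-1)$. The paper's route is more uniform: one embedding plus one standard lemma covers both items with no choice of coordinate.

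Two details need tightening.

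First, the shortened code $C'$ has minimum distance \emph{at least} $t$, not necessarily exactly $t$. This is harmless, because your double count only uses that all pairwise distances are $\ge t$, and the bound is trivial when $|C'|\le 1$.

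Second, the display $m\le 2t|\Gamma_\ell|\le 2\sum_\ell|\Gamma_\ell|\frac{|\Gamma_\ell|-1}{|\Gamma_\ell|}$ uses $\ell$ both as the fixed coordinate and as the summation index. It is valid only once you fix $\ell_0$ minimizing $|\Gamma_\ell|$ over $J=\{\ell:|\Gamma_\ell|\ge 2\}$; it then reads $2\theta|\Gamma_{\ell_0}|\le 2\sum_{\ell\in J}(|\Gamma_\ell|-1)$. Likewise, your final chain needs $\theta\le |J|$ rather than $\theta\le d$ for the original $d$. Otherwise $d\,|\Gamma_{\ell_0}|\le\sum_\ell|\Gamma_\ell|$ can fail when many coordinates are trivial. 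Discarding those coordinates first, as you suggest, is exactly what makes it work.
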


In order to prove~\autoref{lem:plotkin-thm}, we follow a similar analysis to the proof of the Plotkin bound in~\cite{guruswami2012essential}. We will first restate the geometric lemma from Guruswami et. al.~\cite{guruswami2012essential}.

\begin{lemma} [Geometric Lemma from~\cite{guruswami2012essential}]\label{lem:geometric}
Let $v_1,v_2,\dots,v_m \in \ral^N$ be non-zero vectors.
\begin{enumerate}
\item If $\langle v_i,v_j \rangle \le 0$ for all $i \neq j$, then $m \le 2N$;
\item Let $v_i$ be unit vectors for $1\le i \le m$. Further, if $\langle v_i,v_j \rangle \le -\epsilon <0$ for all $i \neq j$, then $m \le 1+\frac{1}{\epsilon}$.
\end{enumerate}
\end{lemma}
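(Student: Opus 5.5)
For Item 1 we plan an induction on the dimension $N$, projecting away from one vector; for Item 2 we plan a direct computation of $\|\sum_i v_i\|^2$. Neither needs anything beyond basic inner product manipulations.

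\emph{Item 2.} We expand
\[
0 \le \Bigl\|\sum_{i=1}^m v_i\Bigr\|^2 = \sum_{i=1}^m \|v_i\|^2 + \sum_{i\neq j}\langle v_i,v_j\rangle \le m - m(m-1)\epsilon .
\]
This uses $\|v_i\|=1$ and $\langle v_i,v_j\rangle\le -\epsilon$ for $i\ne j$. Dividing by $m>0$ gives $(m-1)\epsilon\le 1$, that is, $m\le 1+\frac{1}{\epsilon}$.

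\emph{Item 1, base case and setup.} For $N=1$ the vectors are nonzero reals with pairwise nonpositive products, so at most one is positive and at most one is negative; hence $m\le 2$. For the inductive step, let $u=v_m/\|v_m\|$ and set $a_i=\langle v_i,u\rangle$ for $i<m$. By hypothesis $a_i\le 0$. Let $w_i=v_i-a_iu$ be the projection of $v_i$ onto the $(N-1)$-dimensional space $u^\perp$. Then for $i\ne j<m$,
\[
\langle w_i,w_j\rangle=\langle v_i,v_j\rangle-a_ia_j\le 0,
\]
because $a_ia_j\ge 0$. So the projected vectors again have pairwise nonpositive inner products and live in dimension $N-1$.

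\emph{Item 1, main obstacle.} Some $w_i$ may vanish, which would break the nonzero hypothesis needed for induction. A zero $w_i$ means $v_i=a_iu$ with $a_i<0$, i.e.\ $v_i$ is a negative multiple of $v_m$. Two distinct such $v_i,v_j$ would satisfy $\langle v_i,v_j\rangle=a_ia_j>0$, a contradiction, so at most one index $i<m$ has $w_i=0$. The remaining at least $m-2$ vectors $w_i$ are nonzero in $\mathbb{R}^{N-1}$ with pairwise nonpositive inner products. By induction $m-2\le 2(N-1)$, which gives $m\le 2N$ and completes the induction.
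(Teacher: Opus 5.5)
Your proof is correct and takes the standard route. The paper gives no proof of its own; it imports the lemma from \cite{guruswami2012essential}. That source likewise proves Item 1 by induction on $N$: it rotates $v_m$ to $e_1$, drops the first coordinate of the remaining vectors, and notes that at most one of them can project to zero. It gets Item 2 from $\|\sum_i v_i\|^2 \ge 0$ exactly as you do. Your projection onto $u^\perp$ is the same inductive step written without coordinates.
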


Next, we establish a modified mapping lemma, which extends the mapping lemma from Guruswami et. al.~\cite{guruswami2012essential} to our extended setting.

\begin{lemma} [Modified Mapping Lemma]\label{lem:map-lem}
For every $\left[|\Gamma_1|\right] \times \left[|\Gamma_2|\right] \times \dots \times \left[|\Gamma_d|\right]$, there exists a function $f: \left[|\Gamma_1|\right] \times \left[|\Gamma_2|\right] \times \dots \times \left[|\Gamma_d|\right] \rightarrow \ral^{\sum_{\ell=1}^d |\Gamma_{\ell}|}$ such that for every $c_1,c_2 \in \left[|\Gamma_1|\right] \times \left[|\Gamma_2|\right] \times \dots \times \left[|\Gamma_d|\right]$, we have,
\begin{align*}
\langle f(c_1), f(c_2) \rangle = 1-\frac{H(c_1,c_2)}{\sum_{\ell=1}^d \frac{|\Gamma_{\ell}|-1}{|\Gamma_{\ell}|}}.
\end{align*}
Consequently, we get:
\begin{enumerate}
\item For every $c \in \left[|\Gamma_1|\right] \times \left[|\Gamma_2|\right] \times \dots \times \left[|\Gamma_d|\right]$, $\|f(c)\|=1$;
\item If $H(c_1,c_2)\ge t$ then we have 
\[\langle f(c_1), f(c_2) \rangle \le 1-\frac{t}{\sum_{\ell=1}^d \frac{|\Gamma_{\ell}|-1}{|\Gamma_{\ell}|}}.\]
\end{enumerate}
\end{lemma}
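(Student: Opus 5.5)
We will follow the standard one-hot (simplex) embedding used for the classical Plotkin bound, applied coordinate by coordinate. Write $q_\ell = |\Gamma_\ell|$ and $W = \sum_{\ell=1}^d \frac{q_\ell - 1}{q_\ell}$. For each coordinate $\ell$ and symbol $a \in [q_\ell]$, let $e_a \in \ral^{q_\ell}$ be the standard basis vector and $u_\ell = \frac{1}{q_\ell}\mathbf{1} \in \ral^{q_\ell}$ the centroid of the simplex. Define $g_\ell(a) = e_a - u_\ell$ and, for $c = (c_1,\dots,c_d)$, let $f(c) = \frac{1}{\sqrt{W}}\bigl(g_1(c_1), g_2(c_2), \dots, g_d(c_d)\bigr) \in \ral^{\sum_\ell q_\ell}$ be the concatenation.

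The key step is to compute the inner products of the centered simplex vertices within one block. Since $\langle e_a, u_\ell\rangle = \frac{1}{q_\ell}$ and $\|u_\ell\|^2 = \frac{1}{q_\ell}$, we get
\[
\langle g_\ell(a), g_\ell(b)\rangle = \ind{a=b} - \frac{1}{q_\ell}.
\]
Equivalently, this equals $\frac{q_\ell-1}{q_\ell}$ if $a=b$ and $-\frac{1}{q_\ell}$ otherwise, which we can write as $\frac{q_\ell-1}{q_\ell} - \ind{a\neq b}$. Summing over the blocks gives
\[
\langle f(c), f(c')\rangle = \frac{1}{W}\sum_{\ell=1}^d \left(\frac{q_\ell-1}{q_\ell} - \ind{c_\ell \neq c'_\ell}\right) = 1 - \frac{H(c,c')}{W},
\]
which is the claimed identity.

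The two consequences then follow directly from this identity. Taking $c=c'$ gives $H=0$, so $\|f(c)\|^2 = 1$. If $H(c_1,c_2)\ge t$, the right-hand side is monotone decreasing in $H$, so the inner product is at most $1 - t/W$.

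The only potential obstacle is the degenerate case $W = 0$, that is, when every $q_\ell = 1$. In that case the ambient space is a single point and the identity should be read as vacuous. We will note that we assume $W>0$, which holds whenever some alphabet has size at least $2$; this is the only regime in which the lemma is used, namely the case $D^*\ge 1$ in the proof of the Generalized Plotkin Bound. Apart from this, the proof is routine bookkeeping with the block-diagonal inner product.
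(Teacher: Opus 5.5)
Your proposal is correct and is essentially the paper's proof. The paper uses the same centered one-hot embedding per coordinate, written as $\hat{e}^{\ell}-e^{\ell}_i$; this sign flip does not affect any inner product. It also uses the same normalization by $\bigl(\sum_{\ell}\frac{|\Gamma_\ell|-1}{|\Gamma_\ell|}\bigr)^{-1/2}$ and the same blockwise computation giving $1-H/W$. Your explicit caveat that your $W$ must be positive (i.e., some $|\Gamma_\ell|\ge 2$) is a point the paper leaves implicit.
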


\begin{proof}
Consider any $\ell \in [d]$. We define a map $\phi^{(\ell)}:\left[|\Gamma_{\ell}|\right] \rightarrow \ral^{|\Gamma_{\ell}|}$ as follows:
Let $e^{\ell}_i$ denote the unit vector along the $i$th direction in $\ral^{|\Gamma_{\ell}|}$, i.e.,
\[e^{\ell}_i = \begin{bmatrix}0,0,\dots,\underbrace{1}_{\text{$i$th position}},\dots,0,0\end{bmatrix}\]
and let $\hat{e}^{\ell} = \frac{1}{|\Gamma_{\ell}|} \sum_{i=1}^{|\Gamma_{\ell}|} e^{\ell}_i = \begin{bmatrix} \frac{1}{|\Gamma_{\ell}|},\frac{1}{|\Gamma_{\ell}|},\dots,\frac{1}{|\Gamma_{\ell}|},\frac{1}{|\Gamma_{\ell}|}\end{bmatrix}$. Note that $\langle \hat{e}^{\ell} , e^{\ell}_i\rangle = \frac{1}{|\Gamma_{\ell}|}$ for all $i \in \left[|\Gamma_{\ell}|\right]$ and $\langle \hat{e}^{\ell} , \hat{e}^{\ell}\rangle = \frac{1}{|\Gamma_{\ell}|}$. Also, note that for all $i,j \in \left[|\Gamma_{\ell}|\right]$, $\langle e^{\ell}_j , e^{\ell}_i\rangle = 1$ if $i =j$ and $0$ otherwise. 

We define $\phi^{(\ell)}$ to be $\phi^{(\ell)}(i) = \hat{e}^{\ell}-e^{\ell}_i$. For any $i,j \in \left[|\Gamma_{\ell}|\right]$,
\begin{align*}
\langle \phi^{(\ell)}(j), \phi^{(\ell)}(i)\rangle &= \langle \hat{e}^{\ell}-e^{\ell}_j, \hat{e}^{\ell}-e^{\ell}_i\rangle \\
&= \langle \hat{e}^{\ell},\hat{e}^{\ell}\rangle-\langle \hat{e}^{\ell},e^{\ell}_j\rangle-\langle e^{\ell}_j,\hat{e}^{\ell}\rangle+\langle e^{\ell}_j,e^{\ell}_i\rangle\\
&=\langle e^{\ell}_j,e^{\ell}_i\rangle-\frac{1}{|\Gamma_{\ell}|}.
\end{align*}
Therefore, for every $i \in \left[|\Gamma_{\ell}|\right]$,
\begin{align*}
\|\phi^{(\ell)}(i)\|^2=\langle e^{\ell}_i,e^{\ell}_i\rangle-\frac{1}{|\Gamma_{\ell}|}=1-\frac{1}{|\Gamma_{\ell}|}
\end{align*}
and for every $i,j \in \left[|\Gamma_{\ell}|\right]$ such that $i \neq j$,
\begin{align*}
\langle \phi^{(\ell)}(j), \phi^{(\ell)}(i)\rangle =-\frac{1}{|\Gamma_{\ell}|}.
\end{align*}
We can now define the final map $f:\left[|\Gamma_1|\right] \times \left[|\Gamma_2|\right] \times \dots \times \left[|\Gamma_d|\right]\rightarrow \ral^{\sum_{\ell=1}^d |\Gamma_{\ell}|}$. For every $c= \begin{bmatrix}c_1,c_2,\dots,c_d\end{bmatrix} \in \left[|\Gamma_1|\right] \times \left[|\Gamma_2|\right] \times \dots \times \left[|\Gamma_d|\right]$, define,
\begin{align*}
f(c) = \sqrt{\frac{1}{\sum_{\ell=1}^d \frac{|\Gamma_{\ell}|-1}{|\Gamma_{\ell}|}}}\begin{bmatrix}\phi^{(1)}(c_1),\phi^{(2)}(c_2),\dots,\phi^{(d)}(c_d)\end{bmatrix}.
\end{align*}
Note that for any $c\in \left[|\Gamma_1|\right] \times \left[|\Gamma_2|\right] \times \dots \times \left[|\Gamma_d|\right]$,
\begin{align*}
\|f(c)\|^2 &= \frac{1}{\sum_{\ell=1}^d \frac{|\Gamma_{\ell}|-1}{|\Gamma_{\ell}|}} \sum_{\ell=1}^d \|\phi^{(\ell)}(c_{\ell})\|^2\\
&= \frac{1}{\sum_{\ell=1}^d \frac{|\Gamma_{\ell}|-1}{|\Gamma_{\ell}|}} \sum_{\ell=1}^d \frac{|\Gamma_{\ell}|-1}{|\Gamma_{\ell}|}=1
\end{align*}
and for any $\hat{c} = \begin{bmatrix}\hat{c}_1,\hat{c}_2,\dots,\hat{c}_d\end{bmatrix}$ and $\tilde{c} = \begin{bmatrix}\tilde{c}_1,\tilde{c}_2,\dots,\tilde{c}_d\end{bmatrix}$,
\begin{align*}
&\langle f(\hat{c}), f(\tilde{c})\rangle \\
&= \sum_{\ell=1}^d \langle f(\hat{c}_{\ell}),f(\tilde{c}_{\ell}) \rangle\\
&= \frac{1}{\sum_{\ell=1}^d \frac{|\Gamma_{\ell}|-1}{|\Gamma_{\ell}|}}\left(\sum_{\ell:\hat{c}_{\ell} \neq \tilde{c}_{\ell}} \langle \phi^{(\ell)}(\hat{c}_{\ell}),\phi^{(\ell)}(\tilde{c}_{\ell}) \rangle+\sum_{\ell:\hat{c}_{\ell} = \tilde{c}_{\ell}} \langle \phi^{(\ell)}(\hat{c}_{\ell}),\phi^{(\ell)}(\tilde{c}_{\ell}) \rangle\right)\\
&= \frac{1}{\sum_{\ell=1}^d \frac{|\Gamma_{\ell}|-1}{|\Gamma_{\ell}|}}\left(\sum_{\ell:\hat{c}_{\ell} \neq \tilde{c}_{\ell}} -\frac{1}{|\Gamma_{\ell}|}+\sum_{\ell:\hat{c}_{\ell} = \tilde{c}_{\ell}} \left(1-\frac{1}{|\Gamma_{\ell}|}\right)\right)\\
&= \frac{1}{\sum_{\ell=1}^d \frac{|\Gamma_{\ell}|-1}{|\Gamma_{\ell}|}}\left(\left(d-H(\hat{c},\tilde{c})\right)+ \sum_{\ell=1}^d -\frac{1}{|\Gamma_{\ell}|}\right)\\
&= \frac{1}{\sum_{\ell=1}^d \frac{|\Gamma_{\ell}|-1}{|\Gamma_{\ell}|}}\left(\sum_{\ell=1}^d \frac{|\Gamma_{\ell}|-1}{|\Gamma_{\ell}|}-H(\hat{c},\tilde{c})\right)\\
&= 1-\frac{H(\hat{c},\tilde{c})}{\sum_{\ell=1}^d \frac{|\Gamma_{\ell}|-1}{|\Gamma_{\ell}|}}\\
\end{align*}
as desired.
\end{proof}

Given~\autoref{lem:map-lem} and~\autoref{lem:geometric}, we can now complete the proof of~\autoref{lem:plotkin-thm}.

\begin{proof}[Proof of~\autoref{lem:plotkin-thm}]
Let $C = \{c_1,c_2,\dots,c_m\}$ be a code from $\left[|\Gamma_1|\right] \times \left[|\Gamma_2|\right] \times \dots \times \left[|\Gamma_d|\right]$ of length $d$ and distance $t$.
Let $f: \left[|\Gamma_1|\right] \times \left[|\Gamma_2|\right] \times \dots \times \left[|\Gamma_d|\right] \rightarrow \ral^{\sum_{\ell=1}^d |\Gamma_{\ell}|}$ be the function from~\autoref{lem:map-lem}. Then for all $i$ we have that $f(c_i)$ is a unit length vector in $\ral^{\sum_{\ell=1}^d |\Gamma_{\ell}|}$. Furthermore for all $i \neq j$, we have,
\begin{align*}
\langle f(c_i), f(c_j)\rangle \le 1-\frac{t}{\sum_{\ell=1}^d \frac{|\Gamma_{\ell}|-1}{|\Gamma_{\ell}|}}.
\end{align*}

If $t = \sum_{\ell=1}^d \frac{|\Gamma_{\ell}|-1}{|\Gamma_{\ell}|}$, then for all $i \neq j$,
\[\langle f(c_i), f(c_j)\rangle \le 0.\]
So by~\autoref{lem:geometric}, we get $m \le 2\sum_{\ell=1}^d |\Gamma_{\ell}|$.

If $t > \sum_{\ell=1}^d \frac{|\Gamma_{\ell}|-1}{|\Gamma_{\ell}|}$, then for all $i \neq j$,
\begin{align*}
\langle f(c_i), f(c_j)\rangle \le 1-\frac{t}{\sum_{\ell=1}^d \frac{|\Gamma_{\ell}|-1}{|\Gamma_{\ell}|}}.
\end{align*}
Let $\epsilon = \frac{t}{\sum_{\ell=1}^d \frac{|\Gamma_{\ell}|-1}{|\Gamma_{\ell}|}}-1 = \frac{t-\sum_{\ell=1}^d \frac{|\Gamma_{\ell}|-1}{|\Gamma_{\ell}|}}{\sum_{\ell=1}^d \frac{|\Gamma_{\ell}|-1}{|\Gamma_{\ell}|}} > 0$. Then by~\autoref{lem:geometric}, we get
\begin{align*}
m \le 1+\frac{1}{\epsilon} = 1+\frac{\sum_{\ell=1}^d \frac{|\Gamma_{\ell}|-1}{|\Gamma_{\ell}|}}{t-\sum_{\ell=1}^d \frac{|\Gamma_{\ell}|-1}{|\Gamma_{\ell}|}} = \frac{t}{t-\sum_{\ell=1}^d \frac{|\Gamma_{\ell}|-1}{|\Gamma_{\ell}|}}
\end{align*}
as desired.
\end{proof}